\newcommand{\com}[1]{\textcolor{red}{#1}}
\newcommand{\comm}[1]{\textcolor{blue}{#1}}
\newtheorem{Lemma}{Lemma}
\newtheorem{Corollary}{Corollary}
\newtheorem{proposition}{Proposition}
\newcommand{\qg}{{\bf g}}
\newcommand{\qn}{{\bf n}}
\newcommand{\qu}{{\bf u}}
\newcommand{\qv}{{\bf v}}
\newcommand{\qw}{{\bf w}}
\newcommand{\qA}{{\bf A}}
\newcommand{\qB}{{\bf B}}
\newcommand{\qC}{{\bf C}}
\newcommand{\qI}{{\bf I}}
\newcommand{\qN}{{\bf N}}
\newcommand{\qW}{{\bf W}}
\newcommand{\qY}{{\bf Y}}
\newcommand{\diag}{\mathrm{diag}}
\newcommand{\trace}{\mathrm{tr}}
\newcommand{\KE}{\mathcal{K}}
\newcommand{\LA}{\mathcal{L}}
\newcommand{\gkl}{{\textbf{g}_{kl}}} % Channel between k UE & l AP
\newcommand{\gklt}{{\textbf{g}^T_{kl}}} % Channel between k UE & l AP
\newcommand{\gklc}{{\textbf{g}^*_{kl}}} % Channel between k UE & l AP
\newcommand{\gklcp}{{\textbf{g}^*_{kl'}}} % Channel between k UE & l AP
\newcommand{\gamkl}{{{\gamma}_{kl}}}
\newcommand{\gamil}{{{\gamma}_{il}}}
\newcommand{\gamklp}{{{\gamma}_{kl'}}}
\newcommand{\gamklsq}{{{\gamma}_{kl}^2}}
\newcommand{\gamklcb}{{{\gamma}_{kl}^3}}
\newcommand{\gamkltt}{{{\gamma}_{kl}^4}}
\newcommand{\aik}{{\alpha_{ik}}}
\newcommand{\aiksq}{{\alpha_{ik}^2}}
\newcommand{\aiktt}{{\alpha_{ik}^4}}
\newcommand{\oikl}{{{\varrho}_{ik,l}}} % product of h_{il} and h_{kl}
\newcommand{\oiklp}{{{\varrho}_{ik,l'}}}
\newcommand{\oiklsq}{{{\varrho}^2_{ik,l}}} % product of h_{il} and h_{kl}
\newcommand{\oiklsqp}{{{\varrho}^2_{ik,l'}}}
\newcommand{\kkl}{{K_{kl}}}
\newcommand{\kmrt}{{\kappa_{kl}}}
\newcommand{\kmrti}{{\kappa_{il}}}
\newcommand{\kmrtip}{{\kappa_{il'}}}
\newcommand{\ckl}{{c_{kl}}}
\newcommand{\bkil}{{\zeta_{ik,l}}}
\newcommand{\bkilsq}{{\zeta^2_{ik,l}}}
\newcommand{\bkilp}{{\zeta_{ik,l'}}}
\newcommand{\bkilpsq}{{\zeta^2_{ik,l'}}}
\newcommand{\bklqp}{{\varsigma_{kl'}}}
\newcommand{\bilqp}{{\varsigma_{il'}}}
\newcommand{\bklq}{{\varsigma_{kl}}}
\newcommand{\bilq}{{\varsigma_{il}}}
\newcommand{\bklqsq}{{\varsigma^2_{kl}}}
\newcommand{\bilqsq}{{\varsigma^2_{il}}}
\newcommand{\gil}{{\textbf{g}_{il}}} % Channel between k UE & l AP - Pilot COntamination
\newcommand{\tauP}{{\tau_p p_p}}
\newcommand{\wkl}{{\qw_{kl}}} % Precoding vector between k UE & l AP
\newcommand{\Ex}{\mathbb{E}}
\newcommand{\NL}{\mathtt{NL}}
\newcommand{\hgkl}{{\hat{\textbf{g}}_{kl}}}
\newcommand{\thgkl}{{\Tilde{\hat{\textbf{g}}}_{kl}}}
\newcommand{\thgklc}{{\Tilde{\hat{\textbf{g}}}^*_{kl}}}
\newcommand{\thgklt}{{\Tilde{\hat{\textbf{g}}}^T_{kl}}}
\newcommand{\tekl}{{\Tilde{\hat{\pmb{\varepsilon}}}_{kl}}}
\newcommand{\teklt}{{\Tilde{\hat{\pmb{\varepsilon}}}^T_{kl}}}
\newcommand{\teklc}{{\Tilde{\hat{\pmb{\varepsilon}}}^*_{kl}}}
\newcommand{\thgklcp}{{\Tilde{\hat{\textbf{g}}}^*_{kl'}}}
\newcommand{\thgkltp}{{\Tilde{\hat{\textbf{g}}}^T_{kl'}}}
\newcommand{\thgil}{{\Tilde{\hat{\textbf{g}}}_{il}}}
\newcommand{\thgilc}{{\Tilde{\hat{\textbf{g}}}^*_{il}}}
\newcommand{\thgilt}{{\Tilde{\hat{\textbf{g}}}^T_{il}}}
\newcommand{\hgil}{{\hat{\textbf{g}}_{il}}}
\newcommand{\hgilt}{{\hat{\textbf{g}}^T_{il}}}
\newcommand{\hgilc}{{\hat{\textbf{g}}^*_{il}}}
\newcommand{\hgiltp}{{\hat{\textbf{g}}^T_{il'}}}
\newcommand{\bgkl}{{\Bar{\textbf{g}}_{kl}}} % LoS
\newcommand{\bgklc}{{\Bar{\textbf{g}}^*_{kl}}} % LoS
\newcommand{\bgklt}{{\Bar{\textbf{g}}^T_{kl}}} % LoS
\newcommand{\bgilc}{{\Bar{\textbf{g}}^*_{il}}} % LoS
\newcommand{\bgklcp}{{\Bar{\textbf{g}}^*_{kl'}}} % LoS
\newcommand{\bhkl}
{{\Bar{\textbf{h}}_{kl}}} % LoS channel
\newcommand{\bhklt}
{{\Bar{\textbf{h}}^T_{kl}}} % LoS channel
\newcommand{\bhklc}
{{\Bar{\textbf{h}}^*_{kl}}} % LoS channel
\newcommand{\bhklcp}
{{\Bar{\textbf{h}}^*_{kl'}}} % LoS channel
\newcommand{\bhilc}
{{\Bar{\textbf{h}}^*_{il}}} % LoS channel
\newcommand{\bhil}
{{\Bar{\textbf{h}}_{il}}} % LoS channel
\newcommand{\bhilt}
{{\Bar{\textbf{h}}^T_{il}}} % LoS channel
\newcommand{\bgil}{{\Bar{\textbf{g}}_{il}}} % LoS channel
\newcommand{\bgilt}{{\Bar{\textbf{g}}^T_{il}}} % LoS channel
\newcommand{\bgiltp}{{\Bar{\textbf{g}}^T_{il'}}} % LoS channel
\newcommand{\tgkl}{{\Tilde{\textbf{g}}_{kl}}} %NLoS channel
\newcommand{\tgil}{{\Tilde{\textbf{g}}_{il}}} %NLoS channel
\newcommand{\tgklc}{{\Tilde{\textbf{g}}^*_{kl}}} %NLoS channel
\newcommand{\tgklt}{{\Tilde{\textbf{g}}^T_{kl}}} %NLoS channel
\newcommand{\tgklcp}{{\Tilde{\textbf{g}}^*_{kl'}}} %NLoS channel
\newcommand{\tgkltp}{{\Tilde{\textbf{g}}^T_{kl'}}} %NLoS channel
\newcommand{\zekl}{{\xi_{kl}}} %zeta
\newcommand{\bkl}{{\beta_{kl}}} %beta
\newcommand{\bil}{{\beta_{il}}} %beta
\newcommand{\bklsq}{{\beta^2_{kl}}} %beta
\newcommand{\bilsq}{{\beta^2_{il}}} %beta
\newcommand{\bklp}{{\beta_{kl'}}} %beta
\newcommand{\zkl}{{\textbf{z}_{kl}}} %zkl
\newcommand{\In}{{\textbf{I}_N}} % I_n
\newcommand{\sui}{{\sum\nolimits_{i \in \mathcal{P}_k}}} % pilot contamination
\newcommand{\suk}{{\sum\nolimits_{k\in\KE}}}
\newcommand{\suik}{{\sum\nolimits_{i\in\KE}}}
\newcommand{\sul}{{\sum\nolimits_{l\in\LA}}}
\newcommand{\slpL}{{\sum\nolimits_{l'\in\LA\setminus l}}}
\newcommand{\slp}{{\sum\nolimits_{l'\in\LA}}}
\newcommand{\xEl}{{\textbf{x}^E_{l}}} % TX signal at l AP in energy harvesting phase
\newcommand{\rEk}{{r^E_{k}}} %Received signal at k UE in energy harvesting phase
\newcommand{\IEk}{{{I}^E_{k}}} %Received power at k UE in energy harvesting phase
\newcommand{\EEk}{{{E}^{\NL}_{k}}} %Harvested energy at k UE in energy harvesting phase
\newcommand{\ECk}{{{E}^C_{k}}} %Consumed energy at k UE in both training and information uplink phase
\newcommand{\dEk}{{\Delta {E}_k}} % Energy differential at k UE 
\newcommand{\GMp}{{\upsilon^{(q)}_{i,l,l'}}} %geometric mean of power coeffs
\newcommand{\op}{{\Omega^{(q)}}}
\newcommand{\opkl}{{\Omega^{(q)}_{kl}}}
\newcommand{\opklv}{{\breve{\Omega}^{(q)}_{kl}}}
\newcommand{\okl}{{\Omega_{kl}}}
\newcommand{\opil}{{\Omega^{(q)}_{il}}}
\newcommand{\opild}{{\Omega^{(q)}_{il'}}}
\newcommand{\bx}{{\bm{x}^{(q)}_{\bar{k}}}}
\newcommand{\bxo}{{\bm{x}^{(q)}_{{\bar{k}},0}}}
\newcommand{\by}{{\bm{y}^{(q)}_{\bar{k}}}}
\newcommand{\byx}{{\bm{y}^{(q)}_{{\bar{k}},x}}}
\newcommand{\byu}{{\bm{y}^{(q)}_{{\bar{k}},\mu}}}
\newcommand{\bgx}{{\bm{g}\!\left(\bx\right)}}
\newcommand{\fx}{{f\!\left(\bx\right)}}
\newcommand{\bu}{{\bm{\mu}}}
\newcommand{\bl}{{\bm{\lambda}}}
\newcommand{\ME}{{\mathbb{E}}} %Expectation operator
\newcommand{\MV}{{\mathrm{Var}}} %Variance operator
\def\BibTeX{{\rm B\kern-.05em{\sc i\kern-.025em b}\kern-.08em
    T\kern-.1667em\lower.7ex\hbox{E}\kern-.125emX}}
\begin{document}

\title{Optimal Energy Harvesting Strategy in Cell-Free Massive MIMO Using Markov Process Evolution}

\author{\IEEEauthorblockN{Muhammad Zeeshan Mumtaz {\href{https://orcid.org/0000-0002-0718-8822}{\includesvg[scale=0.06]{orcid_id.svg}}}, \textit{Graduate Student Member, IEEE}, Mohammadali Mohammadi {\href{https://orcid.org/0000-0002-6942-988X}{\includesvg[scale=0.06]{orcid_id.svg}}}, \textit{Senior Member, IEEE}, Hien Quoc Ngo {\href{https://orcid.org/0000-0002-3367-2220}{\includesvg[scale=0.06]{orcid_id.svg}}}, \textit{Senior Member, IEEE}, and Michail Matthaiou {\href{https://orcid.org/0000-0001-9235-7741}{\includesvg[scale=0.06]{orcid_id.svg}}}}, \textit{Fellow, IEEE}}

\title{ Optimized Energy Harvesting in Cell-Free Massive MIMO Using Markov Process Evolution}

\author{Muhammad Zeeshan Mumtaz,~\IEEEmembership{Student Member,~IEEE,} Mohammadali Mohammadi,~\IEEEmembership{Senior Member,~IEEE,} \\
Hien Quoc Ngo,~\IEEEmembership{Fellow,~IEEE,}  and  Michail Matthaiou,~\IEEEmembership{Fellow,~IEEE}
\\
\thanks{

This work was supported by the U.K. Engineering and Physical Sciences Research Council (EPSRC) (grant No. EP/X04047X/1). The work of  H. Q. Ngo
 was supported by the U.K. Research and Innovation Future
Leaders Fellowships under Grant MR/X010635/1, and a research grant from the Department for the Economy Northern Ireland under the US-Ireland R\&D Partnership Programme. The work of M. Z. Mumtaz, M. Mohammadi and M. Matthaiou was supported by the European
Research Council (ERC) under the European Union’s Horizon 2020 research
and innovation programme (grant agreement No. 101001331).}
\thanks{The authors are with the Centre for Wireless Innovation (CWI), Queen's University Belfast, BT3 9DT Belfast, U.K., (email:\{mmumtaz01, m.mohammadi, hien.ngo, m.matthaiou\}@qub.ac.uk).}
\thanks{M. Z. Mumtaz is also with the College of Aeronautical Engineering, National University of Sciences \& Technology (NUST), Pakistan, (email: zmumtaz@cae.nust.edu.pk).}
\thanks{ Parts of this paper have been presented at the 2024 IEEE GLOBECOM~\cite{Zeeshan:GC:2024}}.
}

\maketitle

\begin{abstract}
This paper investigates a discrete energy state transition model for energy harvesting (EH) in cell-free massive multiple-input-multiple-output (CF-mMIMO) networks. A Markov chain-based stochastic process is conceived to characterize the temporal evolution of the user equipment (UE) energy level by leveraging state transition probabilities (STP) based on the energy differential ($\Delta E$) between the EH and consumed energy within each coherence interval. Tractable mathematical relationships are derived for the STP cases using a new stochastic model of non-linear EH, approximated using a Gamma distribution. This derivation leverages closed-form expressions for the mean and variance of the harvested energy. To improve the positive STP of the minimum energy UE among all network UEs, we aim to maximize the $\Delta E$ for this UE using two power allocation (PA) schemes. The first scheme is a heuristic PA using the relative channel characteristics to this UE from all access points (APs). The second scheme is the optimized PA based on the solution of a second-order conic problem to maximize the $\Delta E$ using a responsive primal-dual interior point method (PD-IPM) algorithm with modified backtracking line-search, iterating over multiple PA periods. Our simulation results illustrate that both the proposed PA schemes enhance the dynamic minimum UE energy level by around four-fold over full power control, along with the performance improvement attributed to spatial resource diversification of CF-mMIMO systems. 
\end{abstract}

\begin{IEEEkeywords}
Cell-free massive  multiple-input-multiple-output (CF-mMIMO), conic optimization, discrete state Markov-chain (MC), energy harvesting (EH), Ricean fading.
\end{IEEEkeywords}

\IEEEpeerreviewmaketitle

%------------------------------
\vspace{-1em}
\section{Introduction}
%----------------------------------------
Wireless power transfer (WPT) stands out as a pivotal technology for the realization of future 6G networks. These networks are designed to support a plethora of applications related to the Internet of Things (IoT), facilitating massive connectivity among numerous small-scale devices~\cite{matthaiou,pengcheng,Zheng}. The development of such systems hinges on the availability of sustainable energy sources. In this context, WPT plays a key role by enabling the harnessing of energy from ubiquitous radio-frequency (RF) signals. These signals, broadcasted by ambient or dedicated wireless transmitters, permeate our surroundings, offering a promising solution for powering the extensive network of interconnected devices, where it is impractical to provide electric grid supply to large number of nodes~\cite{Clerckx:JSAC:2019}. Thus, these network nodes have to rely on electromagnetic (EM) waves from power source transmitters to receive and harvest electrical energy to underpin the device operation and information transfer. Wireless sensor networks, unmanned aerial vehicle based communications, and human body data acquisition sensors, are a few applications whose evolution depends upon the pragmatic design and development of wireless energy harvesting (EH) technologies~\cite{hujie,babbar}. Despite its potential, WPT faces significant challenges, notably the rapid decline of energy transfer efficiency with increasing distance, a phenomenon attributed to the distance-dependent path loss. This limitation along with the ineffective EM beam focusing techniques underscore the necessity for innovative solutions to enhance the viability of WPT in future network architectures~\cite{Lu:Tut:2015}. 

For the effective deployment of WPT in wireless networks, cell-free massive multiple-input-multiple-output (CF-mMIMO) presents itself as a disruptive approach. Within the CF-mMIMO framework, the spatial distance between the APs and UEs is considerably reduced compared to traditional cellular network structures. This diminished distance between the APs and UEs significantly alleviates the impact of path loss attenuation that typically impede the efficiency of remote charging mechanisms~\cite{Hienenergy,Hien:PIEEE:2024,van,Mohammadi}. Furthermore, the incessant evolution in CF-mMIMO hardware technologies contributes to the enhancement of energy transmission capabilities. These advancements facilitate the more accurate alignment of EM beams from the APs towards the UEs, while simultaneously minimizing side-lobe levels. Such improvements are instrumental in augmenting the efficacy of energy reception at the UEs' antennas, thereby optimizing the overall performance of wireless power delivery systems. This technological progress underscores the potential of CF-mMIMO to serve as a foundational element in the development and implementation of future-oriented WPT industrial solutions.

%%%%%%%%%%%%%%%%%%%%%%%%%%%%%%%%%%%%%%%%%%%%%%%%%
\vspace{-0.5em}
\subsection{Review of Related Literature}
%%%%%%%%%%%%%%%%%%%%%%%%%%%%%%%%%%%%%%%%%%%%%%%%
Numerous research works have demonstrated the effectiveness of CF-mMIMO architecture in the domain of EH of IoT devices, while devising strategies for device storage charging with more emphasis on network spectral efficacy than energy efficiency~\cite{Wang:JIOT:2020,Demir,femenias,loku, Zhang:IoT:2022, mohammadi2, yaozhang}. In \cite{Wang:JIOT:2020}, a small-scale wireless powered IoT based CF-mMIMO system was proposed. A performance comparison between co-located and CF-mMIMO was presented with the verdict of superiority of the latter technique, along with the minimization of the total transmit energy consumption with target signal-to-noise-ratio (SNR) constraints.  Similar wireless-powered CF-mMIMO systems have been considered by Demir \textit{et al.} in \cite{Demir}, whose goal was to maximize the minimum spectral efficiency for power control coefficients and large-scale fading decoding vectors.  In \cite{femenias}, a simultaneous wireless information and power transfer (SWIPT)-enhanced framework was presented for CF-mMIMO networks with both EH mobile stations (EMS) requiring WPT  and conventional mobile stations (MS), not requiring WPT. The trade-off between the achievable spectral efficiency at all MSs and instantaneous energy efficiency of EMSs was analyzed in relation to various communication parameters, such as pilot training  length, EH phase length and pilot transmit power. Another identical CF-mMIMO network for enabling SWIPT was contemplated by the authors of \cite{loku}, who explored the joint optimization of harvested energy and target rate for both time switching and power splitting protocols. In \cite{Zhang:IoT:2022}, a CF-mMIMO IoT network with SWIPT was considered with accelerated projected gradient method-based max-min power control strategy for the average harvested energy and achievable rate separately. It provided a runtime comparison against a convex-solver based method presented in~\cite{hien}, however, the solution accuracy was not deliberated upon. In~\cite{ mohammadi2}, a novel CF-mMIMO structure was proposed to simultaneously support information UEs and energy UEs (IoT devices). Recently in~\cite{yaozhang}, a CF-mMIMO aided SWIPT-IoT network topology was presented with a practical approach of using non-ideal, low-resolution analog-to-digital \& digital-to-analog converters, whose sum-EH and sum-rate performance was enhanced by particle swarm optimization framework. 

%%%%%%%%%%%%%%%%%%%%%%%%%%%%%%%%%%%%%%%%%%%%%%%%%%%%%%%
\vspace{-0.5em}
\subsection{Research Gap and Main Contributions}
%%%%%%%%%%%%%%%%%%%%%%%%%%%%%%%%%%%%%%%%%%%%%%%%%%%%%%%
Although the above research works signify the performance enhancement in WPT applications of CF-mMIMO systems, the dynamics of device energy states and their transitions over multiple coherence intervals have been largely overlooked, particularly in relation to the energy differential between energy harvested and energy consumed within a certain time interval. These important factors are crucial for determining the optimal EH strategy within a WPT-IoT network, as their profound understanding along with the device storage status, will allow devising more efficient and adaptive energy management protocols that would enhance the network's operational longevity and reliability. In order to encompass the transitions of user energy states in CF-mMIMO systems, a Markov chain (MC)-based stochastic process should be considered which provides a statistical framework that models a system's evolution over distinct states \cite{meyn}. Although MC processes have been extensively used for understanding the transition of energy states in EH-based communication networks, the contemporary literature on MC application for WPT in CF-mMIMO or MIMO systems is still limited. Kusaladharma \textit{et al.} introduced the concept of MC based energy state transitions in SWIPT based CF-mMIMO structure, but confined their analysis to the probability of fully charged state~\cite{kusal}. Moreover, this work lacks a well-defined PA scheme to enhance the EH performance over the evolution of the Markov process. Conversely, existing studies on MC applications in EH/WPT-based networks \cite{Iwaki,Salehi,Moon_2,fangchao,teng} primarily focus on energy storage state transitions within battery electronics, emphasizing strategies, such as harvest-store-use (HSU) and harvest-then-access, without providing a comprehensive statistical analysis of the harvested power.

Motivated by the above discussion, our research focuses on developing a CF-mMIMO based EH mechanism that employs a Markov state transition process, addressing the neglected aspects of energy state dynamics and consumption in the literature~\cite{kusal}, thereby offering a more robust framework for optimizing the EH schemes in wireless networks. The main contributions of our research are as follows: 

\begin{itemize}
    \item We propose a MC-based state transition process of discrete energy levels for IoT device energy storage, which depends on EH from the APs for network operation. Moreover, we derive mathematical relationships for state transition probabilities as a function of the energy differential after power consumption during uplink (UL) training and data transfer phases.
    \item An analytical solution is derived to provide closed-form expressions for the statistical parameters, specifically mean and variance, of non-linear EH in the proposed CF-mMIMO system. These characteristics are also utilized to approximate the energy accumulation in UE battery storage using a Gamma distribution. Numerical results substantiate the accuracy and utility of this approximation for further analysis of state transition probabilities.
    \item We propose two power allocation (PA) schemes for maximization of the energy differential for the minimum energy level UE among all the network UEs at the end of the previous PA period, by invoking the Markov property of single state dependency. Firstly, a heuristic channel characteristics based PA scheme (CCPA) is designed based on the relative channel gains from all APs to this particular UE. Secondly, we propose an optimized PA scheme addressing the problem of maximizing the energy differential.  The original concave objective function is further transformed into a second-order conic (SOC) problem bounded by power coefficient constraints.  
    \item A responsive primal-dual interior point method (PD-IPM) based optimization algorithm is proposed to solve the SOC problem. This framework converges towards the optimal point by iteratively updating the search direction using modified backtracking line-search, while satisfying modified Karush-Kuhn-Tucker (KKT) conditions.
    \item Our simulation results clearly demonstrate the dynamic capability of both the proposed PA architectures, which significantly enhances the minimum device energy level of the CF-mMIMO network over the  Markov process. Moreover, it is shown that more diverse spatial distribution of service antennas leads to remarkable improvement in the EH performance of the proposed CF-mMIMO architecture. 
\end{itemize}

%==============================================================================
\subsection{Paper Organization and Notation}
%==============================================================================
The rest of this paper is organized as follows: In Section \ref{sec:system_model}, we present the system model, deliberating upon the CF-mMIMO framework along with channel characterization. Section \ref{sec:DL_EH} delves into the downlink (DL) EH mechanism, including the derivation of statistical parameters of the harvested energy. In Section \ref{sec:Markov}, we develop a Markov process-based model for probabilistic energy state transitions. Section \ref{sec:optimization} formulates two PA schemes for maximizing the minimum UE energy levels, while discussing a heuristic technique based on the relative channel characterization. This section also introduces iterative application of the proposed responsive PD-IPM algorithm for optimized PA scheme. Finally, Section \ref{sec:results} provides numerical results and discusses the performance of the both PA frameworks, followed by important conclusions in Section \ref{sec:conclusion}.

\emph{Notations:} We use bold upper-case letters to denote matrices, and bold lower-case letters to denote vectors. The superscripts $(\cdot)^{\rm{H}}$ and $(\cdot)^{\rm{T}}$ denote the Hermitian transpose and transpose of a matrix, respectively; $\qI_M$ represents the $M\times M$ identity matrix;
$\| \cdot \|$ returns the norm of a matrix; $\mathbb{E}\{\cdot\}$ and $\MV\{\cdot\}$ denote the statistical expectation and variance, respectively; $\gamma(a,x)$ denotes the lower incomplete Gamma function~\cite[Eq. (8.350.1)]{Integral:Series:Ryzhik:1992}; $\Gamma(a)$ is the Gamma function~\cite[Eq. (8.310)]{Integral:Series:Ryzhik:1992}. Finally, $\delta_{l,l'}$ is the Kronecker delta function.

\section{System Model}
\label{sec:system_model}

We consider a CF-mMIMO network operating under time-division duplex mode, where $L$ APs, each equipped with $N$ antennas ($N \geq 1$), offer communication services including WPT to $K$ UEs, each having a single antenna. For the purposes of clarity and ease of reference, the indices of UEs and APs are denoted by the sets $\KE\triangleq \{1,\ldots,K\}$ and $\LA\triangleq \{1,\ldots,L\}$, respectively. This study assumes a quasi-static channel setup, where the coherence interval of each channel spans $\tau_c$ symbols. 
Within each coherence interval, the system operation is delineated into four distinct phases: 1) an UL training phase lasting $\tau_p$ symbols, 2) a DL EH phase spanning $\tau_h$ symbols, 3) a DL information transmission phase of $\tau_d$ symbols duration, and 4) an UL information transmission phase extending for $\tau_u$ symbols. These phases collectively facilitate the simultaneous provision of information and power services to the UEs by the APs, within the defined temporal structure of the channel coherence interval. However for our Markov process analysis, we shall focus on the DL EH phase along with UL power requirements in energy differential considerations.

The channel between the $l$-th AP and the $k$-th UE is represented by \(\textbf{g}_{kl} \in \mathbb{C}^{N\times 1}\). Ricean fading channels are assumed, with the channel characteristics remaining the same during each time-frequency coherence interval. Hence, $\gkl$ is expressed as
%----------------------------
\begin{equation}~\label{eq:gkl}
    \gkl= \sqrt{\dfrac{\zekl}{\kkl+1}}
    \left(\sqrt{\kkl}\bhkl +\tgkl\right),
\end{equation}
%----------------------------
where $\zekl$ represents the large-scale fading coefficient, while $K_{kl}$ is the Ricean factor. Here, the line-of-sight (LoS) component is modeled as $\bhkl = [1,e^{j\pi\sin(\phi_{kl})},\ldots,e^{j(N-1)\pi\sin(\phi_{kl})} ]^T$, while the non-line-of-sight (NLoS) component follows a complex Gaussian distribution as $\tgkl\sim\mathcal{N}_C(\boldsymbol{0}, \qI_N)$. By defining $\beta_{kl}\! \triangleq\!{\dfrac{\zekl}{K_{kl}+1}}$, $\bklq\!\triangleq\! \bkl \kkl$, and
$\bgkl\triangleq \sqrt{\bklq} \bhkl$, we can further simplify \eqref{eq:gkl} as
%----------------------------
\begin{equation}
    \gkl= \bgkl +\sqrt{\beta_{kl}}\tgkl.
\end{equation}
%----------------------------

Moreover, we assume that the UEs are either static or move slowly, whereas the surrounding environment is dynamic \cite{Hien:Asilomar:2018,Chongzheng}. Therefore, it is justifiable to consider that the large-scale channel characteristics $\bgkl$ change very slowly in order of hundreds or thousands of coherence intervals and are known a priori.   One simple method to estimate $\bgkl$ is by using many channel realizations over time and frequency to compute the average value of the channel. Similarly, $\beta_{kl}$ is also assumed to be known a priori~\cite{Hien:Asilomar:2018,Demir,Wang:JIOT:2020}.

%%%%%%%%%%%%%%%%%%%%%%%%%%%%%%%%%%%%%%%%%%%%%%%%%%%%
% \vspace{-0.5em}
% \subsection{Uplink Training and Channel Estimation}
%%%%%%%%%%%%%%%%%%%%%%%%%%%%%%%%%%%%%%%%%%%%%%%%%%%%
At the beginning of the coherence interval, which lasts for $\tau_p$ symbols, the UL training phase is used for channel estimation. Throughout this duration, all UEs concurrently transmit pre-defined pilot sequences to the APs. The pilot sequence transmitted by $k$-th UE is denoted by $\pmb{\varphi}_{k} \in \mathbb{C}^{\tau_p }$ within the set of available pilot sequences, such that $\lVert\pmb{\varphi}_{k}\rVert^2= \tau_p$.   Ideally, each UE is allocated a pilot sequence that is orthogonal to those of all other UEs, enabling accurate and interference-free channel estimation. However, in practical scenarios with short coherence intervals and/or a large number of UEs, the limited number of orthogonal sequences necessitates the reuse of pilots~\cite{Mohammadi:PROC.2024}. This leads to non-orthogonal pilot transmission, resulting in interference among pilot signals received at each AP—a phenomenon known as pilot contamination, which degrades the quality of channel estimation~\cite{hien}. The case of practical interest is a large
network with $K>\tau_p$ so that more than one UE is assigned to each pilot. Asume  that $\mathcal{P}_k\subset\{1,\ldots,K\}$ denotes the subset of UEs sharing the same pilot signals as the $k$-th UE. The UL training signal received at the $l$-th AP denoted as $\qY_{p,l} \in \mathbb{C}^{N \times \tau_p}$ can be expressed as
%----------------------------
\begin{equation}
    \qY_{p,l}=\sum\nolimits_{k\in\KE} \sqrt{p_p}  \textbf{g}_{kl} \pmb{\varphi}^T_{k}+\qN_l,
\end{equation}
%----------------------------
where $p_p$ is the pilot transmit power and \(\textbf{N}_l\in \mathbb{C}^{N \times \tau_p}\) is the additive
white Gaussian noise (AWGN) matrix with independent, identically distributed \(\mathcal{N}_C(0,\sigma^2)\) entries. The projection of $\qY_{p,l}$ onto  $\pmb{\varphi}_k$ provides the sufficient statistic for the channel estimate of $\gkl$, given by
%----------------------------
\begin{equation}
  \zkl=\dfrac{\qY_{p,l}\pmb{\varphi}^*_{k}}{\sqrt{\tau_p}} = \sqrt{\tauP}  \sui \gil+\qn_{kl},
\label{eq:equalized}
\end{equation}
%----------------------------
where $\qn_{kl}\triangleq \dfrac{\qN_l\pmb{\varphi}^*_{k}}{\sqrt{\tau_p}}\sim\mathcal{N}_C(\boldsymbol{0},\sigma^2\qI_N)$. Given $\zkl$ and perfect knowledge of the LoS part, $\bgkl$, the minimum-mean-square-error (MMSE) channel estimate of $\gkl$ is $\hgkl=\bgkl+\thgkl$ ~\cite[Eq. (4)]{Hien:Asilomar:2018}, \cite[Eq. (12.6)]{Kay}, where
%----------------------------
\begin{align}
   \thgkl&\triangleq \qC_{\gkl,\zkl}\qC^{-1}_{\zkl} \left(\zkl-\ME\{\zkl\}\right)\nonumber\\
    &= \ckl \left(\sqrt{\tauP} \sui \sqrt{\bil} \tgil +\textbf{n}_{kl}\right),
\end{align}
%----------------------------
while $\ckl \triangleq\frac{\sqrt{\tauP} \bkl}{ {\tauP}\sui\bil +\sigma^2}$. The statistical parameters of $\hgkl$ are given by
%--------------------
\vspace{-0.2em}
\begin{subequations}
   \begin{align}
    \ME\{\hgkl\}&=\bgkl,\\
    \MV\{ [\hgkl]_{t}\}&\triangleq \gamkl= \sqrt{\tauP} \bkl \ckl.
\end{align} 
\end{subequations}
%---------------

It can be observed that the NLoS components of the MMSE estimates for UEs within subset $\mathcal{P}_k$ exhibit a linear correlation, expressed as \(\thgil= \aik \thgkl\) where \(\aik\triangleq \bilsq/\bklsq\). Furthermore, according to the MMSE estimation property, the estimation error, given as $\tekl=  \sqrt{\bkl} \tgkl- \thgkl$, with
$\tekl\sim\mathcal{N}_C(\boldsymbol{0},(\bkl-\gamkl)\qI_N)$, and the the channel estimate $\thgkl$ are uncorrelated \cite{Kay}. Since they are Gaussian distributed, they are also independent.

%%%%%%%%%%%%%%%%%%%%%%%%%%%%%%%%%%%%%%%%%%%
\section{Downlink Energy Harvesting}\label{sec:DL_EH}
%%%%%%%%%%%%%%%%%%%%%%%%%%%%%%%%%%%%%%%%%%%

%%%%%%%%%%%%%%%%%%%%%%%%%%%%%%%%%%%%%%%%%%%
\subsection{Downlink Energy Transmission}
%%%%%%%%%%%%%%%%%%%%%%%%%%%%%%%%%%%%%%%%%%%
During the DL EH phase, each AP forms precoding vectors based on its
channel estimates to efficiently transmit energy to the UEs. We denote $\wkl \in \mathbb{C}^{N\times 1}$ as the DL precoding vector pertinent to the EH phase for UE $k$ at AP $l$. It is well known that maximum ratio transmission (MRT) serves as an efficacious beamforming strategy for power transfer, particularly when the number of antennas $N$ is significantly large~\cite{almradi2016performance}. The MRT precoding vector at AP $l$ for UE $k$ is given by $\qw_{kl}= \kmrt\hgkl$, where $\kmrt\!=\!\frac{1}{\sqrt{ \ME \left\{\lVert \hgkl \rVert^2 \right\}}}\!=\!\frac{1}{\sqrt{N(\bklq + \gamkl)}}$. The signal transmitted by the $l$-th AP can be expressed as
%----------------------------
\begin{align}~\label{eq:xle}
    \xEl&=\suk\sqrt{\Omega_{kl}}\qw^*_{kl}{e}_{k}, 
    %\,\,\sim \mathbb{C}^{N \times 1}    
\end{align}
%----------------------------
where $e_k$ denotes the zero-mean, unit-variance energy signal corresponding to UE $k$. It is naturally assumed that the energy symbols for different UEs are independent~\cite{Demir,mohammadi2}. In~\eqref{eq:xle}, $\Omega_{kl}$ represents the power control coefficient, which is chosen in a way to satisfy the total network power constraint, such that 
%----------------------------
\begin{equation}
    \sul \mathbb{E}\left\{\lVert \textbf{x}^E_l \rVert^2 \right\}=\sul \suk \Omega_{kl}\leq p_{t,N},
\end{equation}
%----------------------------
where $p_{t,N}$ is the total available network power for DL EH phase. During this phase, the signal received at UE $k$ is the combination of coherent signal, multi-user interference, and noise signal. Mathematically speaking, it is represented as
%----------------------------
\begin{align}~\label{eq:rE}
   \! \rEk
    =&\sul \suik \!\kmrti \sqrt{\Omega_{il}} \gklt \hgil^*{e}_{i}+n^E_k,\nonumber\\
    =& \underbrace{\sul\! \kappa_{kl} \sqrt{\Omega_{kl}} \gklt \hgkl^*{e}_{k}}_{\text{Coherent Signal}}\nonumber\\
    &+\underbrace{\sul \sum\nolimits_{\underset{i \neq k}{i\in\KE}} \!\kmrti\! \sqrt{\Omega_{il}} \gklt \hgil^*{e}_{i}}_{\text{MUI}}\!+\!\underbrace{n^E_k}_{\text{Noise}},\!
\end{align}
%----------------------------
where $n^E_k \sim \mathcal{N}_{\mathbb{C}}(0,\sigma^2)$ is the AWGN at the $k$-th UE. Given that the noise floor is significantly below the threshold useful for EH, the influence of $n^E_k$ during EH phase is considered negligible, in accordance with previous studies~\cite{femenias,Demir,Wang:JIOT:2020,Boshkovska:CLET:2015}. Consequently, the received RF energy at the EH circuit of the $k$-th UE, represented by $\IEk$, can be expressed as
%----------------------------
\begin{align}
\label{eq:harvested_power}
   \IEk
    &=\ME \bigg\{\Big\lvert \sul \suik \kmrti\sqrt{\Omega_{il}}  \gklt \hgil^*  {e}_{i} \Big\rvert^2 \bigg\}\nonumber\\
    &=\sul \sum\nolimits_{l'\in \mathcal{L}}\! \suik \kmrti \kmrtip
    \sqrt{\Omega_{il} \Omega_{il'}} 
     \gklt \hgil^* \hgiltp \gklcp,\!
\end{align}
%----------------------------
where the expectation is taken over the independent zero-mean energy signals $e_i$. 

%%%%%%%%%%%%%%%%%%%%%%%%%%%%%%%%%%%%%%%%%%%%%%%%%%%
\subsection{Non-linear Energy Harvesting}
%%%%%%%%%%%%%%%%%%%%%%%%%%%%%%%%%%%%%%%%%%%%%%%%%%%
In order to quantify the harvested energy at UE $k$, a practical non-linear EH model is employed, as delineated in \cite{Boshkovska:CLET:2015}. Thus, the expression for the energy harvested during a single EH interval $\tau_h$ at the $k$-th UE is given by
%---------------
\begin{equation}
    \EEk= \tau_h \psi_k\left(\Lambda\left[\IEk\right] - \varphi_k\right),
\label{eq:harvested_energy}
\end{equation}
%-----------------------
where $\psi_k= I^E_{k,max}/(1-\varphi_k)$, while $I^E_{k,max}$ denotes the maximum output DC power at UE $k$, $\varphi_k=1/{\left(1+e^{a_k b_k}\right)}$ is a constant which guarantees zero-input/zero-output response with $a_k$ and $ b_k$ parameters relevant to EH circuit specifications such as resistance, capacitance and diode turn-on voltage, respectively. Moreover, $\Lambda\left[\IEk\right]$ is the tradition logistic function with respect to the received power $\IEk$, given as
%----------------------------
\begin{equation}\label{eq:logistic}
    \Lambda\left[\IEk\right]=\dfrac{1}{1+e^{-a_k\left(\IEk-b_k\right)}}.
\end{equation}
%----------------------------

%%%%%%%%%%%%%%%%%%%%%%%%%%%%%%%%%%%%%%%%%%%%%%
\subsection{Statistics and Probability Distribution of Harvested Energy}
%%%%%%%%%%%%%%%%%%%%%%%%%%%%%%%%%%%%%%%%%%%%%%
We analyze the stochastic parameters of harvested energy,  which will be used in Section \ref{sec:Markov} to formulate the foundations of probabilistic construction of Markov process based state transitions of UE energy levels. First, the statistics of harvested power $\IEk$, given in \eqref{eq:harvested_power}, are presented, which will lead to the characteristics of the harvested energy $\EEk$ in \eqref{eq:harvested_energy}.

\begin{proposition}~\label{Lemma:EIK}
The statistics of the received RF energy, $\IEk$, can be obtained as
%----------------------------
%\vspace{-0.2em}
\begin{align}
    \ME\big\{\IEk\big\}
    &\!=\! \sul \slp \suik \!\kmrti\kmrtip\!\sqrt{\Omega_{il}\Omega_{il'}}
    \Xi_{ik,ll'},\!
    \label{eq:mean:final}
    \\
      \MV\big\{\IEk\big\}
    &\!=\! \sul \slp \suik \left(\kmrti \kmrtip\right)^2 {\Omega_{il}\Omega_{il'}}\nonumber\\
    &\hspace{0.5em}\times\left(\Upsilon_{ik,l}^{\mathsf{coh}}\delta_{l,l'} + \Upsilon_{ik,ll'}^{\mathsf{noncoh}}(1-\delta_{l,l'} )\right),
    \label{eq:var:final}
\end{align}
%-------------------------
% ----Finalized---
where
\vspace{-0.2em}
\begin{align}
   &\Xi_{ik,ll'}= \delta_{l,l'}N\big(N \bkilsq \oiklsq + \bklq \gamil +\bkl(\bilq+\gamil)\\
    &+ \gamkl (\aiksq (N+1) \gamkl + 2 N  \aik \bkil \oikl -\gamil)\big) \nonumber\\
        &+(1-\delta_{l,l'} )N^2(\bkil \oikl+ \aik \gamkl)(\bkilp \oiklp+\aik \gamklp),
\end{align}
%--------------
% %\vspace{-0.2em} Conference version
% \begin{align}
%     %( 
%     %N \bkilsq \oiklsq \!\!+\! \aiksq (N+1) \gamklsq
%      %   \nonumber\\
%         % &\hspace{3em}
%         % +2 N \gamkl \aik \bkil \oikl
%         % \nonumber\\
%         \Xi_{ik,ll'}&=  N  \Big(N\big(\bkil \oikl\!+ \!\aik \gamkl\big)\big(\bkilp \oiklp+\aik \gamklp\big)+\!\delta_{l,l'}\Big[\!\big(\bilq\!+\!\gamil\big)\! \big(\bkl\!-\!\gamkl\big)\!\!\nonumber\\  
%         &+\! \bklq \gamil\!+\! \bilq \gamkl \!+\! \aiksq \gamklsq\Big] \Big),
%         %\nonumber\\
% \end{align}
%-------------------
% %--------------
% %\vspace{-0.2em}
% \begin{align}
%     %( 
%     %N \bkilsq \oiklsq \!\!+\! \aiksq (N+1) \gamklsq
%      %   \nonumber\\
%         % &\hspace{3em}
%         % +2 N \gamkl \aik \bkil \oikl
%         % \nonumber\\
%         &\Xi_{ik,ll'}=  N  \Big(N\big(\bkil \oikl\!+ \!\aik \gamkl\big)\big(\bkilp \oiklp+\aik \gamklp\big)+\!\delta_{l,l'}\Big[\!\big(\bilq\!+\!\gamil\big)\! \big(\bkl\!-\!\gamkl\big)\!\!\nonumber\\  
%         &\hspace{2.5em}+\! \bklq \gamil\!+\! \bilq \gamkl \!+\! \aiksq \gamklsq\Big] \Big),
%         %\nonumber\\
% \end{align}
% %-------------------
with $\oikl = \bhkl^T \bhilc/N$, $\bkil= \sqrt{\bklq \bilq}$, while $\delta_{l,l'}$ is $1$ when $l=l'$ and $0$ otherwise. Moreover, $ \Upsilon_{ik,l}^{\mathsf{coh}}$ and $\Upsilon_{ik,ll'}^{\mathsf{noncoh}}$ are given in~\eqref{eq:upsilon} at the top of the next page.
%----------------------------
\begin{figure*}
\begin{subequations}~\label{eq:upsilon}
  \begin{align}
      \Upsilon_{ik,l}^{\mathsf{coh}} 
            %--------
                         &=2 N^2 \bkilsq 
                         \big[
                         N \aiksq \bklq \oiklsq \gamkl+N \bkl \oiklsq \bilq+ \aiksq \oiklsq \gamkl ( \bkl+N\gamkl)\!+\!\aiksq \bkl  \gamkl 
                         \big]\! +\! N^2\big[\aiktt \bklqsq \gamklsq 
               \nonumber\\
            %------
              & \hspace{-0.65em}
               +\bklsq \bilqsq \big] + 2 \aiksq \gamkl N(N+1) \big[\aiksq  \bklq   \gamkl  \big((N+1)\gamkl + \beta_{kl} \big)+ \bilq \big(  
   (N-1) \bkl\gamkl + \bklsq + 2 \gamklsq\big)\big]
                          \nonumber\\
            %--------------
              & \hspace{-0.65em}
                  +\! N \aiktt \gamklsq \big[ (N\!+\! 1)(N\!+\!2)\gamkl \big((N\!+\!3) \gamkl \! +\!4 (\bkl\!-\!\gamkl) \big)\! +\! (\bkl\!-\!\gamkl)^2(2N\!+\!1)\!-\!(\bkl\!+\!N\gamkl)^2 \big]\!, 
                  \\
     \Upsilon_{ik,ll'}^{\mathsf{noncoh}} &=
     N^2   \oiklsqp\bkilpsq \big[\aiksq\gamkl
     (\bkl + N(\gamkl+\bklq) )
      +
     N\bkl \bilq \big]+
    N^2 \oiklsq \bkilsq\big[ \aiksq \gamklp (\bklp+N (\gamklp+\bklqp))
    \nonumber\\
    %----------
    &\hspace{-0.65em}+
     N\bklp \bilqp \big]\!+\!N^2\aiksq \big[\bkl  \bilq  \gamklp (\bklp+\bklqp+N\gamklp)
    + \bklp \bilqp \gamkl(\beta_{kl}+\bklq+N\gamkl)\big]
    +
    N^2\bkl \bklp \bilq \bilqp
    \nonumber\\
    %-----------
    &\hspace{-0.65em}+N^2\aiktt  \gamkl \gamklp \big[(\bklq + \bkl)(\bklqp+\bklp)
    +N(\gamkl(\bklp+ \bklqp)+\gamklp(\bkl+\bklq)) 
        \big].
\end{align}
\end{subequations}
\hrulefill
\vspace{-1.5em}
\end{figure*}
%-----------

\end{proposition}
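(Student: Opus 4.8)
The plan is to start from the received-power identity \eqref{eq:harvested_power} and first observe that, once the average over the independent unit-variance energy symbols is taken, $\IEk$ is a sum over interfering UEs of the squared magnitude of a per-UE combined channel: $\IEk=\suik|S_i|^2$ with $S_i\triangleq\sul\kmrti\sqrt{\Omega_{il}}\,\gklt\hgil^*$. Both statistics then reduce to channel moments of the scalars $S_i$. Throughout I would lean on three structural facts: (i) the MMSE split $\gkl=\hgkl+\tekl$ with $\tekl\sim\mathcal{N}_C(\mathbf{0},(\bkl-\gamkl)\In)$ independent of the estimate; (ii) the co-pilot proportionality $\thgil=\aik\thgkl$, which is what couples $\hgil$ to $\gkl$; and (iii) independence of all channels, estimates and errors across distinct APs. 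Fact (iii) is the workhorse: it forces the AP sums to collapse onto a ``coherent'' ($l=l'$) and a ``noncoherent'' ($l\neq l'$) contribution, which is exactly the role of $\delta_{l,l'}$ in the statement.

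For the mean I would push $\ME\{\cdot\}$ inside the sums, so that $\Xi_{ik,ll'}=\ME\{\gklt\hgil^*\hgiltp\gklcp\}$. When $l\neq l'$, AP independence factorizes this into $\ME\{\gklt\hgil^*\}\,\ME\{\hgiltp\gklcp\}$; each factor is a second-order moment obtained by substituting the LoS/NLoS decompositions and using the estimate--channel cross-covariance $\ME\{\sqrt{\bkl}\tgkl\thgkl^H\}=\gamkl\In$, which yields $N(\bkil\oikl+\aik\gamkl)$ and hence the product form in the $(1-\delta_{l,l'})$ branch. When $l=l'$, the same object becomes the fourth-order moment $\ME\{|\gklt\hgil^*|^2\}$; expanding it into the deterministic LoS piece together with the two correlated Gaussian pieces $\thgkl$ (shared by $\gkl$ and $\hgil$) and $\tekl$, and invoking $\ME\{\|\qz\|^4\}=N(N+1)$ for $\qz\sim\mathcal{N}_C(\mathbf{0},\In)$, produces the $\gamkl^2(N+1)$, cross, and residual terms collected in the $\delta_{l,l'}$ branch of $\Xi_{ik,ll'}$.

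For the variance I would write $\MV\{\IEk\}=\ME\{(\IEk)^2\}-(\ME\{\IEk\})^2$ and, after the per-UE decomposition, reduce it to the per-UE contributions $\MV\{|S_i|^2\}=\ME\{|S_i|^4\}-(\ME\{|S_i|^2\})^2$, the single sum over $i$ in the statement reflecting that inter-UE covariances are set aside. Writing $S_i=\sul a_{il}$ with $a_{il}\triangleq\kmrti\sqrt{\Omega_{il}}\gklt\hgil^*$ independent across $l$, the term $|S_i|^4$ expands into a quadruple AP sum $\sum_{l,l',t,t'}\ME\{a_{il}\bar a_{il'}a_{it}\bar a_{it'}\}$; AP independence kills every grouping containing an index that appears only once, and after subtracting $(\ME\{|S_i|^2\})^2$ only two families survive — the diagonal terms ($l=l'$), which give $\Upsilon^{\mathsf{coh}}_{ik,l}$, and the paired off-diagonal terms ($l\neq l'$), which give $\Upsilon^{\mathsf{noncoh}}_{ik,ll'}$. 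The heavy lifting is the diagonal family, whose evaluation reduces to an eighth-order moment of the single non-central complex Gaussian vector assembled from $\thgkl$ and $\tekl$; this is the source of the rising-factorial coefficients such as $(N+1)(N+2)(N+3)$ and of the mixed LoS/NLoS monomials in $\Upsilon^{\mathsf{coh}}_{ik,l}$.

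The main obstacle is precisely this diagonal variance term: the eighth-order moment of a non-central complex Gaussian must be evaluated by Isserlis/Wick pairing while carefully tracking the deterministic LoS steering vectors $\bhkl,\bhil$ — which enter only through $\oikl$ and $\bkil$ — and the correlation imposed by $\thgil=\aik\thgkl$, after which a large collection of monomials has to be shown to cancel against $(\ME\{|S_i|^2\})^2$ so that only the compact $\Upsilon$ forms remain. I expect this bookkeeping, together with the question of how to treat (or justify neglecting) the inter-UE covariances, to dominate the effort; by contrast the mean and the noncoherent variance terms follow routinely from second- and fourth-order Gaussian moments once the MMSE decomposition and the AP-independence structure are in place.
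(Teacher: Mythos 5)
Your proposal follows essentially the same route as the paper's Appendix B: the mean is obtained by splitting $\ME\{\gklt\hgil^*\hgiltp\gklcp\}$ into the coherent ($l=l'$) fourth-order moment and the factorized non-coherent product, and the variance is reduced to per-$(i,l,l')$ variances of $R_{k,i,l}R_{k,i,l'}^H$ evaluated through the MMSE decomposition, the co-pilot proportionality $\thgil=\aik\thgkl$, and Gaussian moment identities up to $\ME\{\Vert\thgkl\Vert^8\}=N(N+1)(N+2)(N+3)\gamma_{kl}^4$ (the paper packages your Wick-pairing step as Lemmas 1--3). One small caution: partially overlapping AP groupings in the quadruple sum are not actually annihilated by independence, since the $a_{il}$ have nonzero means; the paper's Eq.~\eqref{eq:harvested_power_var_final} silently drops these cross-covariances just as you do, so you land on the same expression, but your stated justification ("independence kills every grouping with a singleton index") is not the correct reason.
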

\begin{proof}
    See Appendix~\ref{Proof:Lemma:EIK}.
\end{proof}
Now, we proceed to calculate the statistics of the harvested energy $\EEk$. By invoking \eqref{eq:harvested_energy}, we first calculate its expectation, which is given by 
%----------------------------
\begin{equation}
    \ME\left\{\EEk\right\}= \tau_h \psi_k \left( \ME\left\{\Lambda\left[\IEk\right]\right\}-\varphi_k \right).
\label{eq:exp_eek}
\end{equation}
%----------------------------
It can be observed that the computation of $\ME\big\{\Lambda\left[\IEk\right]\big\}$ presents significant complexity. Thus, we use the following approximation\footnote{Considering 
$0 \leq\IEk <b_k$, the logistic function described in~\eqref{eq:logistic} is convex. Since the received input energy typically falls within this interval, Jensen's inequality verifies that this approximation serves as an upper bound for $\ME\big\{\Lambda\left[\IEk\right]\big\}$  with an approximation error of only $0.0081 \%$ as observed in our simulation results.}
%----------------------------
\begin{align}
\label{eq:jansen_inequality}
         \ME\big\{\Lambda\left[\IEk\right]\big\}\approx \Lambda\left[\ME\left\{\IEk\right\}\right]=\dfrac{1}{1+e^{-a_k\left(\ME\left\{\IEk\right\}-b_k\right)}}.
\end{align}
%----------------------------
Therefore, $\ME\left\{\EEk\right\}$ can be expressed as 
%-----------------
\vspace{0.2em}
\begin{align}
\label{eq:jansen_inequality2}
        &\ME\left\{\EEk\right\}
         =\psi_k\left(\Lambda\left[\ME\left\{\IEk\right\}\right]-\varphi_k\right)+\acute{\epsilon}_k,
\end{align}
%-------------------------
where \(\acute{\epsilon}_k\) is the approximation gap. Note that the Taylor series expansion of \eqref{eq:exp_eek} at $\ME\left\{\IEk\right\}$ also yields the same expression as in \eqref{eq:jansen_inequality2}, since the higher-order terms become asymptotically negligible.

Now, the variance of the harvested energy at the output of the non-linear EH circuit, i.e.,~\eqref{eq:harvested_energy}, can be obtained as
%----------------------------
\begin{equation}
        \MV\left\{\EEk\right\}\!=\!(\tau_h \psi_{k})^2 \left(\ME\left\{\big(\Lambda\left[\IEk\right]\big)^2\right\}\!-\!\left(\ME\left\{\Lambda\left[\IEk\right]\right\}\right)^2\right).
\label{eq:var_eek}
\end{equation}
%----------------------------

Using a Taylor series expansion of $\Lambda\left[\IEk\right]$ around the expectation $\ME\left\{\IEk\right\}$, we can approximate $\ME\Big\{\big(\Lambda\left[\IEk\right]\big)^2\Big\}$ in \eqref{eq:var_eek} as
%----------------------------
\vspace{0.5em}
\begin{align}
    %\begin{split}
        &\ME\Big\{\big(\Lambda\left[\IEk\right]\big)^2\Big\}
        \!=
        \! \ME\Big\{\!\Big(\Lambda\!\left[\ME\left\{\IEk\right\}\right] +\frac{\partial \Lambda\left[\ME\left\{\IEk\right\}\right]}{\partial \IEk }\nonumber\\
        &\hspace{0em}
        \times\left(\IEk\!-\ME\left\{\IEk\right\}\right)
        % \nonumber\\
        % & \hspace{1em}
        +\dfrac{1}{2}\frac{\partial^2 \Lambda\left[\ME\left\{\IEk\right\}\right]}{\partial (\IEk)^2 }\left(\IEk-\ME\left\{\IEk\right\}\right)^2\Big)^2 \Big\},\nonumber\\     
        &\hspace{6em} \approx\! \MV\left\{\IEk\right\}\!\left(\bigg(\frac{\partial \Lambda\left[\ME\left\{\IEk\right\}\right]}{\partial \IEk }\bigg)^2
        % \\
        % &\hspace{6em}
        +\!\Lambda\!\left[\ME\left\{\IEk\right\}\right]
        \right.
        \nonumber\\
        &\hspace{6em}
        \left.\times
        \frac{\partial^2 \Lambda\left[\ME\left\{\IEk\right\}\right]}{\partial (\IEk)^2 }\right)\!+ \!\left(\Lambda\left[\ME\left\{\IEk\right\}\right]\right)^2.
    %\end{split}
\label{eq:firstVarexp}
\end{align}
%----------------------------
Substituting \eqref{eq:firstVarexp} into \eqref{eq:var_eek}, the variance of the harvested energy is given by
%------------------------
\vspace{1em}
\begin{align}
    \!\MV\left\{\EEk\right\}\!=& 
    (\tau_h \psi_{k})^2 \MV\left\{\IEk\right\}
    \bigg( \frac{\partial^2 \Lambda\left[\ME\left\{\IEk\right\}\right]}{\partial (\IEk)^2 }\left[\ME\left\{\IEk\right\}\right]\nonumber\\
    &+\bigg(\frac{\partial \Lambda\left[\ME\left\{\IEk\right\}\right]}{\partial \IEk }\bigg)^2\bigg)+\acute{\chi}_k,
\label{eq:var_eek_final}
\end{align}
%------------------------
where \(\acute{\chi}_k\) is the approximation error, and
%------------------------
\begin{subequations}
   \begin{align}
    %\begin{split}
        \!\!\frac{\partial \Lambda\!\left[\ME\left\{\IEk\right\}\right]}{\partial \IEk } &= \dfrac{a_k e^{-a_k\left(\ME\left\{\IEk\right\}-b_k\right)}}{\Big(1+e^{-a_k(\ME\left\{\IEk\right\}-b_k)}\Big)^2}\nonumber\\
        &= a_k \Lambda\!\left[\ME\left\{\IEk\right\}\right]\left(1-\Lambda\!\left[\ME\left\{\IEk\right\}\right]\right),
        \\
        \!\!\frac{\partial^2 \Lambda\!\left[\ME\left\{\IEk\right\}\right]}{\partial (\IEk)^2 } &= \dfrac{a^2_k e^{-a_k\left(\ME\left\{\IEk\right\}-b_k\right)}\!\Big(1\!-\!e^{-a_k(\ME\left\{\IEk\right\}-b_k)}\Big)}{\Big(1+e^{-a_k(\ME\left\{\IEk\right\}-b_k)}\Big)^3},\nonumber\\
        &\hspace{-4em}=a^2_k \Lambda\!\left[\ME\left\{\IEk\right\}\right]\left(1-\Lambda\!\left[\ME\left\{\IEk\right\}\right]\right)\left(1-2\Lambda\!\left[\ME\left\{\IEk\right\}\right]\right).
    %\end{split}
\end{align}
\end{subequations}
%----------------------
We now concentrate on the probability distribution function (PDF) of the $\EEk$, utilizing the previously derived expressions for the mean and variance as indicated in \eqref{eq:jansen_inequality} and \eqref{eq:var_eek_final}. Typically, the energy storage from a potential source follows an exponential distribution, as described in ~\cite[Eq. (27.33)]{sadiku}. In the current scenario, multiple signals act as potential energy sources, as depicted in \eqref{eq:rE}.  Consequently, the composite harvested energy $\IEk$ at the input of the EH circuit  will approximately follow a Gamma distribution, representing an effective sum of independent exponential distributions \cite{kusal,tavana}. By observing \eqref{eq:exp_eek}, the PDF of $\EEk$ is dependent upon the logistic function $\Lambda\left[\IEk\right]$ while all other parameters are constants. The second order Taylor series expansion of $\Lambda\left[\IEk\right]$ around the expectation $\ME\left\{\IEk\right\}$ present in calculating \eqref{eq:firstVarexp}, can be written as
% \vspace{-0.5em}
%------------------------
\begin{align}\label{eq:taylor_approx}
    % \begin{split}
        \Lambda\left(\IEk\right)\!\approx&\Lambda\left[\ME\left\{\IEk\right\}\right]+a_k \Lambda\!\left[\ME\left\{\IEk\right\}\right]\left(1-\Lambda\!\left[\ME\left\{\IEk\right\}\right]\right)\nonumber\\
        &\hspace{-0.2em}\times\!\left(\IEk\!-\!\ME\left\{\IEk\right\}\right)\!+\! \dfrac{a^2_k}{2} \Lambda\!\left[\ME\left\{\IEk\right\}\right]\left(1\!-\!\Lambda\!\left[\ME\left\{\IEk\right\}\right]\right)\nonumber\\
        &\hspace{-0.2em}\times\!\left(1-2\Lambda\!\left[\ME\left\{\IEk\right\}\right]\right)\left(\IEk\!-\ME\left\{\IEk\right\}\right)^2.
    % \end{split}
\end{align}
%------------------------
Now, we take the variance of this function as
%------------------------
\begin{equation}\label{eq:var_lambda}
    \!\MV\{\Lambda\!\left(\IEk\right)\}\! \approx\! \Big(\! a_k \Lambda\!\left[\ME\left\{\IEk\right\}\right]\!\left(1\!-\!\Lambda\!\left[\ME\left\{\IEk\right\}\right]\right)\!\!\Big)^{2}\! \MV\left\{\IEk\right\}.\!
\end{equation}
%------------------------
Using the moments evaluated in \eqref{eq:jansen_inequality} and \eqref{eq:var_lambda}, we estimate the shape and scale of the Gamma distribution representation of $\Lambda\left(\IEk\right) \sim \Gamma\left(k_{\Lambda}, \theta_{\Lambda}\right)$, as follows
%----------------------------
\begin{align}\label{eq:gam_stat}
    \begin{split}
        k_{\Lambda}=\dfrac{\left(\ME\left\{\Lambda\left(\IEk\right)\right\}\right)^2}{\MV\left\{\Lambda\left(\IEk\right)\right\}},\quad
        \theta_{\Lambda}=\dfrac{\MV\left\{\Lambda\left(\IEk\right)\right\}}{\ME\left\{\Lambda\left(\IEk\right)\right\}}.
    \end{split}
\end{align}
%------------------------
Now, we extend this analysis from $\Lambda\left(\IEk\right)$ to $\EEk$. Using \eqref{eq:harvested_energy}, we express the relationship of PDFs of these random variables as,
%------------------------
\begin{equation}
    f\left(\EEk\right) = \dfrac{1}{\tau_h \psi_k}f_{\Lambda}\left(\dfrac{\EEk}{\tau_h \psi_k}+\varphi_k\right).
\end{equation}
%------------------------
This equation shows that $\EEk$ is Gamma distributed $\sim \Gamma\left(k_{\Lambda},\tau_h \psi_k\theta_{\Lambda}\right)$ with support shifted by $\tau_h \psi_k\varphi_k$.
For the $k$-th UE, we can approximate the shape and scale parameters of the Gamma distribution using the mean and variance of $\EEk$ calculated in \eqref{eq:jansen_inequality} and \eqref{eq:var_eek_final} as
%----------------------------
\begin{align}\label{eq:gam_stat}
    \begin{split}
        k_{k}=\dfrac{\left(\ME\left\{\EEk\right\}\right)^2}{\MV\left\{\EEk\right\}},\quad
        \theta_{k}=\dfrac{\MV\left\{\EEk\right\}}{\ME\left\{\EEk\right\}}.
    \end{split}
\end{align}
%------------------------

Based on these statistical parameters, we can rewrite the PDF and the cumulative distribution function (CDF) of $\EEk$ as
%------------------------
\begin{subequations}
 \begin{align}
    f\left(\EEk;k_k,\theta_k\right) &\approx \dfrac{1}{\Gamma(k_k)\theta^k} \EEk^{(k_k-1)}e^{-\EEk/\theta_k},\label{eq:gam_pdf}\\
    F\left(\EEk;k_k,\theta_k\right) &\approx \dfrac{1}{\Gamma(k_k)} \gamma \left(k_k, \dfrac{\EEk}{\theta_k} \right).\label{eq:gam_cdf}
\end{align}   
\end{subequations}

%----------------------------
% \comm{Moreover, the cumulative distribution function (CDF) of $\EEk$ can be approximately given by}
% %------------------------
% \vspace{0em}
% \begin{equation}
%     F\left(\EEk;k_k,\theta_k\right) \approx \dfrac{1}{\Gamma(k_k)} \gamma \left(k_k, \dfrac{\EEk}{\theta_k} \right).
% \label{eq:gam_cdf}
% \end{equation}
% %----------------------------

%%%%%%%%%%%%%%%%%%%%%%%%%%%
\section{Markov Process Based Energy State Transitions} \label{sec:Markov}
%%%%%%%%%%%%%%%%%%

In this section, we develop a Markov process-based model for probabilistic transitions among multiple discrete states of UE energy storage. A Markov process is aptly suited for characterizing stochastic processes where the future state is contingent solely upon the current system state. Within the framework of the proposed WPT-based CF-mMIMO system, the EH dynamics can be effectively represented using a MC as the UE energy storage state at the subsequent coherence interval is exclusively dependent on its present energy state and energy differential between the energy harvested by the APs and the energy consumed during UL training and data transmission phases. Figure~\ref{fig:Markov_chain} provides a schematic depiction of this process, with the nodes symbolizing the discrete UE energy states and the directional arrows illustrating the transition probabilities between the states.

% %%%%%%%%%%%%%%%%%%%%%%%%%%%%%%%%%    
% \begin{figure}[t]
%     \centering \includesvg[inkscapelatex=false,width=0.9\columnwidth]{Fig1.svg}
%      \vspace{0em}
%     \caption{\small MC-based energy state transitions.\normalsize}
%     \vspace{0.4em}
%     \label{fig:Markov_chain}
% \end{figure}
% %%%%%%%%%%%%%%%%%%%%%%%%%%%%%%%%%   
%%%%%%%%%%%%%%%%%%%%%%%%%%%%%%%%%    
\begin{figure}[t]
    \centering \includegraphics[width=0.9\columnwidth]{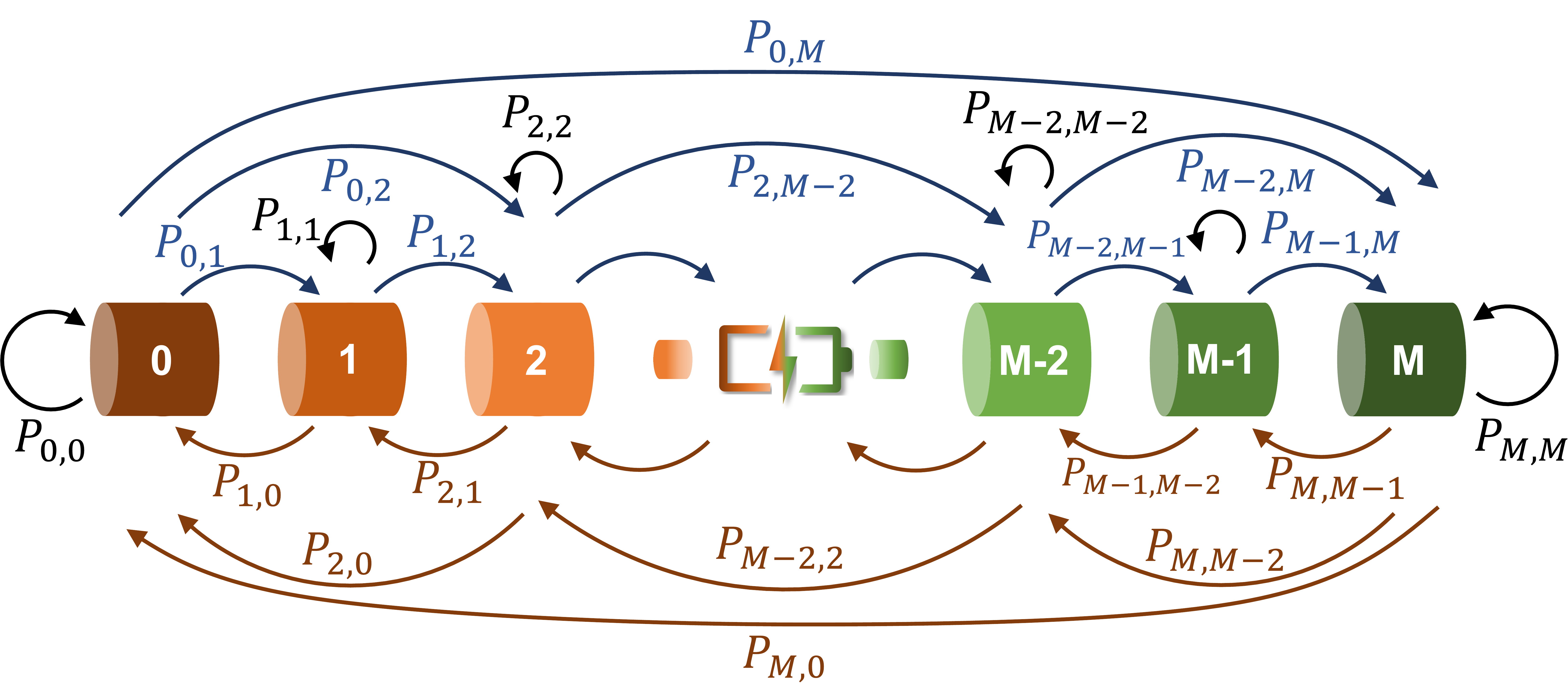}
     \vspace{0em}
    \caption{\small MC-based energy state transitions.\normalsize}
    \vspace{0.4em}
    \label{fig:Markov_chain}
\end{figure}
%%%%%%%%%%%%%%%%%%%%%%%%%%%%%%%%%    

%%%%%%%%%%%%%%%%%%%%%%%%%%%%%%%%%    
\subsection{Dynamics of Energy State Transitions}
%%%%%%%%%%%%%%%%%%%%%%%%%%%%%%%%%    
Each UE is postulated to have a finite energy storage of total capacity $E_f$. This continuous energy storage is segmented into $M$ discrete levels, each representing an energy quantum of $\delta E = E_f/M$. Since the UE energy storage is fundamentally continuous in nature, we try to develop better understanding about its temporal evolution by  quantizing it into an arbitrarily large number of discrete states, such that the discrete characterization approaches the continuous one \cite{kusal}. Given the memoryless nature of Markov processes, which relies on the postulate that the future state is independent of the past transition history and is solely dependent on the present state, the probability of the initial state at the inception of observation process follows a uniform distribution. Thus, the initial probability $\pi_0(m) \triangleq\Pr(X_0=m) \,\, \forall \,\, E^0_k \in \{E_1,E_2, \hdots, E_f\}$ of any particular discrete state $m$ at time $t=0$ is $1/M$. Furthermore, the transition probabilities from any given state $i$ can be represented as $\{P_{i,0},P_{i,1}, \hdots, P_{i,i-1}, P_{i,i}, P_{i,i+1}, \hdots, P_{i,M}\}$. 

Considering a scenario where the UE energy level is in state $i$ at time $t$, the subsequent potential states are influenced by the energy differential $\dEk$ within the storage during a single coherence intervals $\tau_c$ as shown in Fig. \ref{fig:State_transition_probabilities}. This differential $\dEk$ is derived from the net balance of energy harvested ($\EEk$) during the DL EH phase, and the energy expended during both the UL pilot training and information transfer, quantified as $\ECk \triangleq \tau_p p_p + \tau_u p_u$. Consequently, $\dEk$ is defined as $\dEk= \EEk - \ECk = \EEk - (\tau_p p_p + \tau_u p_u)$. Moreover, at the $m$-th interval of this Markov process, the total energy of UE battery comprises the sum of the initial battery  $E^0_k$ and successive energy differentials $\Delta E^i_k$, given as, $E^n_k= E^0_k + \sum^n_{i=1} \Delta E^i_k$.  For analytical tractability, we assume that the expected value of energy differential is significantly smaller than the discrete energy slot increment as $\lvert\ME\{\Delta E_k\}\rvert\!\! \ll\!\! E_f/M $, which will be validated in Section \ref{sec:results}. After this reasonable assumption, we can reduce the feasible set of state transitions to three states: $\{P_{i,i-1}, P_{i,i}, P_{i,i+1}\}$.

% %%%%%%%%%%%%%%%%%%%%%%%%%%%%%%%%%    
% \begin{figure}[t]
%     \centering
%     % \vspace{-0.9em}
%     \includesvg[inkscapelatex=false,width=1\columnwidth]{Fig2.svg}
%      \vspace{-0.7em}
%     \caption{\small Transition probability cases from a particular energy state.\normalsize}
%      \vspace{0.5em}
%     \label{fig:State_transition_probabilities}
%     % \vspace{-0.5em}
% \end{figure}
% %%%%%%%%%%%%%%%%%%%%%%%%%%%%%%%%%   

%%%%%%%%%%%%%%%%%%%%%%%%%%%%%%%%%    
\begin{figure}[t]
    \centering
    % \vspace{-0.9em}
    \includegraphics[width=0.9\columnwidth]{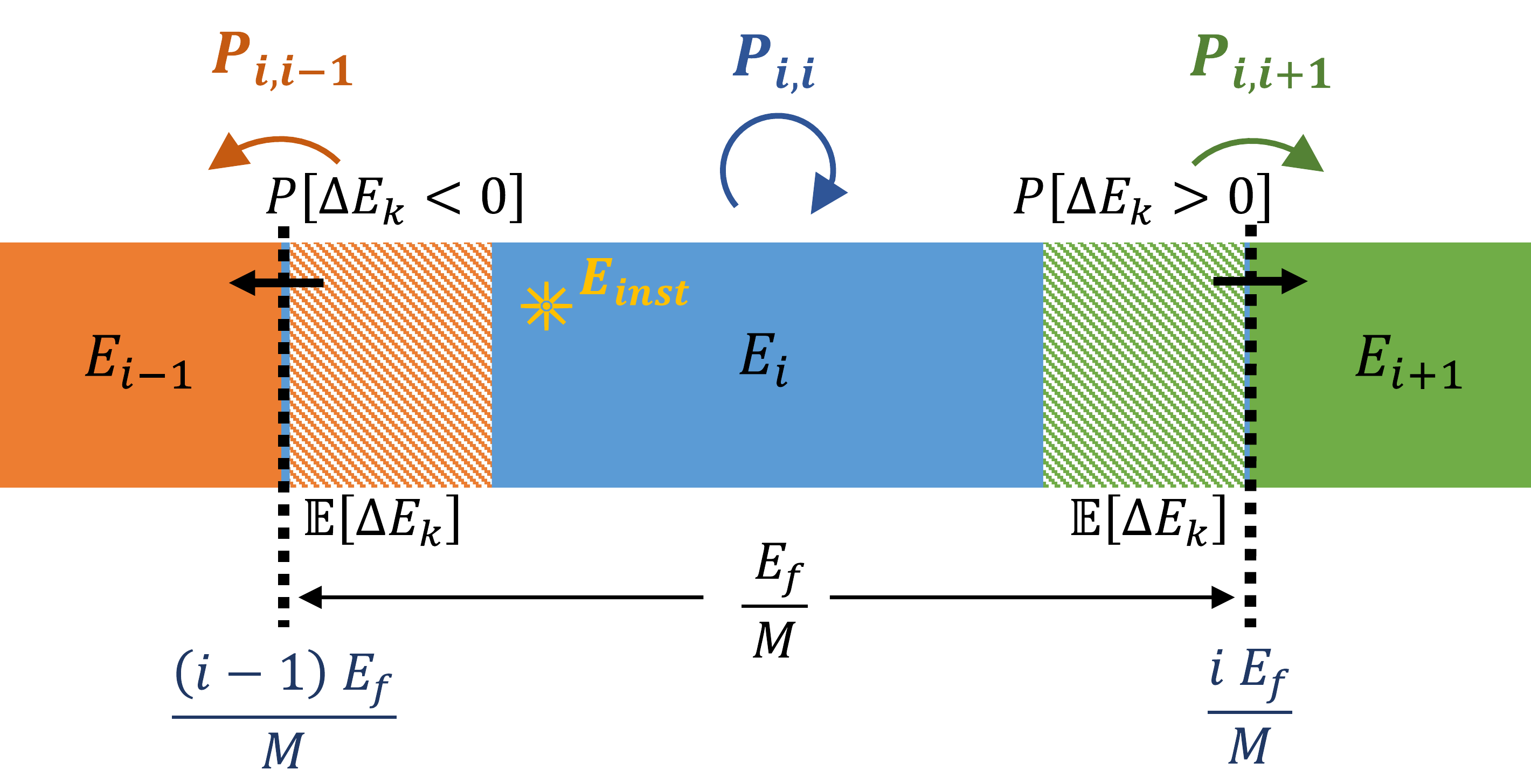}
     \vspace{-0.7em}
    \caption{\small Transition probability cases from a particular energy state.\normalsize}
     \vspace{0.5em}
    \label{fig:State_transition_probabilities}
    % \vspace{-0.5em}
\end{figure}
%%%%%%%%%%%%%%%%%%%%%%%%%%%%%%%%%    

In Fig. $\ref{fig:State_transition_probabilities}$, the parameter $P_{i,i+1}$ denotes the transition probability that a particular UE will have accumulated sufficient energy during $\tau_c$ resulting in positive energy differential $\dEk$ to jump from energy state $i$ to the adjacent higher state $i+1$, despite the energy consumption $\ECk$. In a similar manner, we can explain the negative transition probability  $P_{i,i-1}$, that the harvested energy $\EEk$ is less than the consumed energy $\ECk$, such that the UE energy level transits to a lower state $i-1$. However, for the largest portion of time, the UE energy level is self-transitioning in the same state $i$, shown by $P_{i,i}$ , which means that the net energy difference $\dEk$ is not significant enough to influence the instantaneous energy level to shift to adjacent states. 

It is imperative to observe by visual inference of Fig. \ref{fig:State_transition_probabilities} that the instantaneous energy level ($E_{inst}$, yellow star) in state $i$ during a particular coherence interval is a fundamental measure in the determination of the next transition $j \in \{i-1,i+1\}$. It can be deduced that when the instantaneous level is close to the energy state boundaries within $\ME\{\dEk\}$ region (shaded green and orange regions), the likelihood of transitioning to adjacent states increases. Further quantifying this phenomenon, the conditional probability of state transition can be determined by the two statistical parameters of $\dEk$: the mean $\ME\{\dEk\}$ and the probability of positive or negative energy differential, $\Pr(\dEk\!\leq\!0)$ or $\Pr(\dEk\!>\!0)$. The magnitude of the mean energy differential $\ME\{\dEk\}$ determines the span of the region in vicinity of state $i$ boundaries (shaded green and orange regions), where it is highly likely that the instantaneous energy level ($E_{inst}$) during interval $m-1$, combined with the expected energy differential in next coherence interval $m$, will jump to neighboring state $j$ in interval $m$. Given this condition, if $E_{inst}$ is in the green shaded region in interval $m-1$ and positive energy differential $\Pr(\dEk\!\geq\!0)$ occurs in interval $m$, the energy state of UE storage will transit to the next higher state $E_{i+1}$ (solid green). On the other hand, if $E_{inst}$ is in the orange shaded region and negative energy differential $\Pr(\dEk\!\leq\!0)$ happens, the energy state will jump to lower state $E_{i,i-1}$. For the case where $E_{inst}$ in state $i$ is outside these regions (solid blue region) regardless the probability of energy differential, self-transition $P_{i,i}$ is expected to happen in the next coherence interval.  

%%%%%%%%%%%%%%%%%%%%%%%%%%%%%%%%%%%%%%%%%%
\subsection{Stochastic Construction of Energy Transitions}
%%%%%%%%%%%%%%%%%%%%%%%%%%%%%%%%%%%%%%%%%%
Now, we develop a quantitative representation for the probabilistic state transition of the above explained discrete energy state model for UE energy storage. We first discuss the self-transition case $P_{i,i}$ within the state $E_i$ in Fig. \ref{fig:State_transition_probabilities}. If the instantaneous energy level $E_{inst}$ within state $E_i$ in interval $m-1$ lies in between $\big((i-1)E_f/M + \dEk\big)$ and $\big((i+1)E_f/M - \dEk\big)$ (solid blue region), then the UE storage in next coherence interval $m$ with still maintain state $E_i$. The associated probability of this event $P_{i,i}$ is $(1- M \ME\{\Delta E_k\}/{E_f})$. This formulation implies that  $E_{inst}$ does not approach the proximity of state boundaries with $\dEk$ margin on average. On the other hand, the probabilities of state transitions $P_{i,i-1}$ and $P_{i,i+1}$ are contingent upon the complementary event of self-transition $\bar{P}_{i,i}$. Given this conditional event, a negative energy state transition will occur when the energy consumption $\ECk$ is more than the harvested energy $\EEk$ $\left(\EEk \leq \ECk\right)$. Utilizing the CDF expression of the harvested energy in \eqref{eq:gam_cdf}, it can be shown that
%----------------------------
\begin{align}
    \begin{split}
        \Pr\left(\Delta E_k \!\leq\! 0 \right)&\!=\! \Pr\left(\EEk \!\leq \!\ECk \right)\!\approx\!\dfrac{1}{\Gamma(k_k)} \gamma \left(k_k, \dfrac{\ECk}{\theta_k} \right).\! 
    \end{split}
\label{eq:neg_prob}
\end{align}
%----------------------------

On the other hand, the positive state transition probability can be expressed as
%----------------------------
\begin{equation}\label{eq:pos_prob}
    \!\Pr\!\left(\Delta E_k \! >\! 0 \right)\!=\! \Pr\left(\EEk \!>\! \ECk \right)\approx\! 1\!-\!\dfrac{1}{\Gamma(k_k)} \gamma \left(k_k, \dfrac{\ECk}{\theta_k} \right)\!.\!
\end{equation}
%----------------------------

% \vspace{0.3em}

To encapsulate the preceding exposition, the conditional probability of transition from a particular state $i$ during the $m\!-\!1$ interval to neighboring state $j$ in the subsequent interval $m$ for UE $k$ can be denoted as
%----------------------------
\begin{align}
\centering
    \label{eq:transition_prob}
    P^k_{i,j}&= \Pr\left(X^k_{m}=j \Bigl\lvert X^k_{m-1}=i \right)\nonumber\\
    %& \\
    &=\!  \Pr\!\left(\!\dfrac{(j-1)E_f}{M}\!\le\! E^{m}_k\!\le\! \dfrac{j E_f}{M} \Bigl\lvert \dfrac{(i\!-\!1)E_f}{M}\!\!\le\!\! E^{m-1}_k\!\le\! \dfrac{i E_f}{M}\!\right)\nonumber\\
    %& \\
    &\approx
    \begin{cases}
        1- \dfrac{M \ME\{\Delta E_k\}}{E_f} &  ~ j\!=\!i \\
        %\\
        \dfrac{M \ME\{\Delta E_k\}}{E_f \Gamma(k_k)}  \gamma \left(k_k, \dfrac{\ECk}{\theta_k} \right) & ~ j\!=\!i+\!1 \\
        %\\
        \dfrac{M \ME\{\Delta E_k\}}{E_f} \left( 1-\dfrac{1}{\Gamma(k_k)} \gamma \left(k_k, \dfrac{\ECk}{\theta_k} \right)\right) &  ~ j\!=\!i-\!1, 
     \end{cases}
\end{align}
%----------------------------
where $\ME\{\Delta E_k\}=\tau_h \psi_k \left( \ME\left\{\Lambda\left[\IEk\right]\right\}-\varphi_k \right)-\ECk$, while $\ME\left\{\Lambda\left[\IEk\right]\right\}$ is given in~\eqref{eq:jansen_inequality}. To develop physical intuition for these probability cases, we should analyze them in the context of two conditional events: 1) the position of $E_{inst}$ relative to state boundaries. 2)
the polarity of the energy differential $\dEk$.
The first event determines the probability of a self-transition. The second event, which is conditioned on the first, determines whether the transition will be positive or negative. This depends on whether the harvested energy exceeds the energy consumption or vice versa.
% \vspace{0.1cm}

Utilizing the Markov property of state independence and the definition of conditional transition probabilities, the analysis can be expanded to elucidate a critical performance metric for EH systems: the probabilities of state transitions over $n$ coherence intervals. This metric offers significant insights into the dynamics of energy accumulation over multiple coherence intervals. Assuming an initial state $q$ of UE $k$ with initial probability $\pi^k_0(q)$, this parameter can be represented as
%----------------------------
\begin{align}
    \pi^k_n(j)&=\Pr\left(X^k_{n}=j \right)= \Bigg ( \prod^n_{m=1} \sum^M_{i=1} P^k_{i,j} \Bigg ) \pi^k_0(q).
\label{eq:trans_prob}
\end{align}
%----------------------------

%%%%%%%%%%%%%%%%%%%%%%%%%%%%%%%%%%%%%
\section{Max-Min UE Energy by Power Allocation Schemes}
\label{sec:optimization}
%%%%%%%%%%%%%%%%%%%%%%%%
In IoT networks utilizing centralized EH, maintaining adequate energy levels across all the connected devices is critical for continuous operational capabilities, while performing tasks such as UL pilot training and data transmission. Particularly, devices located farther from the APs tend to consume their energy reserves more rapidly due to higher power consumption, which necessitates a more focused energy distribution strategy. Therefore, it is imperative that more power should be focused on these depleted energy UEs. However, the power allocation schemes should be dynamic in nature, monitoring the storage state of all UEs on specific intervals, which we define as the PA period represented as $q$. This approach ensures that all devices sustain operation above a certain minimum energy level after numerous coherence intervals of network operation. Considering this above discussion, we formulate a realistic problem statement and then, present two PA schemes to address this issue.
\subsection{Problem Formulation}
%%%%%%%%%%%%%%%%%%%%%%%%%%%%%%%%%%%%%
The proposed Markov process-based CF-mMIMO network can provide a pragmatic EH solution for the above discussed practical industrial challenge. It can leverage an efficient power coefficient allocation scheme during the EH phase, which preferentially favors the UEs with least energy available in their storage units. By availing of the Markov property of single state dependence, we can devise a problem to maximize the energy differential during the PA period for the UE with minimum energy level among the network UEs at the end of previous PA period. We can now formulate this problem as
%----------------------
\begin{subequations}\label{eq:optimization_problem_1}
     \begin{align}
     %\raggedright
         &\max_{\left\{\opkl\right\}}\,   \ME\left\{\Delta E_{\bar{k}}^{(q)}\right\} \quad \text{where} \quad \bar{k} = \arg\min_{k \in \KE} E^{(q-1)}_k \\
         & \text{s.t.}\quad 0\leq \opkl, \quad k\in\KE, ~ l\in\LA,\\
         & \quad \quad \sul \suk  \, \opkl \leq p_{t,N},
     \end{align}
\end{subequations}
%------------------
where $\ME\left\{\Delta E_{\bar{k}}^{(q)}\right\}$ is the expected energy differential during PA period $q$ for a particular UE $\bar{k}$, which has the lowest energy level $E_{\bar{k}}$ among all $K$ UEs at the end of PA period $(q-1)$. 
%%%%%%%%%%%%%%%%%%%%%%%%%%%%%%%%%%%%%
\subsection{Heuristic PA Scheme}
%%%%%%%%%%%%%%%%%%%%%%%%%%%%%%%%%%%%%
Heuristic techniques can provide adaptable and computationally efficient power allocation solutions for complex resource management challenges in various communication networks. These schemes enable close-to-optimal power distribution while accounting for practical constraints such as total power availability and dynamic network conditions. By reducing the computational overhead and ensuring real-time decision-making, heuristic approaches can enhance the overall efficiency and practicality of PA process in contemporary wireless networks. Here, we propose CCPA which allocates the DL EH power on the basis of the relative channel characteristics that can influence the harvested energy. In this rule-based method, we focus the total network power $p_{t,N}$ onto the particular UE $\bar{k}$ with minimum energy level, while accounting for its channel estimates $\hat{\qg}_{\bar{k} l}$ from all the APs. The PA for this approach in period $q$ can be mathematically represented as
%------------------
\begin{align} \label{eq:heuristic2}
    \Omega^{(q)}_{kl,H}=
    \begin{cases}
        & \dfrac{\lVert \hat{\qg}_{{\bar{k}}l}\rVert_2}{\sum_{l \in L}\lVert \hat{\qg}_{{\bar{k}}l}\rVert_2} p_{t,N} \quad \forall \quad k=\bar{k} \,\,,\,\, k\in\LA \\
        & 0 \hspace{3.17cm}\forall \quad k\neq \bar{k}.
    \end{cases}
\end{align}
%------------------
\vspace{-0.7cm}
%%%%%%%%%%%%%%%%%%%%%%%%%%%%%%%%%%%%%
\subsection{Optimized PA Scheme}
%%%%%%%%%%%%%%%%%%%%%%%%%%%%%%%%%%%%%
The motivation of developing  optimization algorithms lies in constructing precise and systematic strategies for resource allocation in communication networks. These techniques ensure efficient utilization of power resources by handling complex optimization problems to derive near-optimal solutions which can serve as the performance benchmarks for other PA schemes. In this section, we propose a comprehensive optimization algorithm to maximize the minimum energy level of the proposed CF-mMIMO network.
%%%%%%%%%%%%%%%%%%%%%%%%%%%%%%%%%%%%%
\subsubsection{Problem Reformulation} 
%%%%%%%%%%%%%%%%%%%%%%%%%%%%%%%%%%%%%
For devising an effective optimization algorithm, we shall transform the original objective function to an form which has direct dependence on the optimization variables. Using the definition of $\dEk$ and \eqref{eq:jansen_inequality}, we approximate $\ME\left\{\Delta E_{\bar{k}}^{(q)}\right\}$ as
%------------------
\begin{equation}
    \ME\left\{\Delta E_{\bar{k}}^{(q)}\right\} \approx \tau_h \psi_{\bar{k}} \left( \dfrac{1}{1+e^{-a_{\bar{k}}\left(\ME\left\{I_{\bar{k}}^{(q)}\right\}-b_{\bar{k}}\right)}}-\varphi_{\bar{k}} \right)-E_{\bar{k}}^C.
\label{eq:exp_dek}
\end{equation}
%------------------

The logistic expression given in \eqref{eq:exp_dek} increases monotonically with $\ME\left\{I_{\bar{k}}^{(q)}\right\}$ over the interval $0  \! \leq \ME\left\{I_{\bar{k}}^{(q)}\right\}\! < b_{\bar{k}}$. So, it is sufficient to maximize $\ME\left\{I_{\bar{k}}^{(q)}\right\}$ over $\opkl$ to obtain the optimized solution of the original optimization problem \eqref{eq:optimization_problem_1}. We can now re-formulate the optimization problem with substitution of $\ME\left\{I_{\bar{k}}^{(q)}\right\}$ given in \eqref{eq:mean:final} as
%------------------
\begin{subequations}\label{eq:optimization_problem_2}
     \begin{align}
     %\raggedright
         &\max_{\left\{\opkl\right\}}\,  \sul\! \sum\nolimits_{l' \in \mathcal{L}}\! \sum\nolimits_{i \in \mathcal{K}} \!\kmrti\kmrtip\sqrt{\opil \opild}
        \Xi^{(q)}_{i{\bar{k}},ll'}  \quad  \\
         & \text{s.t.}\quad 0\leq \opkl, \quad k\in\KE, ~ k\in\LA,\\
         & \quad \quad \sul \suk  \, \opkl \leq p_{t,N},
     \end{align}
\end{subequations}
%------------------
where $\bar{k} = \arg\min_{k \in \KE} E^{(q-1)}_k$.

\begin{Corollary}
  The optimization problem in \eqref{eq:optimization_problem_2} is concave.  
\end{Corollary}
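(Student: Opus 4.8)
My plan is to certify the two defining features of a concave maximization program: a convex feasible set and a concave objective. The feasible region in \eqref{eq:optimization_problem_2} is the intersection of the non-negativity constraints $0\le\opkl$ with the single affine budget $\sul\suk\opkl\le p_{t,N}$; being an intersection of half-spaces it is a convex polytope, so this part is immediate and the entire difficulty lies in the objective.

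For the objective I would first use separability: writing it as $\sum_{i\in\KE}g_i$ with $g_i=\sul\slp\kmrti\kmrtip\,\Xi^{(q)}_{i\bar k,ll'}\sqrt{\opil\opild}$, distinct UEs $i$ act on disjoint blocks of variables, so the sum is concave iff every $g_i$ is. Within one block I would pair the $(l,l')$ and $(l',l)$ terms. The matrix $\big[\Xi^{(q)}_{i\bar k,ll'}\big]_{l,l'}$ supplied by Proposition~\ref{Lemma:EIK} is the second-moment matrix of the effective scalar channel $\gklt\hgil^*$ (with $k=\bar k$) and is therefore Hermitian, so each block reduces to $\sul\kmrti^2\Xi^{(q)}_{i\bar k,ll}\opil+2\sum_{l<l'}\kmrti\kmrtip\,\Re\{\Xi^{(q)}_{i\bar k,ll'}\}\sqrt{\opil\opild}$. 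The diagonal part is affine, hence concave; each geometric mean $\sqrt{\opil\opild}$ is jointly concave on the non-negative orthant, as its $2\times2$ Hessian is negative semidefinite (zero determinant, negative trace). Since a non-negatively weighted sum of concave functions is concave, it remains only to show that the weights $\kmrti\kmrtip\,\Re\{\Xi^{(q)}_{i\bar k,ll'}\}$ are non-negative.

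That non-negativity is the crux and the main obstacle, because the off-diagonal entries of $\Xi^{(q)}$ carry the LoS cross-correlation $\varrho_{i\bar k,l}$, which is complex in general, so $\Re\{\Xi^{(q)}_{i\bar k,ll'}\}$ is not manifestly sign-definite. I would analyse it through the moment structure: by AP independence the off-diagonal block factorizes as $\Xi^{(q)}_{i\bar k,ll'}=\overline{\nu_{il}}\,\nu_{il'}$ with $\nu_{il}=\ME\{\hgilt\gklc\}$, whence the block matrix equals a rank-one part $\boldsymbol\nu^*\boldsymbol\nu^T$ plus a non-negative diagonal and $g_i$ collapses to an affine term plus $\big|\sul\kmrti\nu_{il}\sqrt{\opil}\big|^2$. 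Concavity then holds exactly when the $\nu_{il}$ are phase-aligned across APs --- for instance when the steering geometry renders $\varrho_{i\bar k,l}$ effectively real, or when the real contribution $\aik\gamkl$ dominates $\bkil\varrho_{i\bar k,l}$ --- since only then are the paired coefficients $\Re\{\Xi^{(q)}_{i\bar k,ll'}\}$ non-negative and the geometric-mean argument goes through verbatim.

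In summary, the routine steps are the affineness of the constraints and the Hessian check for $\sqrt{\opil\opild}$, while the real work is controlling the complex LoS cross-correlations: I expect to close the proof by invoking the model's operating assumption of non-negative effective correlations (consistent with the numerical evidence used elsewhere in the paper), after which concavity of each $g_i$, and hence of the whole objective over the convex feasible polytope, follows.
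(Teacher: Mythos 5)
Your argument follows essentially the same route as the paper's own proof: the constraints are affine, so the feasible set is a polytope, and concavity of the objective is reduced to the joint concavity of each geometric mean $\sqrt{\opil\opild}$ on the non-negative orthant, certified by a $2\times 2$ Hessian check. The paper does precisely this by writing the entries $H^{(q)}_{il,il'}$ and testing the quadratic form $\qv^T\qH\qv$ (its displayed inequality in fact loses the minus sign contributed by the diagonal term $-\tfrac{1}{2}(\Omega^{(q)}_{il})^{-2}\delta_{l,l'}$; the completed square should read $-\big(v_{il}/\Omega^{(q)}_{il}-v_{il'}/\Omega^{(q)}_{il'}\big)^2\le 0$, which your version of the Hessian check states correctly). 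Where you genuinely add something is in isolating the hypothesis that the paper leaves implicit: a non-negatively weighted sum of concave functions is concave, but the weights here are $\kmrti\kmrtip\,\Re\big\{\Xi^{(q)}_{i\bar{k},ll'}\big\}$, and for $l\neq l'$ Proposition~1 gives $\Xi^{(q)}_{i\bar{k},ll'}=N^2\big(\zeta_{i\bar{k},l}\varrho_{i\bar{k},l}+\alpha_{i\bar{k}}\gamma_{\bar{k}l}\big)\big(\zeta_{i\bar{k},l'}\varrho_{i\bar{k},l'}+\alpha_{i\bar{k}}\gamma_{\bar{k}l'}\big)$ with $\varrho_{i\bar{k},l}=\bar{\qh}_{\bar{k}l}^T\bar{\qh}_{il}^*/N$ a complex LoS inner product, so the sign of the real part is not automatic; the paper's quadratic-form display simply omits the $\Xi$ factors and asserts negativity unconditionally. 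Your observation that concavity can fail when the cross-weights turn negative (the $x+y-2\sqrt{xy}=(\sqrt{x}-\sqrt{y})^2$ mechanism) is correct, and your resolution --- closing the argument under an explicit phase-alignment / non-negative effective-correlation assumption --- is a legitimate and more careful completion. Strictly speaking your proof establishes the corollary only under that extra hypothesis, but that hypothesis is exactly what the paper's own proof tacitly requires as well.
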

\begin{proof} 
We first evaluate the Hessian matrix $\textbf{H}^{(q)}_{\okl}$ of the objective function  given in \eqref{eq:optimization_problem_2} whose entries can expressed as
%------------------
\begin{align}
    \!\dfrac{\partial^2 \ME\!\left\{\!I_{\bar{k}}^{(q)}\!\right\}}{\partial \Omega^{(q)}_{il}\partial \Omega^{(q)}_{il'}}= \sul\! \sum_{l' \in L}\! \sum_{i \in K} \!\kmrti\kmrtip
    \Xi^{(q)}_{i{\bar{k}},ll'} H^{(q)}_{il,il'},  
\end{align}
%-------------------
where
%--------------
\begin{align}
H^{(q)}_{il,il'}\!\!=\!\! \sqrt{\Omega^{(q)}_{il'}\Omega^{(q)}_{il}}\! \!\left(\!\dfrac{1}{4 \Omega^{(q)}_{il}\Omega^{(q)}_{il'}}\!-\! \dfrac{1}{2 \left(\Omega^{(q)}_{il}\right)^2}\delta_{l,l'}\!\right)\!.\nonumber
\end{align}
%------------------
% The proof of concavity of $H^{(q)}_{il,il'}$ will be sufficient as sum of concave functions is a concave function itself.
To prove the concavity of $\textbf{H}^{(q)}_{\okl}$, we consider the quadratic form $\qv^T \textbf{H}^{(q)}_{\okl} \qv$ for any arbitrary vector $\qv=[v_{11}, v_{12}, \hdots,v_{il},\hdots, v_{KL}]$, which should be less than zero for a negative semi-definite Hessian matrix. That is,
%------------------
\begin{equation}
    \dfrac{\sqrt{\Omega^{(q)}_{il'}\Omega^{(q)}_{il}}}{4}\!\! \left(\!\! \left(\!\dfrac{v_{il}}{\Omega^{(q)}_{il}}\!+\!\dfrac{{v_{il'}}}{\Omega^{(q)}_{il'}}\!\right)^2\!\!\!-\!\left(\!\!\dfrac{v^2_{il}}{\left(\Omega^{(q)}_{il}\right)^2}\!+\!\dfrac{v^2_{il'}}{\left(\Omega^{(q)}_{il'}\right)^2}\!\!\right)\!\right)\!\leq 0.
\end{equation}
%------------------
By Cauchy-Schwarz inequality $\langle a,b \rangle^2 \leq \lVert a \rVert ^2\lVert b \rVert^2 $ over the $\mathbb{R}^2$ vector space \cite{bachman}, the quadratic form of Hessian is negative. Hence, the concavity of the optimization function is proved. 
\end{proof}
\subsubsection{Optimization Process}

In this subsection, the optimization solution in \eqref{eq:optimization_problem_2} will be presented based on transformation to second-order cone programming (SOCP) and iterative application of IPM. The original problem can be reformulated by using second-order cone constraints and linearizing the geometric mean of the power coefficients as
%----------------------------
\begin{subequations}\label{eq:optimization_problem_3}
     \begin{align}
     %\raggedright
         &\min_{\left\{\opkl\right\}}\,  - \sul\! \sum\nolimits_{l' \in \LA}\! \sum\nolimits_{i \in \KE} \!\kmrti\kmrtip \GMp
        \Xi^{(q)}_{i{\bar{k}},ll'}  \,\, 
        %\text{where} \quad \bar{k} = \arg\min_{k \in \KE} E^{(q-1)}_k 
        \\
         & \text{s.t.}\,\, \left\|\! \begin{pmatrix} \!\! 2 \GMp \\ \opil\!\! \!- \!\opild \! \!\end{pmatrix}\! \right\|\!\! -\!\!\bigg(\!\opil\!\! +\! \opild \!\!\bigg)\!\!\leq \! 0, \,\,\forall \left\{\opil\!\!,\opild \!\right\} \!\in\! \op\!\!, \label{eq:optimization_problem_3_1}\\ 
         & \quad \quad \sul \suk  \, \opkl - p_{t,N}\leq 0 , \, \, \forall \, \opkl\!\! \in\! \op,\, l\in\LA, \label{eq:optimization_problem_3_2}\\
         & \quad \quad \opkl + \opklv- p_{t,N} = 0, \quad k\in\KE \label{eq:optimization_problem_3_3}\\
         & \quad \quad \opkl \geq 0 , \quad \opklv \geq 0,
     \end{align}
\end{subequations}
%----------------------------
where  $\bar{k} = \arg\min_{k \in \KE} E^{(q-1)}_k$.

Here, the constraint given in \eqref{eq:optimization_problem_3_1} transforms the problem in the second-order conic domain $\left(\lVert \bm{u}\rVert\leq v\right)$, while ensuring condition $\sqrt{\opil\opild}\!\leq\! \GMp$. Moreover, the introduction of the slack variables $\opklv$ ensures that the inequality constraint  \eqref{eq:optimization_problem_3_2} is always satisfied during the optimization process. For the ease of notation, we omit superscript $(q)$ over the optimization variables, where it is understood that these parameters are valid for a particular PA period. Now, we recast the optimization problem in terms of the Lagrangian function for problem \eqref{eq:optimization_problem_3} with a logarithmic barrier-augmented objective function as
%------------------
\begin{equation}\label{eq:optimization_problem_4}
    \min_{\bx}\,  \mathcal{L}^{(q)}\!\left( \bx ,\bu,\bl\right)\,\, \text{where} \quad \bar{k}  = \arg\min_{k \in \KE} E^{(q-1)}_k,
\end{equation}
%------------------
where $\bx=\left\{\Omega^{(q)}_{{\bar{k}}l} ,\breve{\Omega}^{(q)}_{{\bar{k}}l} , \GMp\right\}$, $\bu=\left\{\mu^{(q)}_{\upsilon},\mu^{(q)}_{\Omega}\right\}$ with Lagrange multiplier $\bl$, while the Lagrangian function is defined as:
%------------------
\begin{align}
    \mathcal{L}^{(q)}\!\left(\bx ,\bu, \bl\! \right)\! =  \! \fx \!+\!\!  \bu^T \bgx \!+ \! \! \bl^T h\left(\bx \right). 
\end{align}
%------------------
 Here, the primary objective function is represented as
 %---------------
 \begin{align*}
    \fx\!=\!- \sul\! \sum\nolimits_{l' \in \mathcal{L}}\! \sum\nolimits_{i \in \mathcal{K}} \!\kmrti\kmrtip  \GMp \Xi^{(q)}_{i\bar{k},ll'}. 
 \end{align*}
 %---------------------
 The logarithmic barrier function for the inequality constraints \eqref{eq:optimization_problem_3_1}, \eqref{eq:optimization_problem_3_2} is $\bgx=\bigg[g ^{(q)}_{\upsilon}\left(\bx \right)$, $ g^{(q)} _{\Omega}\left(\bx \right)\bigg]^T$ for $g^{(q)} _{\upsilon}\!\left(\bx\right)\!\!=\! \left(\GMp\right)^2\!-\!\opil\opild$ and $g^{(q)}_{\Omega}\!\left(\bx\right)=p_{t,N}-\sul \suk  \, \opkl $ with barrier parameter $\bu=\left[\mu^{(q)}_{\upsilon},\mu^{(q)}_{\Omega}\right]^T$ for $\mu^{(q)}_{\upsilon}=-1/\left(t*g^{(q)} _{\upsilon}\!\left(\bx\right)\right)$, $\mu^{(q)}_{\Omega}=-1/\left(t*g^{(q)}_{\Omega}\!\left(\bx\right)\right)$, respectively. The equality constraint \eqref{eq:optimization_problem_3_3} is included in the Lagrangian function as $h \left(\bx\right)=\bm{a}^T\bx-b$, with $\bm{a}=[1,1,0]^T$, and $b=p_{t,N}$.
%%%%%%%%%%%%%%%%%%%%%%%%%%%%%%%%
\subsubsection{Responsive Primal-Dual IPM with Modified KKT system}
%%%%%%%%%%%%%%%%%%%%%%%%%%%%%%%%
In this subsection, we propose a responsive primal-dual IPM with modified KKT conditions, presented in \textbf{Algorithm \ref{algo:PD_IPM}}, which aims to maximize the energy differential of minimum UE energy among all UEs. This procedure focuses the network energy in PA period $q$ towards a particular UE $k$ which has the lowest energy level at the end previous PA period $(q\!-\!1)$ \textbf{(Algorithm \ref{algo:PD_IPM}}, Line \ref{eq:algo_k_identification}). IPM provides an adequate solution for the second-order Lagrangian function presented in \eqref{eq:optimization_problem_4},  as it allows the search point to traverse the interior of the feasible conic region iteratively.

At a particular search point, the modified KKT conditions are necessary to ensure optimality of the IPM solution. These conditions for our problem in \eqref{eq:optimization_problem_4} are:
%----------------------------
\begin{subequations} \label{eq:modified_kkt}
\begin{align}
     \nabla \fx + \left[\mathcal{J}\bgx\right]^T \bu + \bm{a}^T \bl   &=0, \\
     \bm{a}^T\bx-b&=0\\
    \mu_{\upsilon} \nabla g _{\upsilon}\!\left(\bx\right)+1/t&=0,\\
     \mu_{\Omega} \nabla g_{\Omega}\!\left(\bx\right)+1/t & = 0,\\
     \bgx\leq \bm{0}, \, \,\mu_{\upsilon} \geq 0, \,\, \mu_{\Omega} &\geq 0,
\end{align}
\end{subequations}
%----------------------------
%\vspace{0.2em}
where $\mathcal{J}$ represents the Jacobian operation. The residual vector $\bm{r}^{(q)}_t\left(\bx,\bu,\bl\right)$ of this modified KKT system can be formed as:
\begin{equation}
    \bm{r}^{(q)}_t\left(\bx,\bu,\bl\right)=\begin{pmatrix}
       \nabla \fx + \left[\mathcal{J}\bgx\right]^T \bu + \bm{a}^T \bl\\
       - \diag(\bu)\bgx-\dfrac{1}{t} \bm{1}\\
       \bm{a}^T\bx-b
    \end{pmatrix}.
\end{equation}
%----------------------------

After satisfying these necessary conditions, the search direction is sought out with Newton's method. Here, we denote the current search point as $\by=\left(\bx,\bu,\bl\right)^T$ and the primal-dual search direction $\Delta\by=\left(\Delta\bx,\Delta\bu,\Delta\bl\right)^T$ which can be characterized by Newton's method \cite{boyd2004convex} as $\mathcal{J}\bm{r}_t\left(\by\right)\Delta\by=-\bm{r}_t\left(\by\right)$, which is given in \eqref{eq:newton_method}:
%------------------
% \begin{figure*}
  \begin{align}~\label{eq:newton_method}
      &\begin{pmatrix}
        \nabla^2 \fx +  \bu^T \nabla^2 \bgx & \mathcal{J} \bgx^T & \bm{a}^T \\
        - \diag\left(\bu\right) \mathcal{J}\bgx & - \diag\left(\bgx\right) & 0 \\
        \bm{a} & 0 & 0
        \end{pmatrix}
        \nonumber\\
        &\hspace{4em}\times
        \begin{pmatrix}
        \Delta \bx \\
        \Delta \bu \\
        \Delta \bl
        \end{pmatrix}
        = - \begin{pmatrix}
        \bm{r}_{\text{dual}, t}\left(\by\right) \\
        \bm{r}_{\text{cent}, t}\left(\by\right) \\
        \bm{r}_{\text{pri}, t}\left(\by\right)
    \end{pmatrix},
  \end{align}  
%------------------
where $\bm{r}_{\text{dual}, t}\left(\by\right)$,$ \bm{r}_{\text{cent}, t}\left(\by\right)$ and $     \bm{r}_{\text{pri}, t}\left(\by\right)$ represent the primal, dual and centrality (modified complementarity) residual components. Now, we introduce the duality gap parameter to evaluate the distance to the optimal point over each iteration. However, $\by$ is not necessarily feasible, except in the vicinity of IPM convergence. Thus, we define the surrogate duality gap $\hat{\eta}= -\left[\bgx\right]^T \bu \,\,\forall \,\bx \to \fx\leq 0 \land \bm{\lambda} \geq 0$, which would be the duality gap if the solution point is feasible with $\bm{r}_{\text{pri}, t}\left(\by\right)=0$ and $\bm{r}_{\text{dual}, t}\left(\by\right)=0$. The iterative reduction of this gap results in the tightening of barrier constraint $t=\rho \dfrac{m}{\hat{\eta}}$ with $\rho > 1$.
%====================================
\begin{algorithm}[t] 
\caption{Responsive Primal-Dual IPM}\label{algo:PD_IPM}
\begin{algorithmic}[1]
\Procedure{PD}{$ f,P,\bar{k}, \bxo, \bgx, \bm{a}, b, \epsilon, \epsilon_{\text{feas}}, \rho, q, \beta_{LS}$}
    \State $\bar{k} = \arg\min_{k \in \KE} E^{(q-1)}_k$ \label{eq:algo_k_identification}
    \State $\by \gets \left(\bxo, \bu, \bl\right)$ with initial $\bu > 0$ 
    \Repeat
        \State $\hat{\eta} \gets -\left[\bgx\right]^T \bu$
        \State $t \gets \rho \frac{ m}{\hat{\eta}}$
        \State $\Delta \by \gets -\left[\mathcal{J} \bm{r}_t\left(\by\right)\right]^{-1} \bm{r}_t\left(\by\right)$
        \State $\alpha \gets \text{PD-LS}\left(\bm{r}_t, \bgx, \by, \Delta \by, p, \beta_{LS}\right)$
        \State $\by \gets \by + \alpha \Delta \by$
    \Until{$\Bigl\| \bm{r}_{\text{dual}, t}\left(\by\right) \Bigl\|_2 \leq \epsilon_{\text{feas}} \land \Bigl\| \bm{r}_{\text{pri}, t}\left(\by\right) \Bigl\|_2 \leq \epsilon_{\text{feas}} \land \hat{\eta} \leq \epsilon$}
    \State \Return $\bm{y}^{(q)}_{{\bar{k}},x,opt}$
\EndProcedure
\end{algorithmic}
\end{algorithm}
\setlength{\textfloatsep}{0.5cm}
%==============================

Based on the search direction $\Delta\by$, we update the search point by using the modified backtracking line-search procedure given in \textbf{Algorithm \ref{algo:linesearch}}, based on the residual norm to satisfy $\bx \to \fx\leq 0 \land \bm{\lambda} \geq 0$. At first, the maximum positive step length $\alpha_{max}$ is calculated, and then, backtracked with step reduction parameter $q$ and minimum descent $\beta_{LS}$. Utilizing this modified line-search procedure, the search point $\by$ is updated iteratively until both primal and dual residual $\left(\bm{r}_{\text{pri}, t}\left(\by\right), \bm{r}_{\text{dual}, t}\left(\by\right)\right)$ conform to the feasibility threshold $\epsilon_{\text{feas}} (\sim 10^{-9})$ along with the surrogate duality gap $\hat{\rho}$. At this stage, the optimization process culminates at the optimal power coefficients $\opkl$, which will maximize the harvested energy during the period $q$ for minimum energy UE $k$ at the end of the PA period $(q-1)$.
%-----------------------------------------
\begin{algorithm}[t]
\caption{Modified Backtracking Line-search for PD-IPM} \label{algo:linesearch}
\begin{algorithmic}[1]
\Procedure{PD-LS}{$\bm{r}_t, \bgx, \by, \Delta \by, q, \beta_{LS}$}
    \State $\alpha_{\text{max}} \gets \min\left(\{1\} \cup \left\{ \frac{-\byu}{\Delta \byu} \mid \Delta \byu < 0 \right\}\right)$
    \State $\alpha \gets 0.99 \alpha_{\text{max}}$
    \While{$ {g}_{\nu}\left(\byx \!+\! \Delta \byx\right) \!\geq \! 0 \,\lor {g}_{\Omega}\left(\byx \!+\! \Delta \byx\right)\! \geq \! 0 \,\lor \Bigl\| \bm{r}_t\left(\by \!+\! \alpha \Delta \by\right) \Bigl\|_2 \!>\! \left(1 \!-\! \alpha \beta\right) \Bigl\| \bm{r}_t(\by) \Bigl\|_2$}
        \State $\alpha \gets \alpha q$
    \EndWhile
    \State \Return $\alpha$
\EndProcedure
\end{algorithmic}
\end{algorithm}
\setlength{\textfloatsep}{0.5cm}
%-------------------------------

\textbf{Complexity analysis:} Now, we discuss the complexity of \textbf{Algorithm~\ref{algo:PD_IPM}} in the context of self-concordance framework applicable to SOCP problems \cite{boyd2004convex}. The upper bound on the total number of required Newton steps per PA period is given as
%---------------
\begin{equation}\label{eq:newton_step1}
    \mathrm{Q} = \left\lceil \frac{\log(m / (t_{0} \epsilon))}{\log \rho} \right\rceil \left( \frac{m(\rho - 1 - \log \rho)}{\gamma} + c \right),
\end{equation}
%---------------
where $m=u+v$ is the number of constraints, with $u\!=\!KL^2$ conic and $\!v=\!KL$ inequality constraints in \eqref{eq:optimization_problem_3}; $t_{0}=m/\eta_{0}$ is the initial value of $t$, which is defined by the initial duality gap $\eta_{0}=f\left(\bxo\right)-\bu_{0}^T \bm{g}\left(\bxo\right)$; $\gamma\!=\! \alpha\beta_{LS}(1\!-\!2\alpha)^2/(20\!-\!8\alpha)$ and $c\!=\!\log_2\log_2(1/\epsilon)$ are the scalar parameters in \eqref{eq:newton_step1}.  The choice of $\rho\!=\!\!1\!+\!1/\!\sqrt{m}$ ensures that outer iterations increase with $m$ in a logarithmic way. We can re-write \eqref{eq:newton_step1} as
%---------------
\begin{equation}\label{eq:newton_step2}
    \mathrm{Q} = \left\lceil \sqrt{m}\log_2\left(\frac{m}{ t_{0} \epsilon}\right) \right\rceil \left( \frac{1}{2\gamma} + c \right).
\end{equation}
%---------------
The expression in \eqref{eq:newton_step2}  illustrates that the computational complexity scales with $\mathrm{Q}$ as $\mathcal{O} \big(\sqrt{m}\log_2(m/\epsilon)\big)$. However, each Newton step further involves factorization of Hessian matrix in \eqref{eq:newton_method}, which typically has complexity $\mathcal{O}(K^3L^3)$ related to the third order of total number of optimization variables in \eqref{eq:optimization_problem_3}. Therefore, the overall algorithm complexity can be given as $\mathcal{O}\Big(K^3L^3\sqrt{KL(L+1)}\log_2\big(KL(L+1)/\epsilon\big)\Big)$.
\vspace{-0.5em}
%%%%%%%%%%%%%%%%%%%%%%%%%%%%%%%%%%%%%%
\section{Numerical Examples}
\label{sec:results}
%%%%%%%%%%%%%%%%%%%%%%%%%%%%%%%%%%%%%%%%%%%%%%%%%%%%%%%%%%%%%%%%%%%%%%%%%%%%
\subsection{System Parameters}
%%%%%%%%%%%%%%%%%%%%%%%%%%%%%%%%%%%%%%
In this section, we consider a CF-mMIMO network to quantify the performance of both proposed power allocation schemes and visualize the temporal evolution of Markov process for minimum energy level of device storage in the whole network. The most important system parameters of the proposed CF-mMIMO have been tabulated in Table \ref{tab:sys_para}, which are based on CF-mMIMO literature \cite{hien,Demir,Wang:JIOT:2020}. The network is operational over a square area with APs forming a uniform square array. Here, the UEs have been considered to be randomly distributed over the network area. Moreover, each UE has a battery capacity of $100$ mJ which is divided into $M=2000$ discrete states to notice the demonstration of Markov process. Furthermore, the total network power and the total number of available service antennas ($LN$) have been fixed for a judicious comparison between varying network configurations. The antenna elements are distributed among APs whose total number belongs to the set $L\in \{4,9,16,25,36\}$, while for the case $L=25$ APs, the division remainder antennas are allocated to some APs randomly to keep $LN$ constant.

\begin{table}[t]
    \caption{\small System Parameters \normalsize}
    \vspace{-1ex}
    \centering
    \footnotesize
    \begin{tabular}{|p{3.9cm}| >{\centering\arraybackslash}m{1.1cm}| >{\centering\arraybackslash}m{2cm}|}
        \hline
        \hspace{3.5em}\textbf{Parameter} & \textbf{Symbol} & \textbf{Value}  \\
        \hline
        Coverage area & & $100 \times 100 \, \text{m}^{2} $
        \\
        \hline
        Total network power & $p_{t,N}$ & $10$ W
        \\
        \hline
        Number of APs & $L$ & $\{4,9,16,25,36\}$ \\
        \hline
        Number of service antennas & $LN$ & $288$ \\
        \hline
        Number of network UEs & $K$ & $20$ \\
        \hline
        AP height & $h_{AP}$ & $15$ m
        \\
        \hline
        UE height & $h_{UE}$ & $1.65$ m
        \\
        \hline
        Carrier frequency & $f$ & $1.9$ GHz \\
        \hline
        System bandwidth & $B$ & $20$ MHz \\
        \hline
        Coherence interval & $\tau_{c}$ & $0.2$s \\
         \hline
        %Pilot phase proportion out of $\tau_{c}$  
        Pilot duration  out of $\tau_{c}$
        & $\tau_{p}$ & $0.1$ \\
         \hline
        EH phase duration out of $\tau_{c}$
        %proportion out of $\tau_{c}$  
        & $\tau_{h}$ & $0.3$ \\
         \hline
        DL data phase duration out of $\tau_{c}$ 
        %proportion out of $\tau_{c}$  
        & $\tau_{d}$ & $0.3$\\
         \hline
        UL data phase duration  out of $\tau_{c}$
        %proportion out of $\tau_{c}$  
        & $\tau_{u}$ & $0.3$\\
        \hline
        Number of samples in $\tau_c$ &  & $200$ \\
        \hline
        Pilot signal power  & $p_{p}$ & $\{1,3,5\}\mu$W \\
        \hline
        UL data power  & $p_{u}$ & $\{1,3,5\}\mu$W \\
        \hline
        UE battery capacity &  & $100$ mJ \\
        \hline
        Number of discrete states & $M$ & $2000$ \\
        \hline
        Optimization phase time & $q$ & $2$s \\
        \hline
    \end{tabular}
    \vspace{-1em}
    \label{tab:sys_para}
    %\vspace{-1em}
\end{table}

%%%%%%%%%%%%%%%%%%%%%%%%%%%%%%%%%%%%%%
\subsection{Large-Scale Fading Model}
%%%%%%%%%%%%%%%%%%%%%%%%%%%%%%%%%%%%%%
The large-scale fading parameter is modeled as $\zekl=10^{-(PL_{kl}+\Psi_{kl})/10}$, encompassing both path loss $PL_{kl}$ and log-normal shadowing effects $\Psi_{kl}$. The shadowing factor can be represented as $\Psi_{kl}=\sigma_{\Psi} \tilde{\Psi}_{kl}$ with standard deviation $\sigma_{\Psi}$ and $\tilde{\Psi}_{kl}\!\sim\!\mathcal{N}(0, 1)$. A three-slope model is employed to characterize the path loss $PL_{kl}$, given in \cite{hien}, i.e., 
%-------------------------
\begin{equation}
\hspace{-1em}PL_{kl}\! =\! 
\begin{cases} 
   -LP\!-\!35 \,{\log}_{10}(d_{kl}) &  d_{kl}\!>\! d_1, \\
   -LP\!-\!15 \,{\log}_{10}(d_{1})\!-\!20 \,{\log}_{10}(d_{kl}) &  d_0\!< \!d_{kl} \!< \!d_1, \\
   -LP\!-\!15 \,{\log}_{10}(d_{1})\!-\!20 \,{\log}_{10}(d_{0}) &  d_{kl}\! <\! d_0,\nonumber \\
\end{cases}
\end{equation}
%-------------------------
where $d_0=10$ m, $d_1=50$ m \cite{zhao}, while
%-------------------------
\begin{align}
    LP&\triangleq \,46.3+33.9 \log_{10}(f)\!-\!13.82 \log_{10}(h_{AP})\!\nonumber\\
    &-(1.1 \log_{10} (f)-0.7) h_{UE} +(1.56 \log_{10}(f)-0.8).
\end{align}
%-------------------------
%%%%%%%%%%%%%%%%%%%%%%%%%%%%%%%%%%%%%%
\vspace{-2em}
\subsection{Results and Discussion}
%%%%%%%%%%%%%%%%%%%%%%%%%%%%%%%%%%%%%%
We first consider the PDF of the harvested energy $f(\EEk)$  in Fig. \ref{fig:PDF}. For this analysis, we have used the uniform power control which assigns equal power coefficients as $\Omega_{kl,unopt}=p_{t,N}(\sul \sum_{i \in \mathcal{K}} \gamil)^{-1}$, referred to as full power control (FPC)~\cite{hien}. It can be noticed that the distribution of the simulated harvested energy (solid colored bars) follows very closely the Gamma distribution approximation (solid black line) given in \eqref{eq:gam_pdf} based on the statistical parameters in \eqref{eq:gam_stat}. The energy harvested at all individual UEs follows the same pattern, however, for clear representation, the results of the particular UE, which has the minimum energy level among all network UEs, are shown in the above-referred figure. It can be further deduced that the harvested energy for more distributed service antennas is considerably higher in comparison to the centralized AP configurations for constant number of service antennas ($LN=288$), as we can observe that the PDF of $\EEk$ for the case $L=9$ APs is spread over a higher energy region than $L=4$ APs case.  

% \begin{figure}[t]
%     \centering
% \vspace{-3em}    
% \includesvg[inkscapelatex=false,width=0.9\columnwidth]{Fig3.svg}
%     \vspace{0em}
%     \caption{\small PDF of the harvested energy of the minimum energy UE for two different numbers of APs.\normalsize}
%     \vspace{0em}
%     \label{fig:PDF}
% \end{figure}

\begin{figure}[t]
    \centering
\vspace{-3em}    
\includegraphics[width=0.9\columnwidth]{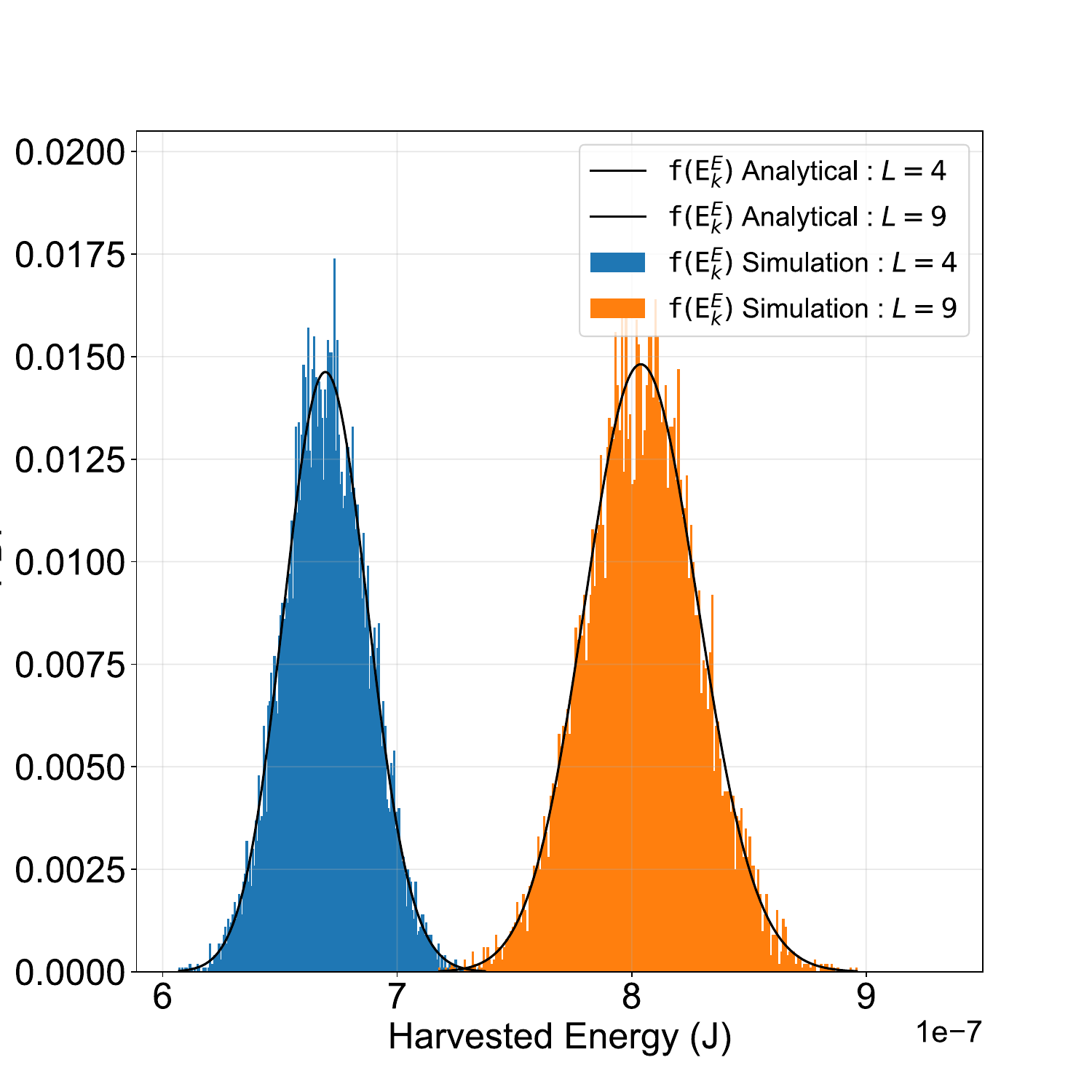}
    \vspace{0em}
    \caption{\small PDF of the harvested energy of the minimum energy UE for two different numbers of APs.\normalsize}
    \vspace{0em}
    \label{fig:PDF}
\end{figure}

The transition probabilities are critical indicators to evaluate the performance of Markov process based system model. Before discussing the insights related to these probabilities, it is pertinent to comment on the higher order transition probabilities ($P_{i,i\pm 2},P_{i,i\pm 3}, \hdots, P_{i,i\pm M}$). It is validated later in this section that the mean harvested energy $\ME\{\dEk\}$ for each UE within a coherence interval is in order $\sim 0.1-1\mu$J, which is negligible relative to the energy of a single discrete energy state $E_i$ ($50\mu$J). Hence, it is a rational assumption to consider that the higher order transition probabilities as zero and the analysis reduces to only three transition cases as given in \eqref{eq:transition_prob}, $P^k_{i,j} \in \{P_{i,i-1},P_{i,i},P_{i,i+1}\}$. These probabilities for energy state transition of UE device storage are reported in Table~\ref{tab:trans_probs}. A huge increase can be observed in the positive transition probabilities, along with similar scaling reduction in the negative transition probabilities as the number of APs increase from $L=4$ to $L=288$. This fact substantiates the efficacy of CF-mMIMO as the antennas are more distributed throughout the network spatial area. Moreover, the last case of $L=288$ single-antenna APs represents the maximum practical AP deployment, whose transition probabilities can be regarded as the upper-bound limits for the current number of service antennas ($LN=288$).

\begin{table}[t]
    \caption{\small Transition Probabilities of UE Energy States. \normalsize}
    \vspace{-1ex}
    \centering
    \footnotesize
    %\begin{tabular}{|p{1.3cm}|p{1cm}|p{1cm}|p{1cm}|}
    \begin{tabular}{|>{\centering\arraybackslash}m{1.6cm}|>{\centering\arraybackslash}m{1cm}|>{\centering\arraybackslash}m{1cm}|>{\centering\arraybackslash}m{1cm}|}
        \hline
        No. of APs & $P_{i,i}$ & $P_{i,i+1}$ & $P_{i,i-1}$ \\
        \hline
        \textbf{4} & 0.99710 & 0.00070 & 0.00220\\
        \hline
        \textbf{9} & 0.99845 & 0.00110 &0.00045\\
        \hline
        \textbf{16} & 0.99800 & 0.00160&0.00040\\
        \hline
        \textbf{25} & 0.99570 & 0.00400 &0.00030\\
        \hline
        \textbf{36} & 0.99480 & 0.00520 &0\\
        \hline
        \textbf{64 } & 0.98310 & 0.01690 & 0\\
        \hline
        \textbf{100}  & 0.97455 & 0.02545 & 0\\
        \hline
        \textbf{144} & 0.96090 & 0.03910 & 0\\
        \hline
        \textbf{288 } & 0.90350 & 0.09650 & 0
        \\\hline
    \end{tabular}
    \vspace{-1em}
    \label{tab:trans_probs}
    \vspace{0em}
\end{table}

% \begin{figure}[t]
%     \centering
%         \begin{minipage}[b]{0.45\textwidth}
%         \includesvg[inkscapelatex=false,width=\columnwidth]{Fig4.svg}
%         \vspace{-1em}
%         %\caption{}
%         \caption{\small Markov process progression of energy states $E_i$ for all UEs using the proposed PA schemes.\normalsize}
%         \vspace{0.6em}
%         \label{fig:markov_min}
%     \end{minipage}
%     \hspace{0.6em}
%     \begin{minipage}[b]{0.45\textwidth}
%         \includesvg[inkscapelatex=false,width=\columnwidth]{Fig5.svg}
%         \vspace{-1em}
%         \caption{\small Harvested energy $\EEk$ for initial $3\times 10^5 \tau_c$, whose portion is given in the inset axes of Fig.~\ref{fig:markov_min}.\normalsize}
%         %\vspace{-0.5em}
%         %\caption{Harvested energy  with optimized power coefficients for first $3\times 10^5$ coherence intervals. \,  $L=4$ APs, $N=72$ antennas}
%         \label{fig:initial_EH_bar}
%     \end{minipage}
% \end{figure}

\begin{figure}[t]
    \centering
        \begin{minipage}[b]{0.45\textwidth}
        \includegraphics[width=\columnwidth]{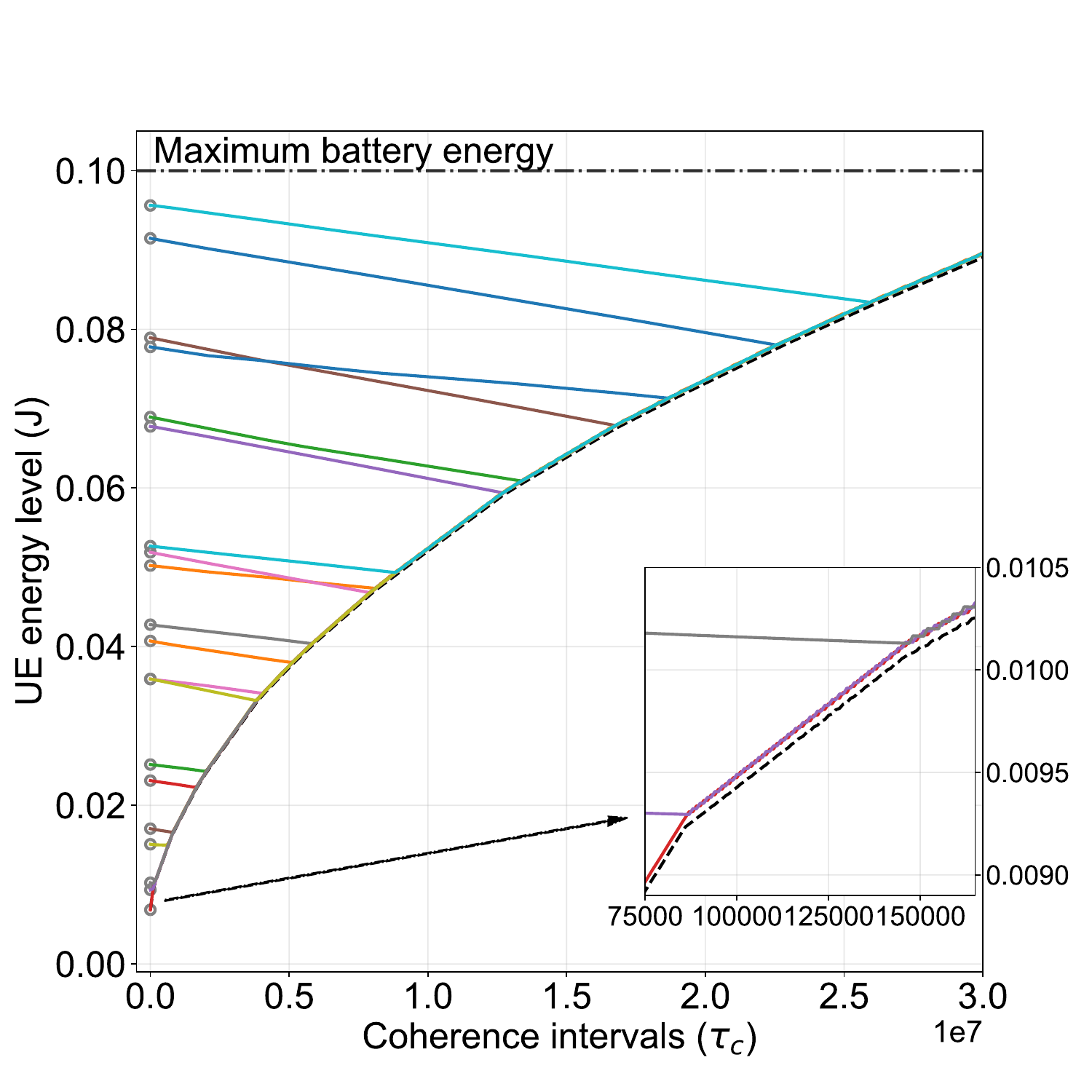}
        \vspace{-1em}
        %\caption{}
        \caption{\small Markov process progression of energy states $E_i$ for all UEs using the proposed PA schemes.\normalsize}
        \vspace{0.6em}
        \label{fig:markov_min}
    \end{minipage}
     \begin{minipage}[b]{0.45\textwidth}
     \vspace{-0.8em}
        \includegraphics[width=\columnwidth]{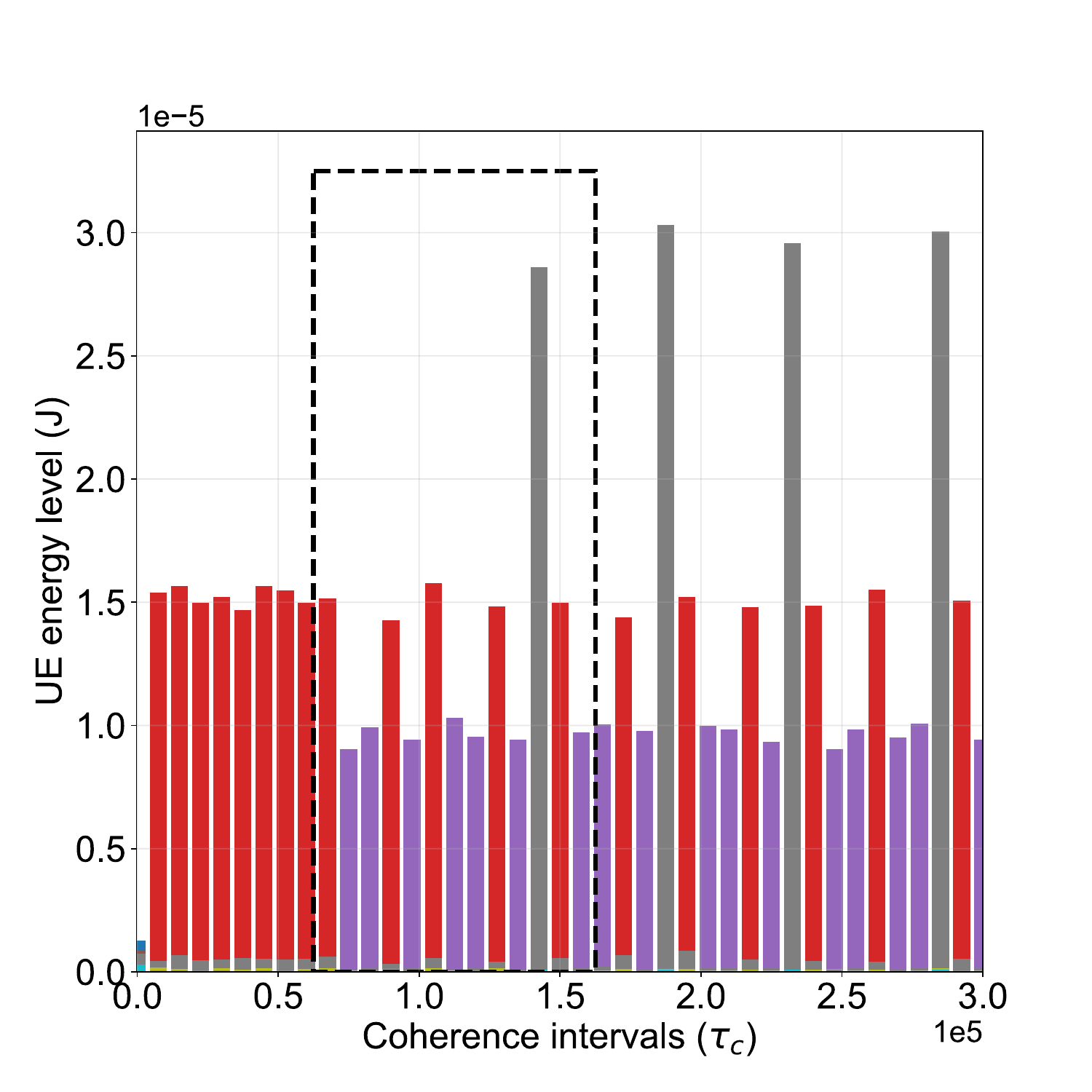}
        \vspace{-1em}
        \caption{\small Harvested energy $\EEk$ for initial $3\times 10^5 \tau_c$, whose portion is given in the inset axes of Fig.~\ref{fig:markov_min}.\normalsize}
        \label{fig:initial_EH_bar}
    \end{minipage}
\end{figure}

The temporal evolution of Markov process for the energy levels of all UEs in the network is shown in Fig. \ref{fig:markov_min}. This Monte-Carlo simulation spans $30$M number of coherence intervals $\tau_c$ ($\sim 17$ hours), with PA period $q$ carried out after every $2$ seconds considering the large-scale fading channel characteristics constant for this period. 
The proposed PA schemes are executed with the objective function based on the maximization of the minimum UE energy level among the network UEs, as given in \eqref{eq:optimization_problem_3}. The energy levels of all network UEs are shown in Fig. \ref{fig:markov_min} with solid colored lines for the optimization-based PA scheme and dashed black line for the heuristic PA scheme. Although most of the total power of the whole network during the EH phase is directed towards the minimum energy UE $k_1$, the majority of other UEs consume their stored energy for UL training and data transfer. Now, we focus on the inset axes of Fig. \ref{fig:markov_min}. It can be observed that the energy level of minimum UE $k_1$ (red solid line) crosses the UE $k_2$ with second minimum energy level (purple solid line). The EH optimization algorithm switches the directed power in between these two UEs until their energy level reaches the next lower energy UE $k_3$ (grey solid line). A more descriptive presentation of this concept is shown in Fig. \ref{fig:initial_EH_bar}, which shows the energy harvested in UEs during each PA period. Initially, a major portion of the power allocation is focused to UE $k_1$ with the minimum initial energy level (red bars). When this UE has harvested enough energy in battery storage and reaches the level of second minimum energy UE $k_2$, then the PA alternates between these two UEs (red and purple bars).

% \begin{figure}[t]
%     \centering
%     \begin{minipage}[b]{0.45\textwidth}
%         \includesvg[inkscapelatex=false,width=\columnwidth]{Fig6.svg}
%         \vspace{-1em}
%         \caption{\small Markov process evolution of a particular UE with minimum energy state for different $p_u$ levels.\normalsize}
%         \vspace{-0.2em}
%         \label{fig:steady_state}
%     \end{minipage}
%     \hspace{0.5em}
%     \begin{minipage}[b]{0.45\textwidth}
%         \includesvg[inkscapelatex=false,width=\columnwidth]{Fig7.svg}
%         \vspace{-1em}
%         \caption{\small Markov process evolution of energy state of all UEs with $p_u=3\mu$W and for different states: initial, $5$M $\tau_c$, and $30$M $\tau_c$.\normalsize}
%         \vspace{-0.5em}
%         \label{fig:bar_3uW}
%     \end{minipage}
%     % \label{fig:enter-label}
% \end{figure}

\begin{figure}[t]
    \centering
    \begin{minipage}[b]{0.45\textwidth}
        \includegraphics[width=\columnwidth]{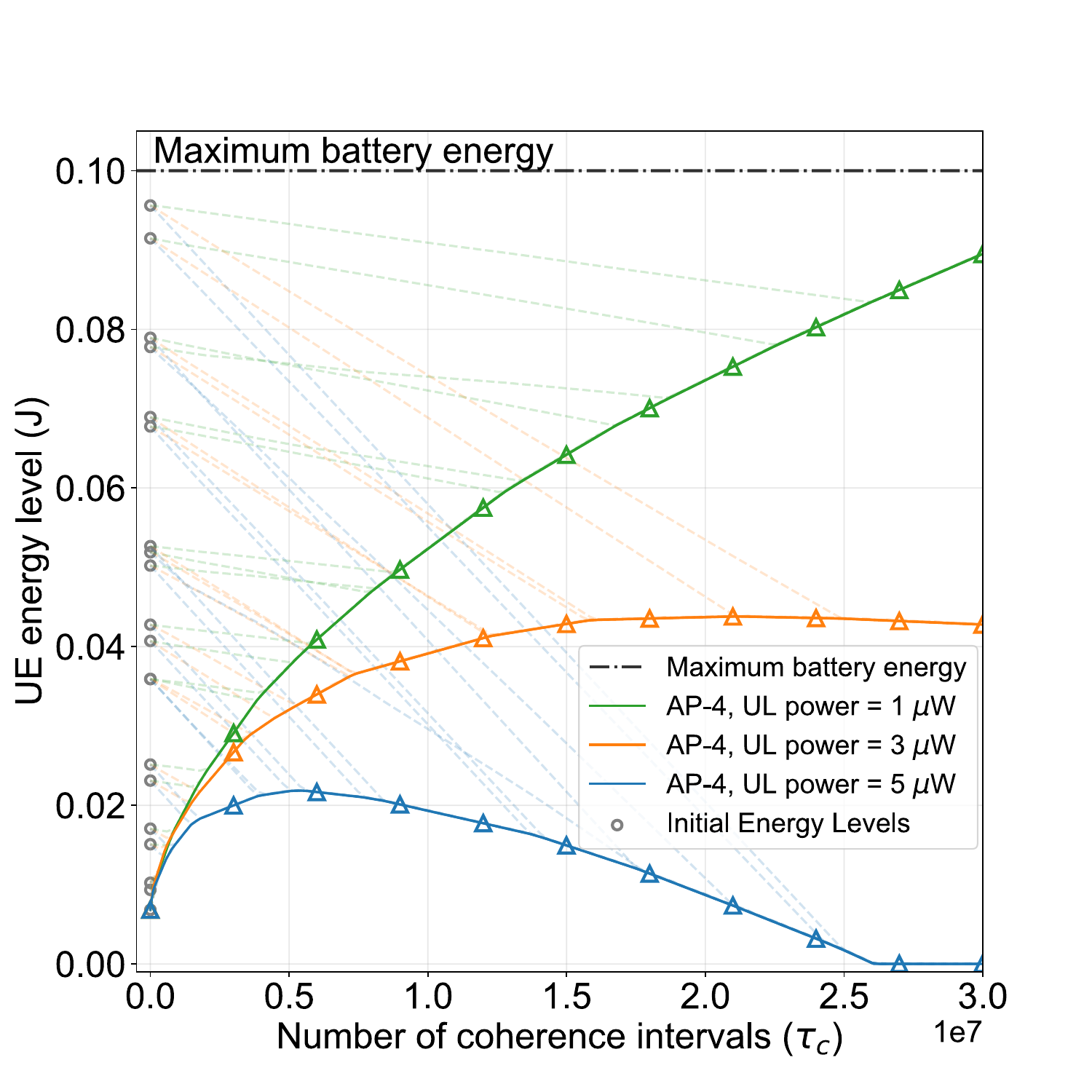}
        \vspace{-1em}
        \caption{\small Markov process evolution of a particular UE with minimum energy state for different $p_u$ levels.\normalsize}
        \vspace{-0.2em}
        \label{fig:steady_state}
    \end{minipage}
    \hspace{0.5em}
    \begin{minipage}[b]{0.45\textwidth}
        \includegraphics[width=\columnwidth]{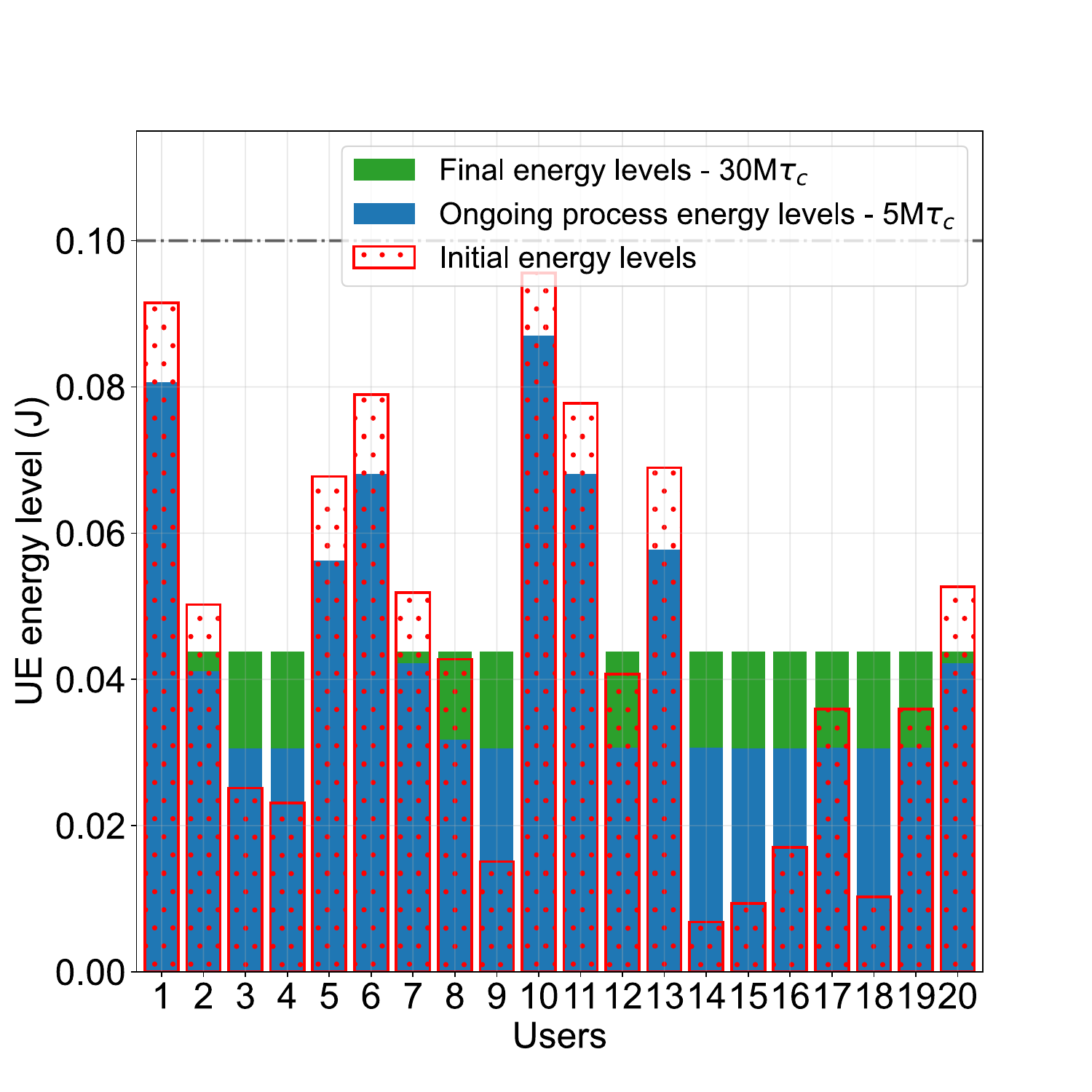}
        \vspace{-1em}
        \caption{\small Markov process evolution of energy state of all UEs with $p_u=3\mu$W and for different states: initial, $5$M $\tau_c$, and $30$M $\tau_c$.\normalsize}
        \vspace{-0.5em}
        \label{fig:bar_3uW}
    \end{minipage}
    % \label{fig:enter-label}
\end{figure}

The Markov process progression of UE energy level with minimum energy over $30$M coherence intervals is shown in Fig.~\ref{fig:steady_state} for three cases of UL data transfer/ pilot training power levels, $p_u \in\{ 1,3,5\}\mu$W, with AP configuration $L=4$ and total $LN=288$ service antennas. The choice of these levels for $p_u$ is reasonable, owing to the EH level in the proposed CF-mMIMO network which is completely reliant on WPT \cite{Demir,femenias}. It can be observed that the harvested energy for the case $p_u= 1\mu$W is significant enough that even the minimum energy UE approaches the maximum battery level as all the UEs have almost full battery level ($>90\%$ at $30$M $\tau_c$). In the $p_u= 5\mu$W case, EH is focused towards the minimum energy UE, while the other UEs consume their stored power for UL data and pilot training. After substantially long operational time, the EH process is not enough to support positive energy differential and after consuming all stored battery energy, the energy level of all UEs reaches zero. Now, we discuss the case of $p_u= 3\mu$W in which the final power levels after $30$ M coherence intervals saturate to $43.2\%$ level of the maximum battery storage. This equilibrium represents a balance between the energy harvested and the energy consumed within a single interval. We further analyze the temporal variation of energy level of all $K=20$ individual UEs in the network for this particular case in Fig.~\ref{fig:bar_3uW}. For presentation purposes, the red dotted bars show the initial UE energy levels at the start of the Markov process, while the blue bars show the energy levels after $5$M coherence intervals $\tau_c$ of network operation, while the green bars show the final energy level after $30$M $\tau_c$ have passed. It can be noticed that after 5M coherence intervals of network operation (blue bars), the UE energy levels with lower initial energy levels have improved significantly, whereas there is a slight drop in the energy levels of UEs with high initial energy. The reason behind this observation is that the power coefficients for EH have been optimized to favour the minimum energy UE during the progression of Markov process, while high initial energy UEs have very less harvested power during this period and have to consume their stored energy for UL data transmission and pilot training. After $30$M $\tau_c$, the final energy level (green bars) reaches the steady state at $43.2\%$ level.

% %-------------------------
% \begin{figure}[t]
%     \centering
%     \vspace{-1em}
%     \includesvg[inkscapelatex=false,width=\columnwidth]{Fig8.svg}
%         \vspace{-1em}
%         \caption{\small Markov process evolution of the UE with minimum energy state for different AP configurations at $p_u=3\mu$W.\normalsize}
%         \label{fig:EH_vs_APs}
% \end{figure}
% %-------------------------

%-------------------------
\begin{figure}[t]
    \centering
    \vspace{-1em}
    \includegraphics[width=\columnwidth]{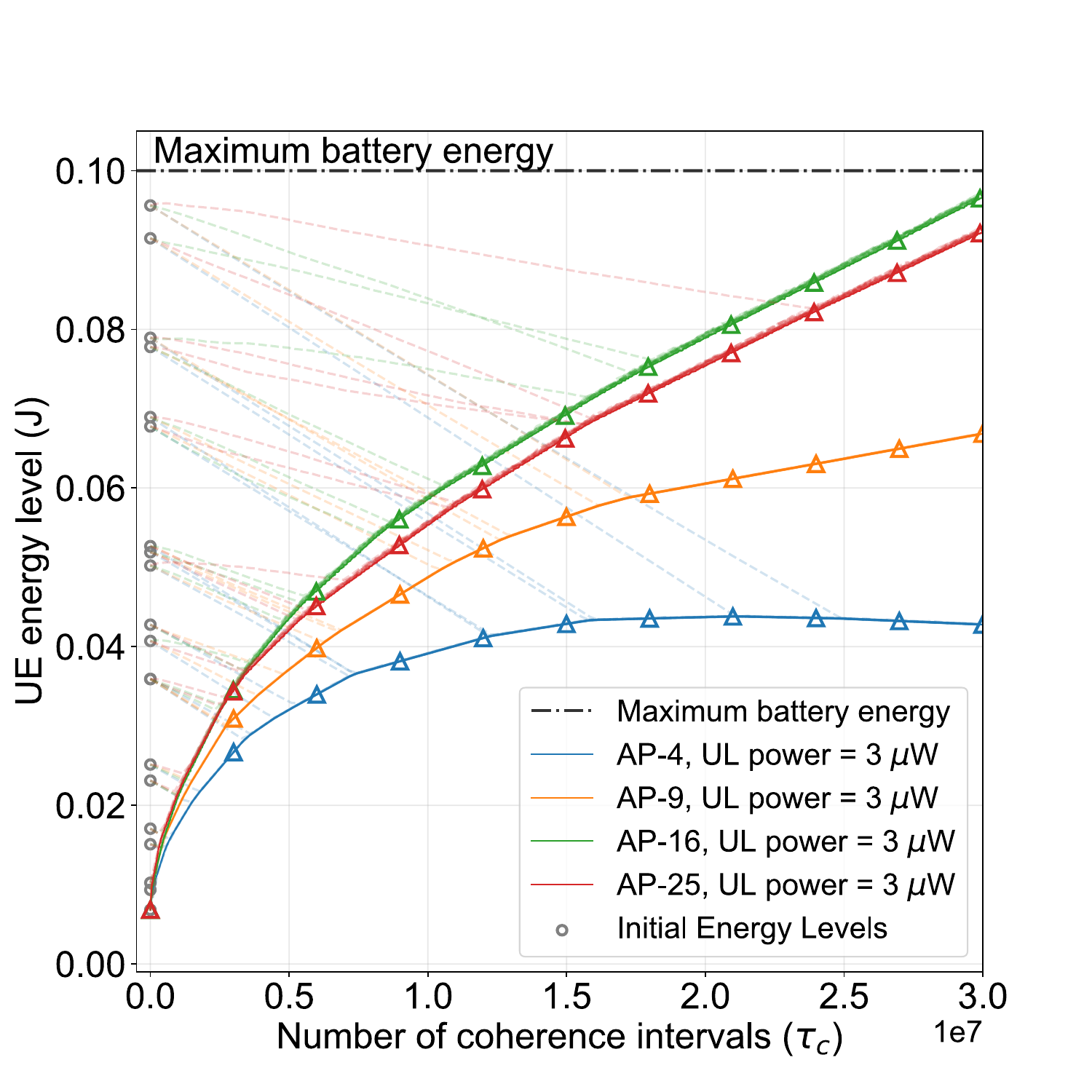}
        \vspace{-1em}
        \caption{\small Markov process evolution of the UE with minimum energy state for different AP configurations at $p_u=3\mu$W.\normalsize}
        \label{fig:EH_vs_APs}
\end{figure}
%-------------------------

% %-------------------------
% \begin{figure}[t]
%     \centering
%     \vspace{-1em}
%     \includesvg[inkscapelatex=false,width=\columnwidth]{Fig9.svg}
%         \vspace{-1em}
%         \caption{\small Mean of the network sum-EH ($\ME\{\sum \EEk\}$) and EH for minimum energy UE ($\ME\{\min \EEk\}$) for different AP configurations.\normalsize}
%         \label{fig:HE_vs_AP}
% \end{figure}
% %-------------------------
%-------------------------
\begin{figure}[t]
    \centering
    \vspace{-1em}
    \includegraphics[width=\columnwidth]{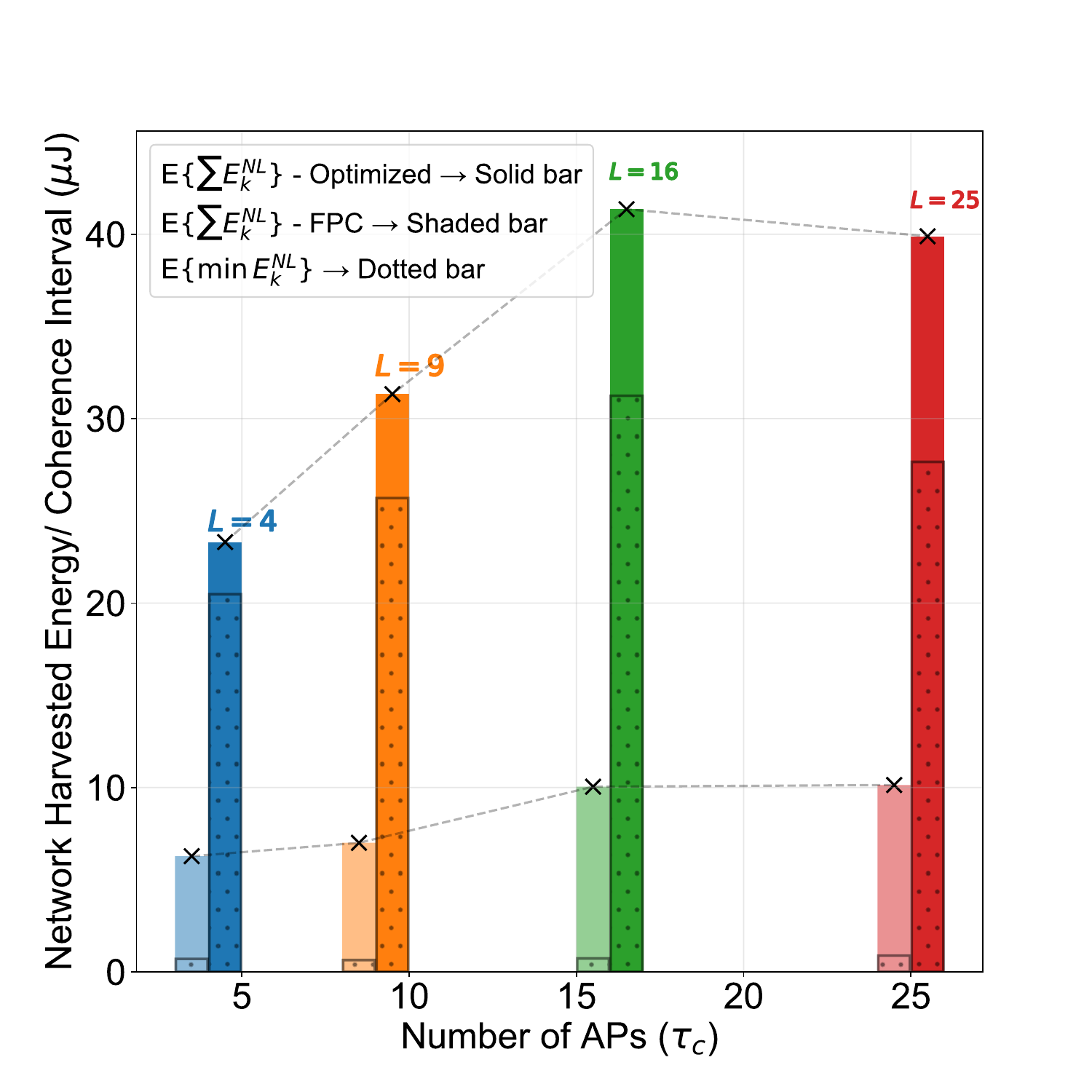}
        \vspace{-1em}
        \caption{\small Mean of the network sum-EH ($\ME\{\sum \EEk\}$) and EH for minimum energy UE ($\ME\{\min \EEk\}$) for different AP configurations.\normalsize}
        \label{fig:HE_vs_AP}
\end{figure}
%-------------------------

The relationship between spatial distribution of constant service antennas in CF-mMIMO and the amount of the harvested energy is investigated in Fig. \ref{fig:EH_vs_APs}, in connection with the proposed optimization algorithm for EH maximization for minimum energy UE. The results presented here show the battery energy level of minimum energy UE  over the period of $30$ M coherence intervals with Markov process based iterative  optimization, while consuming $p_u=3\mu$W for UL data transfer/ pilot training. As the number of APs increases from $L=4$ to $16$, a remarkable improvement can be observed in the minimum energy level at final state, with $55.1\%$ increase for $L=9$ ($67\%$ level) and $125\%$ increase for $L=16$ ($97.2\%$ level) in comparison to the $L=4$ case ($43.2\%$ level). On the other hand, 4.3\% reduction has been noticed for $L=25$ ($93\%$ level) in comparison to the $L=16$ case. The reason for this reduction can be attributed to the smaller number of available antennas at each AP, resulting in diminished beamforming gain and accordingly received power at particular UEs. 

Next, we evaluate the EH efficiency of the proposed optimization algorithm in comparison with FPC allocation for fair comparison~\cite{hien, Zhang:IoT:2022, Wenchao}. The results presented in Fig.~\ref{fig:HE_vs_AP} assess the mean of total network sum-EH ($\ME\{\sum \EEk\}$ - solid bars) and the mean EH for minimum energy UE ($\ME\{\min \EEk\}$ - dotted bars) with the proposed optimized EH displayed as filled bars relative to FPC based EH as shaded bars, over different AP configurations. It can be clearly visualized that the mean optimized EH for the network sum case is around four times more than the mean of FPC based EH for all AP setups, while for the minimum energy UE, this factor is at least 30 times. Moreover, it is evident that the major portion ($\sim 70-85\%$) of optimized EH is focused on the minimum energy UE (dotted vs solid bars) against the network sum within each PA period, which exhibits the successful manifestation of the proposed optimization algorithm anchored on the cardinal objective of maximization of the minimum energy UE.

As a final point, we present the performance comparison for two configurations $L=16$ and $36$ APs in Fig. \ref{fig:Heuristic}. We evaluate the two proposed PA schemes against another heuristic scheme (EPA) which allocates the total network power from all APs to the minimum energy UE $\bar{k}$ without factoring in the channel conditions. As a final point, we present the performance comparison for two configurations $L=16$ and $36$ APs in Fig. \ref{fig:Heuristic}. We evaluate the two proposed PA schemes against another heuristic scheme which allocates the total network power from all APs to the minimum energy UE $\bar{k}$ without factoring in the channel conditions. Mathematically speaking,
\vspace{0.4em}
%-------------------------
\begin{align}\label{eq:heuristic1}
    \Omega^{(q)}_{kl,H^{*}}=
    \begin{cases}
        & p_{t,N}/L \quad \forall \quad k=\bar{k}\\ 
        & 0 \quad \quad \quad \,\, \forall \quad  k\neq \bar{k}.
    \end{cases}
\end{align}
%-------------------------
% %-------------------------
% \begin{figure}[t]
%     \centering
%     \vspace{-1em}
%     \includesvg[inkscapelatex=false,width=0.9\columnwidth]{Fig10.svg}
%         \vspace{-1em}
%         \caption{\small Markov process progression of minimum energy states $E_i$ for all UEs with different PA schemes.\normalsize}
%         \label{fig:Heuristic}
% \end{figure}
% %-------------------------
%-------------------------
\begin{figure}[t]
    \centering
    \vspace{-1em}
    \includegraphics[width=0.9\columnwidth]{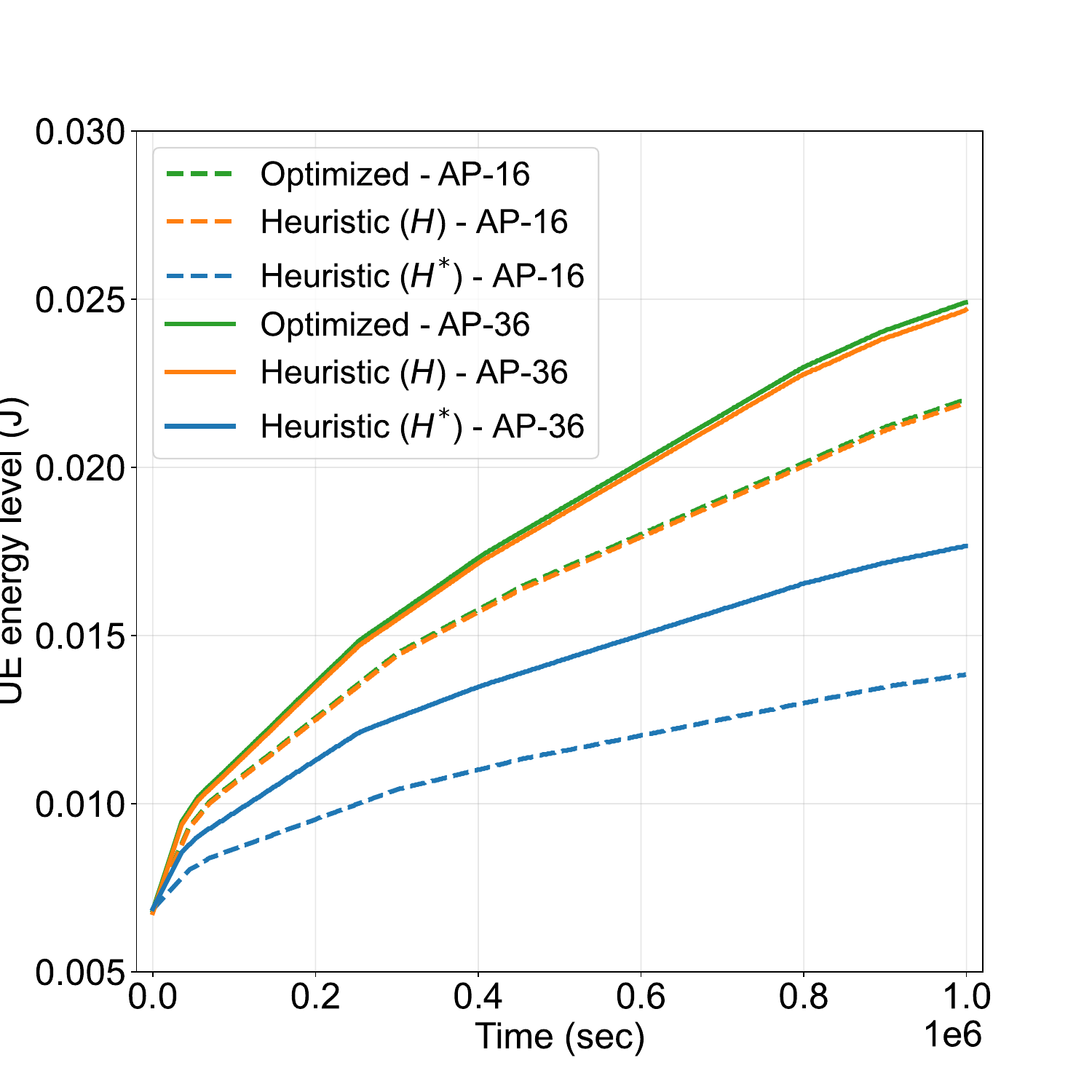}
        \vspace{-1em}
        \caption{\small Markov process progression of minimum energy states $E_i$ for all UEs with different PA schemes.\normalsize}
        \label{fig:Heuristic}
\end{figure}
%-------------------------
\vspace{-0.4em}

It can be observed that the proposed optimized power allocation serves as the upper bound for the EH performance in CF-mMIMO networks. Note that the performance of the heuristic CCPA given in \eqref{eq:heuristic2} is comparable to the optimized case with a mean error less than $1\%$ for both AP scenarios. This impressive performance has been achieved with significantly lower computational time compared to the optimized algorithm, as given in Table \ref{tab:comp_time} which covers all scenarios from $L=4$ to $36$ APs.  Furthermore, the EH performance of the heuristic EPA ($H^*$) is $62.9 \%$ and $70.9 \%$ of the optimized power allocation scheme for the scenarios $L=16$ and $L=36$ APs, respectively. From Table \ref{tab:comp_time}, we can notice that this heuristic scheme requires even less computational time than the other power allocation schemes.

%-------------------------
\begin{table}[t]
    \vspace{-1ex}
    \caption{\small Computational time (seconds) for various power allocation schemes. \normalsize}
    \vspace{-1ex}
    \centering
    \footnotesize
    %\begin{tabular}{|p{1.3cm}|p{1cm}|p{1cm}|p{1cm}|}
    \begin{tabular}{| >{\centering\arraybackslash}m{1.0cm} |>{\centering\arraybackslash}m{2.0cm}|>{\centering\arraybackslash}m{2.0cm}|>{\centering\arraybackslash}m{2.0cm}|}
        \hline
        \textbf{No. of APs} & \textbf{Optimized} & 
       % Heuristic without channel characteristic
       %Heuristic with channel characteristics
        \textbf{Heuristic EPA} & \textbf{Heuristic CCPA } \\
        \hline
        \textbf{4} & $0.6937$ s & $1.99\, \mu$s & $57.77\, \mu$s\\
        \hline
        \textbf{9} & $4.1718$ s & $7.15\, \mu$s & $102.39\, \mu$s\\
        \hline
        \textbf{16} & $17.8430$ s & $12.71\, \mu$s & $165.75\, \mu$s\\
        \hline
        \textbf{25} & $31.8838$ s & $20.64\, \mu$s & $278.22\, \mu$s\\
        \hline
        \textbf{36} & $97.6355$ s & $35.23\, \mu$s & $429.65\, \mu$s
        \\\hline
    \end{tabular}
    \vspace{-1em}
    \label{tab:comp_time}
    % \vspace{-1em}
\end{table}

%%%%%%%%%%%%%%%%%%%%%%%%%%%%%%%%%%%%%%
\section{Conclusion}
\label{sec:conclusion}
%%%%%%%%%%%%%%%%%%%%%%%%%%%%%%%%%%%%%%
We have provided a robust framework for optimized EH CF-mMIMO networks using a Markov process evolution. By integrating discrete energy state transitions with comprehensive statistical modeling and power allocation techniques, the proposed methods enhanced the energy levels of UE devices across the network significantly, specifically the minimum energy UE in each PA period. The simulation results corroborated the efficacy of the Gamma distribution in characterizing the harvested energy, demonstrating a four-fold improvement in the minimum UE energy levels through both dynamic PA approaches involving optimized PA using the PD-IPM and channel estimates based heuristic PA. This advancement not only addresses the critical challenge of sustainable energy supply in IoT networks but also underscores the potential of CF-mMIMO systems in overcoming path loss limitations and enhancing spatial resource diversification. The effect of this factor, combined with the impact of the proposed PA frameworks, has been observed as a substantial enhancement in the network's minimum energy levels with increasingly dispersed AP configurations up to a certain level. Beyond this point, the effective beamforming gain diminishes due to the reduced number of service antennas available at each AP. 

Future research can build on these findings by exploring further refinements in the PA approaches over Markov process, such as leveraging machine learning techniques for maximizing the energy efficiency along with improved resilience to channel impairments. The applicability of these research themes of Markov process based CF-mMIMO system to other wireless communication paradigms, like ultra-dense networks and heterogeneous networks, where the high nodal density and varying network conditions entail unique challenges and opportunities for optimized and dynamic energy management.

%\newpage
\appendices
\section{Useful Lemmas}~\label{apx:lemms}
\vspace{-2em}
%%%%%%%%%%%%%%%%%%%%%%%%%%%%%%%%%%%%%%%%%%%
\begin{Lemma}~\label{lemma:expxhAx}~\cite[Eq. (15.14)]{Kay}: 
Let ${\qu}$ be a complex $n \times 1$ random vector with mean $\boldsymbol{\mu}$ and covariance matrix $\boldsymbol{\Sigma}$ and let $\qB$ be an $n\times n$ positive definite Hermitian matrix. Then, we have
$\Ex\{{\qu}^H \qB  {\qu} \} = \boldsymbol{\mu}^H \qB \boldsymbol{\mu} + \trace (\qA \boldsymbol{\Sigma}).$
\end{Lemma}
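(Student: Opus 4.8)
The plan is to split $\qu$ into its deterministic mean and a zero-mean fluctuation, expand the quadratic form, and discard the cross terms by linearity of expectation. Writing $\qu = \boldsymbol{\mu} + \tilde{\qu}$ with $\Ex\{\tilde{\qu}\} = \mathbf{0}$ and $\Ex\{\tilde{\qu}\tilde{\qu}^H\} = \boldsymbol{\Sigma}$, I would expand
\begin{equation}
\qu^H \qB \qu = \boldsymbol{\mu}^H \qB \boldsymbol{\mu} + \boldsymbol{\mu}^H \qB \tilde{\qu} + \tilde{\qu}^H \qB \boldsymbol{\mu} + \tilde{\qu}^H \qB \tilde{\qu}.
\end{equation}
Taking expectations, the two cross terms vanish because $\Ex\{\tilde{\qu}\} = \mathbf{0}$ and $\qB,\boldsymbol{\mu}$ are deterministic, leaving only the constant term $\boldsymbol{\mu}^H \qB \boldsymbol{\mu}$ and the expected fluctuation quadratic form $\Ex\{\tilde{\qu}^H \qB \tilde{\qu}\}$.

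For the surviving term, the key step is the trace identity: since $\tilde{\qu}^H \qB \tilde{\qu}$ is a scalar, it equals its own trace, and the cyclic property of the trace gives $\tilde{\qu}^H \qB \tilde{\qu} = \trace(\qB \tilde{\qu} \tilde{\qu}^H)$. Interchanging the (finite-dimensional, hence linear) trace with the expectation then yields
\begin{equation}
\Ex\{\tilde{\qu}^H \qB \tilde{\qu}\} = \trace\!\left(\qB \,\Ex\{\tilde{\qu}\tilde{\qu}^H\}\right) = \trace(\qB \boldsymbol{\Sigma}).
\end{equation}
Combining the two pieces produces $\Ex\{\qu^H \qB \qu\} = \boldsymbol{\mu}^H \qB \boldsymbol{\mu} + \trace(\qB \boldsymbol{\Sigma})$. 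Here the matrix inside the trace is the very same $\qB$ appearing in the quadratic form; the symbol $\qA$ written in the statement is a typographical slip and should read $\qB$.

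There is no genuine obstacle here, as this is a classical identity for the expectation of a Hermitian quadratic form. The only points demanding care are the bookkeeping of the conjugate transpose in the complex-valued setting and the justification of commuting $\trace$ with $\Ex$, which is immediate since $\trace(\qB \tilde{\qu}\tilde{\qu}^H)$ is a finite linear combination of the entries of $\tilde{\qu}\tilde{\qu}^H$. I note that positive definiteness and Hermiticity of $\qB$ are not actually required for the algebra; they serve to guarantee that the quadratic form is real and non-negative, but the identity itself holds for any $\qB$ under the stated moment assumptions on $\qu$.
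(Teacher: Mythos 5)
Your proof is correct and complete: the mean-plus-fluctuation decomposition, the vanishing of the cross terms, and the trace--expectation interchange are exactly the standard argument for this classical identity, and the paper itself states the lemma without proof (citing Kay's textbook), so there is nothing to diverge from. You are also right that the $\qA$ inside the trace in the statement is a typographical slip for $\qB$, and that Hermiticity and positive definiteness of $\qB$ are not needed for the identity itself.
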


\begin{Lemma}~\label{lemma:expxhAxsq}~\cite[Lemma 2]{emil}: 
Let ${\qu}$ be a complex $n \times 1$ random vector with mean $\boldsymbol{\mu}$ and covariance matrix $\boldsymbol{\Sigma}$ and let $\qB$ be an $n\times n$ positive definite Hermitian matrix. Then, we have
%-------------------------
\begin{align}
 \ME\{\lvert\qu^H \qB \qu\rvert^2\}=\lvert \mathrm{tr}(\pmb{\Sigma}\qB )\rvert^2+\mathrm{tr}(\pmb{\Sigma}\qB\pmb{\Sigma}\qB^H).   
\end{align}
%-------------------------
\end{Lemma}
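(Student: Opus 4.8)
The plan is to prove the identity for a zero-mean circularly-symmetric complex Gaussian vector $\qu\sim\mathcal{N}_C(\boldsymbol{0},\boldsymbol{\Sigma})$, which is exactly the setting (free of any $\boldsymbol{\mu}$-dependence, matching the stated formula) in which the result is invoked throughout the paper for quantities such as $\tgkl$. My preferred route is a whitening-plus-diagonalization argument that sidesteps index-level fourth-moment bookkeeping. First I would factor $\boldsymbol{\Sigma}=\boldsymbol{\Sigma}^{1/2}\boldsymbol{\Sigma}^{1/2}$ (possible since $\boldsymbol{\Sigma}\succeq\boldsymbol{0}$) and substitute $\qu=\boldsymbol{\Sigma}^{1/2}\qv$ with $\qv\sim\mathcal{N}_C(\boldsymbol{0},\qI)$, so that $\qu^H\qB\qu=\qv^H\widetilde{\qB}\qv$, where $\widetilde{\qB}\triangleq\boldsymbol{\Sigma}^{1/2}\qB\boldsymbol{\Sigma}^{1/2}$ is Hermitian.

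Next I would diagonalize $\widetilde{\qB}=\qU\boldsymbol{\Lambda}\qU^H$ with $\qU$ unitary and $\boldsymbol{\Lambda}=\diag(\lambda_1,\ldots,\lambda_n)$ real, and set $\qw=\qU^H\qv$. By unitary invariance of the standard complex Gaussian, $\qw\sim\mathcal{N}_C(\boldsymbol{0},\qI)$, so $\qu^H\qB\qu=\sum_n\lambda_n|w_n|^2$ with the $|w_n|^2$ i.i.d. unit-mean exponentials. Since $\widetilde{\qB}$ is Hermitian the $\lambda_n$ are real, hence the quantity inside the modulus is real and I only need $\ME\{(\sum_n\lambda_n|w_n|^2)^2\}$. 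Expanding and using $\ME\{|w_n|^2\}=1$, $\ME\{|w_n|^4\}=2$, and independence across $n$, this collapses to $\sum_n\lambda_n^2+(\sum_n\lambda_n)^2$.

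Finally I would translate the eigenvalue sums back into traces: $\sum_n\lambda_n=\trace(\widetilde{\qB})=\trace(\boldsymbol{\Sigma}\qB)$ and $\sum_n\lambda_n^2=\trace(\widetilde{\qB}^2)=\trace(\boldsymbol{\Sigma}\qB\boldsymbol{\Sigma}\qB)$, where the cyclic property of the trace absorbs the $\boldsymbol{\Sigma}^{1/2}$ factors. Using $\qB=\qB^H$ rewrites the second trace as $\trace(\boldsymbol{\Sigma}\qB\boldsymbol{\Sigma}\qB^H)$, while the realness of $\trace(\boldsymbol{\Sigma}\qB)$ (a product of two Hermitian matrices) lets me write $(\sum_n\lambda_n)^2=|\trace(\boldsymbol{\Sigma}\qB)|^2$, yielding the claim.

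As an alternative I would keep the direct route: expand $|\qu^H\qB\qu|^2=\sum_{i,j,k,l}B_{ij}B_{lk}^*\,u_i^*u_ju_k^*u_l$ and invoke the complex Isserlis (Wick) rule $\ME\{u_i^*u_ju_k^*u_l\}=\Sigma_{ji}\Sigma_{lk}+\Sigma_{li}\Sigma_{jk}$, where circular symmetry is precisely what kills the anomalous pairing $\ME\{u_i^*u_k^*\}\ME\{u_ju_l\}$; reassembling the two surviving double sums into $|\trace(\boldsymbol{\Sigma}\qB)|^2$ and $\trace(\boldsymbol{\Sigma}\qB\boldsymbol{\Sigma}\qB^H)$ then finishes it. The main obstacle in either route is the same: rigorously justifying the Gaussian fourth-moment step. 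In the first route this is the computation $\ME\{|w_n|^4\}=2$ together with unitary invariance of $\qv$; in the second it is the correct application of the complex Wick theorem with the anomalous term vanishing. Everything after that is routine cyclic-trace bookkeeping exploiting the Hermitian structure of $\qB$ and $\boldsymbol{\Sigma}$.
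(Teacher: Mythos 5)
Your proof is correct in both of its variants, but note that the paper itself offers no proof of this statement at all: Lemma~\ref{lemma:expxhAxsq} is imported verbatim from the cited reference \cite{emil} and used as a black box in Appendix~\ref{Proof:Lemma:EIK}. So the comparison is between your self-contained derivation and a citation. Your whitening-plus-diagonalization route (reduce to $\sum_n \lambda_n |w_n|^2$ with $|w_n|^2$ i.i.d.\ unit-mean exponentials, then use $\ME\{|w_n|^4\}=2$ and cyclic traces) and your Wick-theorem route both reproduce the identity exactly, and your trace bookkeeping — realness of $\trace(\pmb{\Sigma}\qB)$, and $\trace(\pmb{\Sigma}\qB\pmb{\Sigma}\qB)=\trace(\pmb{\Sigma}\qB\pmb{\Sigma}\qB^H)$ via $\qB=\qB^H$ — is sound. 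What your proof buys beyond the citation is that it exposes two hypotheses the paper's statement of the lemma garbles: first, the identity requires $\qu$ to be a \emph{zero-mean circularly symmetric Gaussian} vector, whereas the lemma as stated mentions a mean $\boldsymbol{\mu}$ that never appears in the formula and never says ``Gaussian'' (for a general vector with given first two moments the fourth moment is simply not determined); you correctly restrict to the zero-mean Gaussian case, which is indeed the only case the paper ever uses ($\qu=\thgklc$, $\tgklc$, or $\teklc$, all zero-mean Gaussian). Second, your argument needs only that $\qB$ be Hermitian, not positive definite — a point that actually matters here, since the paper applies the lemma to rank-one matrices such as $\qB=\bhklc\bhklt$, which are positive semi-definite and singular, so the lemma's stated positive-definiteness hypothesis is violated in every one of its applications while your proof still covers them.
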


\begin{Lemma}~\label{lemma:moments}~\cite[Eq. (2.10)]{verdu}: 
For a central complex Wishart matrix \(\qW= \qA \qA^H \sim \mathcal{W}_m(n,\qI)\), the following identity holds:
%-----------
%\vspace{-1em}
\begin{equation*}
    \centering
    \ME\{\mathrm{det} \qW^k\}=\prod^{m-1}_{l=0} \dfrac{\Gamma(n-l+k)}{\Gamma(n-l)}.
\end{equation*}
%-----------
For the special case, when $\qA$ is replaced by vector $\tgkl^H$, we have $\qW= \tgkl^H\tgkl$, and $m=1, n=N$. Then, we have the following result
%----------
\begin{equation}
    %\ME\{\mathrm{det} \qW^k\}=
    \ME\{\lVert\tgkl\rVert^{2k}\}=\dfrac{\Gamma(n+k)}{\Gamma(n)}.
\end{equation}
%-----------
\end{Lemma}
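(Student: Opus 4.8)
The plan is to establish the general determinant-moment identity first and then recover the stated scalar result as the $m=1$ specialization. The cleanest route uses the probability density of the complex central Wishart matrix. For $\qW = \qA\qA^H \sim \mathcal{W}_m(n,\qI)$ with $n \geq m$, the density over the cone of $m \times m$ positive-definite Hermitian matrices is
\begin{equation}
f(\qW) = \frac{(\det \qW)^{n-m}\, e^{-\trace \qW}}{\tilde{\Gamma}_m(n)},
\end{equation}
where $\tilde{\Gamma}_m(n) = \pi^{m(m-1)/2}\prod_{l=0}^{m-1}\Gamma(n-l)$ is the complex multivariate Gamma function. First I would write the $k$-th determinant moment as the integral
\begin{equation}
\ME\{(\det \qW)^k\} = \frac{1}{\tilde{\Gamma}_m(n)}\int (\det \qW)^{n+k-m}\, e^{-\trace \qW}\, d\qW,
\end{equation}
and then recognize the integrand as a scaled Wishart density with parameter $n+k$ in place of $n$.

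The key step is the integral identity $\int (\det \qW)^{a-m} e^{-\trace \qW}\, d\qW = \tilde{\Gamma}_m(a)$, valid for $\Re(a) > m-1$ and taken over the positive-definite cone. Applying it with $a = n+k$ collapses the integral to $\tilde{\Gamma}_m(n+k)$, so that
\begin{equation}
\ME\{(\det \qW)^k\} = \frac{\tilde{\Gamma}_m(n+k)}{\tilde{\Gamma}_m(n)} = \prod_{l=0}^{m-1}\frac{\Gamma(n-l+k)}{\Gamma(n-l)},
\end{equation}
since the $\pi^{m(m-1)/2}$ prefactors cancel. This is exactly the claimed product form.

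For the stated special case I would set $m=1$, $n=N$, and $\qA = \tgkl^H$, so that $\qW = \tgkl^H\tgkl = \lVert \tgkl \rVert^2$ is scalar and $\det \qW = \lVert \tgkl \rVert^2$. The product then reduces to the single $l=0$ term, giving $\ME\{\lVert \tgkl \rVert^{2k}\} = \Gamma(N+k)/\Gamma(N)$ directly. Alternatively, and more transparently, since $\tgkl \sim \mathcal{N}_C(\boldsymbol{0},\qI_N)$ the quantity $\lVert \tgkl \rVert^2 = \sum_{t=1}^{N}\lvert [\tgkl]_t\rvert^2$ is a sum of $N$ i.i.d.\ unit-mean exponential variables, hence Gamma-distributed with shape $N$ and unit scale; its $k$-th moment is $\Gamma(N+k)/\Gamma(N)$, matching the specialization and avoiding the multivariate machinery entirely.

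The main obstacle is the careful justification of the multivariate Gamma integral over the Hermitian positive-definite cone, including the convergence condition $\Re(n+k) > m-1$ and the measure-theoretic setup of integration against Lebesgue measure on the cone; one must also verify $n \geq m$ so that the Wishart density exists. Everything after that is algebraic cancellation, and the scalar case can in any event be obtained independently through the Gamma-distribution argument.
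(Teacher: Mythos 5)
Your derivation is correct, but note that the paper offers no proof of this lemma at all: it is imported verbatim by citation from Eq.~(2.10) of the referenced random-matrix text, so there is nothing internal to compare against. What you supply is a legitimate self-contained justification. The multivariate route is sound: the expectation $\ME\{(\det \qW)^k\}$ against the density $(\det\qW)^{n-m}e^{-\trace \qW}/\tilde{\Gamma}_m(n)$ reduces, via the complex Siegel-type integral $\int_{\qW\succ 0}(\det\qW)^{a-m}e^{-\trace\qW}\,d\qW=\tilde{\Gamma}_m(a)$ with $a=n+k$, to the ratio $\tilde{\Gamma}_m(n+k)/\tilde{\Gamma}_m(n)$, and the $\pi^{m(m-1)/2}$ factors cancel to leave the stated product. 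Your convergence caveat $\Re(n+k)>m-1$ is automatically satisfied in every use the paper makes of the lemma, since only positive integer $k$ appears. It is worth emphasizing that your second, elementary argument is really all that is needed here: the appendix only ever invokes the $m=1$ specialization, where $\lVert\tgkl\rVert^2$ is a sum of $N$ i.i.d.\ unit-mean exponentials, hence $\mathrm{Gamma}(N,1)$, whose $k$-th moment is $\Gamma(N+k)/\Gamma(N)$ by a one-line computation. That observation proves everything the paper uses without any positive-definite-cone measure theory, while your general argument buys the full matrix statement that the lemma advertises but the paper never exploits.
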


\section{Proof of Proposition~\ref{Lemma:EIK}}
\label{Proof:Lemma:EIK}
%%%%%%%%%%%%%%%%%%%%%%%%%%%%%%%%%%%%%%%%%%%%%%%%%%%%%%
%

By taking the expectation of~\eqref{eq:harvested_power} over the small-scale fading, we get
%-----------------------
\begin{align}
    \ME\big\{\!\IEk\!\big\}
    \!= \!\! \sul \slp\suik \!
    \kmrti \!\kmrtip\!\sqrt{\Omega_{il}\Omega_{il'}}  \ME\big\{\gklt \hgilc \hgiltp \gklcp \big\}.
\label{eq:harvested_power_final}
\end{align}
%-----------------------

To simplify the solution, we de-synthesize this composite harvested power into two parts: The coherent signal from the same AP for the correlated UEs ($l'= l$) and the non-coherent signal ($l'\neq l$).

\subsubsection{Compute the coherent case $l'= l$}: The terms for this case can be summed as:  
% in~\eqref{eq:harvested_power_final} can be obtained as
%-------------------------
\begin{align}~\label{eq:mean:case1}
        \ME \big\{\gklt \hgilc \hgilt \gklc\big\}
         & =\, \, \bgklt\bgilc\bgilt\bgklc+ \ME\{\bgklt\thgilc\thgilt\bgklc\}
        \nonumber\\
        &\hspace{-5em}+\!
        \ME\{\thgklt\bgilc\bgilt\thgklc\}
        \!+\!\ME\{\thgklt\thgilc\thgilt\thgklc\}\!+\! \ME\{\teklt\bgilc\bgilt\teklc\}\nonumber\\
        &\hspace{-5em}+\!\ME\{\teklt\thgilc\thgilt\teklc\}\!+\! \ME\{\bgklt\bgilc\thgilt\thgklc\} \!+\! \ME\{\thgklt\thgilc\bgilt\bgklc\}
        \nonumber\\
        &\hspace{-5.5em} = \! N\big(N \bkilsq \oiklsq \!+\! \bklq \gamil\!+\! \bilq \gamkl +\aiksq (N+1) \gamklsq \nonumber\\
        &\hspace{-5em}+\!(\bilq+\gamil) (\bkl-\gamkl) \!+\! 2 N \gamkl \aik \bkil \oikl \big),
\end{align}
%---------------------------
where we have used Lemma \ref{lemma:moments} to calculate $\ME\{\thgklt\thgilc\thgilt\thgklc\}= \aiksq \ME\{\thgklt\thgklc\thgklt\thgklc\}=\aiksq N(N+1) \gamklsq$.

\subsubsection{Compute the non-coherent case $l'\neq l$}: Following the independence of the channels, we get 
%--------------
\begin{align}~\label{eq:mean:case2}
    \ME\left\{\gklt \hgilc \hgiltp \gklcp \right\} &= \ME\left\{ \gklt \hgilc\right\}\ME\big \{\hgiltp \gklcp\big\}\nonumber\\
    &\hspace{-6em}= N^2(\bkil \oikl+ \aik \gamkl)(\bkilp \oiklp+\aik \gamklp).
\end{align}
% %------------
To this end, by substituting~\eqref{eq:mean:case1} and~\eqref{eq:mean:case2} into~\eqref{eq:harvested_power_final}, the desired result in~\eqref{eq:mean:final} can be obtained. 

Now, we focus on the calculation of the variance of harvested power, we can write as
%---------------------------
\begin{align}
    \MV\Big\{\IEk\Big\}
    = & \sul \suik \left(\kmrti\right)^4 \Omega^2_{il} \MV\left\{\gklt \hgil^* \hgilt \gklc \right\}+\nonumber\\
    & \sul \slpL \suik \left(\kmrti \kmrtip\right)^2 {\Omega_{il}\Omega_{il'}}\nonumber\\
    &\times\MV \big\{ \gklt \hgil^* \hgiltp \gklcp\big\}.
\label{eq:harvested_power_var_final}
\end{align}
%---------------------------

In order to segregate the zero-variance terms, we define $R_{k,i,l}= \gklt \hgil^*$, which is used to represent $\gklt \hgil^* \hgilt \gklc = R_{k,i,l}R^H_{k,i,l}$. This parameter can be expressed as
%------------------------
\begin{align}~\label{eq:Rkil}
    \begin{split}
       R_{k,i,l}
       &= \bgkl^T\bgilc+\aik\bgkl^T\thgklc+\sqrt{\bkl} \tgklt\bgilc+\aik \sqrt{\bkl} \tgklt\thgklc.
    \end{split}
\end{align}
%------------------------

\subsubsection{Compute the coherent term $\MV\left\{\gklt \hgil^* \hgilt \gklc \right\}$} \label{subsec:coherent}
Firstly, we will evaluate variance for coherent term of harvested power in \eqref{eq:harvested_power_var_final}. Noticing that $\MV\{\bgklt\bgilc\bgilt\bgklc\}=0$, we have
%---------------------
\vspace{-0.3em}
\begin{align}
\label{eq:rtimesrh}
    \MV\{R_{k,i,l}R^H_{k,i,l}\}=& \, 2 \, \Big(\MV\big\{\aik\bgklt\thgklc\bgilt\bgklc\big\}+  
    \nonumber\\
    &\hspace{-7em}
    \MV\big\{\sqrt{\bkl} \tgklt\bgilc\bgilt\bgklc\big\}+ \MV\big\{\aik \sqrt{\bkl} \tgklt\thgklc\bgilt\bgklc\big\}+\nonumber\\
    &\hspace{-7em}
    \!\MV\big\{\aik\sqrt{\bkl} \tgklt\bgilc\thgklt\bgklc\big\}\!+
    \!\MV\big\{\aiksq \sqrt{\bkl} \tgklt\thgklc\thgklt\bgklc\big\}+\nonumber\\
    &\hspace{-7em}
    \!\MV\big\{\aik \bkl \tgklt\thgklc \bgilt \tgklc\big\}\Big)+
    \!\MV\big\{\aiksq\bgklt\thgklc\thgklt\bgklc\big\}+ \nonumber\\
    &\hspace{-7em}
    \MV\big\{\bkl \tgklt\bgilc \bgilt \tgklc\big\}+\MV\big\{\aiksq \bkl \tgklt\thgklc \thgklt\tgklc\big\}.
\end{align}
%--------------

Now, we will proceed to compute each individual term in~\eqref{eq:rtimesrh}. The first term can be derived
%--------------
\vspace{-0.3em}
\begin{align}
    \MV\{\aik\bgklt\thgklc\bgilt\bgklc\}&= \aiksq \bklq \bkilsq  \ME\{ \lvert\bhklt\thgklc\bhilt\bhklc\rvert^2\}\nonumber\\
    &\hspace{0em}
    =\aiksq \bklq \bkilsq  N \oiklsq  \ME\{ \bhklt\thgklc\thgklt\bhklc\}\nonumber \\
    &= N^3 \aiksq \bklq \bkilsq \oiklsq \gamkl,
\end{align}
%------------------------
whereas, the second term in~\eqref{eq:rtimesrh} can be readily calculated as 
%------------------------
\vspace{-0.3em}
\begin{align}
    \MV\{\sqrt{\bkl} \tgklt\bgilc\bgil^T\bgklc\}= N^3 \bkl \bkilsq \oiklsq \bilq.
\end{align}
%------------------------

The third term in \eqref{eq:rtimesrh} can be obtained as
%---------------------
\vspace{-0.3em}
\begin{align}
    \MV\{\aik \sqrt{\bkl} \tgklt\thgklc\bgilt\bgklc\}&=\aiksq \bkl \bkilsq \ME\{\lvert\tgklt\thgklc\bhilt\bhklc\rvert^2\}\nonumber\\
    &\hspace{-3em}=\aiksq \bkl \bkilsq N \oiklsq \ME\{\tgklt\thgklc\thgklt\tgklc\}\nonumber\\
    &\hspace{-3em}= N^2 \aiksq \bkilsq \oiklsq \gamkl ( \bkl\!+\!N\gamkl),\!\!
\end{align}
%---------------------
where we have used the following result (from Lemma~\ref{lemma:moments})
%---------------------
\vspace{-0.3em}
\begin{align}
  \ME\{\tgklt\thgklc \thgklt\tgklc\}  &= \ME\{(\teklt+\thgklt)\thgklc \thgklt(\teklc+\thgklc)\} /\bkl
  \nonumber\\
  &
  = 
  (\ME\{\teklt\thgklc \thgklt\teklc\} 
  +
  \ME\{\Vert\thgkl\Vert^4\})/\bkl\nonumber\\
  &
  = N\gamkl (\bkl+N\gamkl)/\bkl.
  % &= (N \gamkl \ME\{\teklt\teklc\} + N(N+1)\gamklsq)/\bkl
  % \nonumber\\
    % &= (N^2\gamkl (\beta_{kl}-\gamkl) + N(N+1)\gamklsq))/\bkl
    % \nonumber\\
\label{eq:var_t_16_2}
\end{align}
%---------------------
The fourth term in \eqref{eq:rtimesrh} can be computed as
\begin{align}
    \MV\{\aik\sqrt{\bkl} \tgklt\bgilc\thgklt\bgklc\}
    &= \aiksq \bkl \bkilsq \ME\{\lvert\tgklt\bhil^*\thgklt\bhkl^*\rvert^2\}    \nonumber\\
    &= \aiksq \bkl \bkilsq N^2 \gamkl.
\end{align}
%-----------------
Now, we focus on the derivation of the fifth term in~\eqref{eq:rtimesrh}, which can be obtained as
%--------------
\begin{align}
     \!\!\MV\{\aiksq \sqrt{\bkl} \tgklt\thgklc\thgklt\bgklc\} &= \aiktt \bkl \bklq\ME\{\lvert\tgklt\thgklc\thgklt\bhklc\rvert^2\}\nonumber\\
     &\hspace{-8em}
     =  \aiktt  \bklq     N(N+1)\gamma_{kl}^2 \big((N+1)\gamkl + \beta_{kl} \big).
\label{eq:var_t_8}
\end{align}
%-----------------------------
In~\eqref{eq:var_t_8}, we have used the following derivation (by applying Lemma~\ref{lemma:moments}) 
%----------------
\begin{align}
\ME\{\lvert\tgklt\thgklc\thgklt\bhklc\rvert^2&=
\ME\{\tgklt\thgklc\thgklt\thgklc\thgklt\tgklc\}
\nonumber\\
&\hspace{-6em}
=
(\ME\{\Vert \thgkl \Vert^6\} + \ME\{\teklt\ME\{\thgklc\thgklt\thgklc\thgklt\}\teklc\})/\bkl
\nonumber\\
% & = (N(N+1)(N+2)\gamma_{kl}^3 +(N+1)\gamma_{kl}^2 N(\beta_{kl}-\gamma_{kl}))/\bkl
%  \nonumber\\
&\hspace{-6em}
= (N(N+1)\gamma_{kl}^2 ((N+1)\gamkl + \beta_{kl} ))/\bkl.
\end{align}
%-------------
The sixth term in~\eqref{eq:rtimesrh} can be derived as
%-----------------
\begin{align}\label{eq:var_t_12}
   \MV\{\aik \bkl \tgklt\thgklc \bgil^T \tgklc\}
   &=\aiksq \bklsq \bilq \ME\{\lvert\tgklt\thgklc \bhil^T \tgklc\rvert^2\}\nonumber\\
   &\hspace{-10em}
   = \aiksq \bilq N(N+1)\gamkl \Big(  
   (N-1) \bkl\gamkl + \bklsq + 2 \gamklsq\Big).
\end{align}
%--------
The expectation in \eqref{eq:var_t_12} can be obtained as
%---------------
\begin{align}
    \begin{split}
        \ME\{\lvert\tgklt\thgklc \bhil^T \tgklc\rvert^2\}
        &=  
        \ME\{\tgklt\thgklc \tgklt \tgklc\thgklt \tgklc\} 
        \nonumber\\
    &\hspace{-7em} =
        \ME\{(\thgklt\!+\!\teklt)\thgklc (\thgklt\!+\!\teklt) (\thgklc\!+\!\teklc)\thgklt (\thgklc\!+\!\teklc)\}/\bklsq\nonumber\\
        &\hspace{-7em}=  
        \Big(\ME\{\thgklt\thgklc \thgklt \thgklc\thgklt \thgklc\}+\ME\{\teklt\thgklc \teklt \teklc\thgklt \teklc\}+\nonumber\\
        &\hspace{-6em}\ME\{\teklt\thgklc \thgklt \thgklc\thgklt \teklc\}+\ME\{\thgklt\thgklc \teklt \teklc\thgklt \thgklc\}\Big)/\bklsq\nonumber\\
        &\hspace{-7em}= \Big(  
        N(N+1)(N+2) \gamklcb+ N (N+1)  \gamkl (\bkl-\gamkl)^2+\nonumber\\
        &\hspace{-6em}N (N\!+\!1) \gamklsq (\bkl\!-\!\gamkl)+ N^2 (N\!+\!1) \gamklsq (\bkl\!-\!\gamkl)\Big)/\bklsq\nonumber\\
        &\hspace{-7em}= N(N+1)\gamkl \Big(  
        (N-1) \bkl\gamkl + \bklsq + 2      \gamklsq\Big)/\bklsq.
    \end{split}
\end{align}
%---------------        
The seventh term in \eqref{eq:rtimesrh} can be evaluated as
%---------------------
\begin{align}\label{eq:var_t_6}
   \MV\{\aiksq\bgkl^T\thgklc\thgklt\bgklc\} &= \aiktt \bklqsq\Big(\ME\{\lvert\bhkl^T\thgklc\thgklt\bhkl^*\rvert^2\}\!-\nonumber\\
        &\hspace{1em}\!\big(\ME\{\lvert\bhkl^T\thgklc\thgklt\bhkl^*\rvert\}\big)^2 \Big)\nonumber\\
        &\hspace{0em}
        =\aiktt \bklqsq \gamklsq N^2. 
\end{align}
%-------------------------
Here,  we have applied Lemma~\ref{lemma:expxhAxsq} from Appendix~\ref{apx:lemms} to derive $\ME\{\lvert\bhkl^T\thgklc\thgklt\bhkl^*\rvert^2\}\!=\! \ME\{\lvert\thgklt\bhklc\bhklt\thgklc\rvert^2\}$, with $\qu =\thgklc$, $\qB= \bhklc\bhklt$, $\pmb{\Sigma}= \gamkl \In$. By substitution, we get the first term in \eqref{eq:var_t_6},
%---------------------
 \begin{align}
          \ME\{\lvert\thgklt\bhkl^*\bhkl^T\thgklc\rvert^2\}&=\!  \gamklsq  \left \lvert \mathrm{tr}\!\left(\bhkl^*\bhkl^T\right) \right \rvert^2 \!\!+\! \gamklsq \mathrm{tr} \left(\bhkl^*\bhkl^T\bhkl^*\bhkl^T\right) \nonumber\\
        &\hspace{0em}
          =2\gamklsq N^2.
 \end{align}
%---------------
 The second term in \eqref{eq:var_t_6} can be derived as,
 $\ME\{\lvert\bhkl^T\thgklc\thgklt\bhkl^*\rvert\}=\ME\{\lvert\thgklt\bhkl^*\bhkl^T\thgklc\rvert\}=\ME\{\thgklt\thgklc\}=N \gamkl$. 
%------------------

The eight term in \eqref{eq:rtimesrh} is calculated as
\begin{align} \label{eq:var_t_11}
    \MV\{\bkl \tgklt\bgilc \bgilt \tgklc\} &=\bklsq \bilqsq \Big(\ME\{\lvert\tgklt\bhilc \bhilt \tgklc\rvert^2\}-\nonumber\\
        &\hspace{1em}
        \left(\ME\{\tgklt\bhilc \bhilt \tgklc\}\right)^2\Big)\nonumber\\
        &=\bklsq \bilqsq N^2.
\end{align}
To derive $\ME\{\lvert\tgklt\bhilc\bhilt\tgklc\rvert^2\}$, we apply Lemma~\ref{lemma:expxhAxsq} from Appendix~\ref{apx:lemms} with $\qu =\tgklc$, $\qB= \bhilc\bhilt$, $\pmb{\Sigma}= \In$. By substitution, we get the first term in \eqref{eq:var_t_11}, $\ME\{\lvert\tgklt\bhilc \bhilt \tgklc\rvert^2\}=\left \lvert \mathrm{tr}\!\left(\bhilc\bhilt\right) \right \rvert^2\!\!+\!  \mathrm{tr}\!\left(\bhilc\bhilt\bhilc\bhilt\right)=2N^2$. The second term in \eqref{eq:var_t_11} can be derived as $\ME\{\tgklt\bhilc \bhilt \tgklc\}= \ME\{\tgklt \tgklc\}= N$. 

The last term in \eqref{eq:rtimesrh} can be computed as:
%------------------
\begin{align}\label{eq:var_t_16}
    \MV\{\aiksq \bkl \tgklt\thgklc \thgklt\tgklc\}=&\aiktt \bkl^2 \Big(\ME\{\lvert\tgklt\thgklc \thgklt\tgklc\rvert^2\}\nonumber\\
    &-\big(\ME\{\tgklt\thgklc \thgklt\tgklc\}\big)^2\Big).        
\end{align}
%------------------
The first expectation term in \eqref{eq:var_t_16} can be derived as:
%------------------
\begin{align}
\ME\{\lvert\tgklt\thgklc \thgklt\tgklc\rvert^2\}
        &=\ME\{(\thgklt\!+\!\teklt)\thgklc \thgklt(\thgklc\!+\!\teklc) (\thgklt\!+\!\teklt)\nonumber\\
        &\times\thgklc \thgklt(\thgklc\!+\!\teklc)\}/\bklsq\nonumber\\
        &\hspace{-8em} = \Big( \ME\{\lVert\thgkl\rVert^8\}
        +
        \ME\{ \thgklt\thgklc\thgklt\thgklc \teklt \thgklc\thgklt \teklc \}
        \!\nonumber\\
        &\hspace{-7.5em}+\!\ME\{\thgklt\thgklc\thgklt\teklc\thgklt\thgklc\thgklt\teklc \}\!+\!\ME\{\thgklt\thgklc\thgklt\teklc\teklt\thgklc\thgklt\thgklc \}\!
        \!\nonumber\\
        &\hspace{-7.5em}+\!\ME\{\teklt\thgklc\thgklt\thgklc\thgklt\thgklc\thgklt\teklc\}\!+\!\ME\{\teklt\thgklc\thgklt\thgklc\teklt\thgklc\thgklt\thgklc\}
        \!\!\nonumber\\
        &\hspace{-7.5em}+\!\ME\{\teklt\thgklc\thgklt\teklc\thgklt\thgklc\thgklt\thgklc\}\!+\!\ME\{\lvert\teklt\thgklc\thgklt\teklc\rvert^2\}\!\Big)/\bklsq.\nonumber\\
        &\hspace{-8em}= N(N\!\!+\!\! 1)(N\!\!+\!\!2)(N\!\!+\!\!3) \gamkltt\!+\!4 N(N\!\!+\!\!1)(N\!\!+\!\!2)(\bkl\!\!-\!\!\gamkl) \gamklcb \!\nonumber\\
        &\hspace{-7em}+\! (\bkl\!\!-\!\!\gamkl)^2\gamklsq N(2N\!\!+\!\!1).
\end{align}
%------------------

The term $\ME\{\lvert\teklt\thgklc\thgklt\teklc\rvert^2\}$ has been derived using Lemma \ref{lemma:expxhAxsq} with $\qu= \teklc$, $\qB= \thgklc\thgklt$ and $\pmb{\Sigma}= \ME\{\teklt\teklc\}= (\bkl-\gamkl)\In$. We have considered the independence of channel estimate ($\thgkl$) and estimation error ($\tekl$), while using this expression:
%------------------
\vspace{-0.3em}
\begin{align}
    \begin{split}
        \ME\{\lvert\qu^H\qB \qu\rvert^2\big \vert \qB\}&\!=\!(\bkl\!-\!\gamkl)^2\left(\lvert \mathrm{tr}(\qB) \rvert^2 \!+\!  \mathrm{tr}(\qB \qB^H)\right).
    \end{split}
\end{align}
%------------------
Now, the expectation of this conditional expectation will provide the desired result as:
%------------------
\vspace{-0.3em}
\begin{align}
    \begin{split}
        \ME\{\lvert\qu^H\qB \qu\rvert^2\big \} & = \ME\{\ME\{\lvert\qu^H\qB \qu\rvert^2\big \vert \qB\}\}=(\bkl-\gamkl)^2\nonumber\\
        &\quad\times\Big(\lvert \mathrm{tr}(\ME\{\thgklc\thgklt\}) \rvert^2 +  \mathrm{tr}(\ME\{\thgklc\thgklt\thgklc\thgklt\})\Big)\\
        &=(\bkl-\gamkl)^2\big(N^2 \gamkl^2 +  N(N+1) \gamklsq\big)\nonumber\\
        &=(\bkl-\gamkl)^2\gamklsq N(2N+1).\\
    \end{split}
\end{align}
%------------------
The second expectation term in \eqref{eq:var_t_16} has been already derived in \eqref{eq:var_t_16_2}. By substitution, we get
%------------------
\vspace{-0.3em}
\begin{align}
    \MV\{\aiksq \bkl \tgklt\thgklc \thgklt\tgklc\}=&\aiktt  \Big( N(N\!+\! 1)(N\!+\!2)(N\!+\!3) \gamkltt \nonumber \\
    &\hspace{-6.5em}+\!4 N(N\!+\!1)(N\!+\!2)(\bkl\!-\!\gamkl) \gamklcb \!+\! (\bkl\!-\!\gamkl)^2\nonumber\\
    &\hspace{-6.5em} \times \gamklsq N(2N\!+\!1)\!-\!N^2\gamklsq (\bkl\!+\!N\gamkl)^2\Big).
\end{align}
%-----------
\subsubsection{Compute non-coherent term $\MV \big\{ \gklt \hgil^* \hgiltp \gklcp\big\}$}
Now, we consider the non-coherent part of the variance of the harvested power in~\eqref{eq:harvested_power_var_final}. To this end, we have
%-----------
\vspace{-0.3em}
\begin{align}\label{eq:rtimesrh'}
    &\MV\{R_{k,i,l}R^H_{k,i,l'}\}\! =\MV\{\aik\bgklt\thgklc\bgiltp\bgklcp\}\nonumber\\
    &+\!\MV\{\sqrt{\bkl} \tgklt\bgilc\bgiltp\bgklcp\}+\MV\{\aik \sqrt{\bkl} \tgklt\thgklc\bgiltp\bgklcp\}\nonumber\\
% \end{align*}
% \begin{align}\label{eq:rtimesrh'}
    &\hspace{0em}+\!\MV\{\aik\sqrt{\bkl} \tgklt\bgilc\thgkltp\bgklcp\}\!+\!\MV\{\aiksq \sqrt{\bkl} \tgklt\thgklc\thgkltp\bgklcp\}\!\!\nonumber\\
    &\hspace{0em}+\!\MV\{\aik\sqrt{\bklp}\bgklt\thgklc \bgiltp\tgklcp\}\!+\!\MV\{\sqrt{\bkl\bklp} \tgklt\bgilc \bgiltp\tgklcp\}\nonumber\\
    &\hspace{0em}+\!\MV\{\sqrt{\bklp}\bgklt\bgilc \bgiltp\tgklcp\}\!+\!\MV\{\aik \sqrt{\bkl\bklp} \tgklt\thgklc \bgiltp\tgklcp\} \nonumber\\
    &\hspace{0em}+\!\MV\{\aik \sqrt{\bklp}\bgklt\bgilc \thgkltp\tgklcp\}\!\!+\!\!\MV\{\aiksq\sqrt{\bklp}\bgklt\thgklc  \thgkltp\tgklcp\!\}\nonumber\\
    &+\!\MV\{\aik\bgklt\bgilc\thgkltp\bgklcp\}+\!\MV\{\aik \sqrt{\bkl\bklp} \tgklt\bgilc \thgkltp\tgklcp\}\nonumber\\
    &+\!\MV\{\aiksq\bgklt\thgklc\thgkltp\bgklcp\}+\MV\{\aiksq \sqrt{\bkl\bklp} \tgklt\thgklc  \thgkltp\tgklcp\},
\end{align} 
%-----------
while $\MV\{\bgklt\bgilc\bgiltp\bgklcp\}=0$. Now, we shall evaluate the individual terms in \eqref{eq:rtimesrh'} sequentially. Following the approaches presented in Section \ref{subsec:coherent}, we can derive the non-coherent part of the variance expression as~\eqref{eq:vareq} at the top of the next page.
%-----------
%\vspace{-1em}
\begin{figure*}
\begin{align}\label{eq:vareq}
    \MV \big\{ \gklt \hgil^* \hgiltp \gklcp\big\}&=N^3 \aiksq \bklq \bkilpsq \oiklsqp \gamkl+N^3 \oiklsqp \bkl \bilq \bkilpsq+N^2 \aiksq \bkilpsq \oiklsqp \gamkl (\bkl+N\gamkl)\nonumber\\
    &+ N^3 \aiksq  \oiklsq \bkilsq \bklqp  \gamklp+N^2\aiktt \bklq \bklqp \gamkl \gamklp +N^2\aiksq \bkl \bilq \bklqp\gamklp +N^2 \aiksq \bklp \bklq\bilqp  \gamkl\nonumber\\
    &+ N^2 \aiktt  \bklqp \gamklp \gamkl (\bkl+N\gamkl)+N^3 \oiklsq \bklp \bilqp \bkilsq +N^2\aiksq \bklp \bilqp \gamkl(\beta_{kl}+N\gamkl)\nonumber\\
    &+N^2\bkl \bklp \bilq \bilqp +N^2 \oiklsq \aiksq \bkilsq \gamklp (\bklp+N \gamklp)+\aiktt \bklq N^2 \gamkl \gamklp (\bklp+N\gamklp)\nonumber\\
    &+\aiksq \bkl  \bilq N^2 \gamklp (\bklp\!+\!N\gamklp)\!+\!\aiktt N^2\gamkl\gamklp\Big(\bkl\bklp\!+\!N(\gamkl\bklp\!+\!\bkl \gamklp)\Big).
\end{align}    
\hrulefill
\vspace{-1.5em}
\end{figure*}
%-----------
Here, the term $\MV\{\aiksq \sqrt{\bkl} \tgklt\thgklc\thgkltp\bgklcp\}$ in \eqref{eq:rtimesrh'} is derived using \eqref{eq:var_t_16_2} as
%-----------
\begin{align}
    \MV\{\aiksq \sqrt{\bkl} \tgklt\thgklc\thgkltp\bgklcp\} &\!=\! \aiktt \bkl \bklqp \ME\{\lvert\tgklt\thgklc\thgkltp\bhklcp\rvert^2\}\nonumber\\
    &\hspace{-3em}=N  \aiktt \bkl \bklqp \gamklp \ME\{\tgklt\thgklc\thgklt\tgklc\}\nonumber\\
    &\hspace{-3em}=N^2 \aiktt \bklqp \gamklp \gamkl (\bkl\!+\!N\gamkl).\!
\end{align}
%-----------
A similar technique has been used for the derivation of $\MV\{\aik \sqrt{\bkl}\sqrt{\bklp} \tgklt\thgklc \bgiltp\tgklcp\}$ using \eqref{eq:var_t_16_2}. Note that, the term $\MV\{\aiksq \sqrt{\bkl}\sqrt{\bklp} \tgklt\thgklc \thgkltp\tgklcp\}$ can be derived as 
%-----------
\begin{align}
        \MV\{\aiksq \sqrt{\bkl}\sqrt{\bklp} \tgklt\thgklc \thgkltp\tgklcp\}&=\aiktt \bkl\bklp \nonumber\\
        &\hspace{-12em}\times \Big(\ME\{\lvert\tgklt\thgklc \thgkltp\tgklcp\rvert^2\}-\big(\ME\{\tgklt\thgklc \thgkltp\tgklcp\}\big)^2\Big)\nonumber\\
        &\hspace{-13em}=\!\aiktt N^2\gamkl\gamklp\Big(\bkl\bklp\!+\!N(\gamkl\bklp\!+\!\bkl \gamklp)\Big),\!\!
\end{align}
%-----------
where $\ME\{\lvert\tgklt\thgklc \thgkltp\tgklcp\rvert^2\}\!=\!\ME\{\tgklt\thgklc \ME\{\thgkltp\tgklcp\tgkltp\thgklcp\!\}\thgklt\tgklc\}\!\!=\! N^2\gamkl\gamklp (\bkl+N\gamkl)(\bklp+N\gamklp)/(\bkl \bklp)$, and $\ME\{\tgklt\thgklc \thgkltp\tgklcp\}=\ME\{\tgklt\thgklc\}\ME\{\thgkltp\tgklcp\}= N^2 \gamkl \gamklp/(\sqrt{\bkl \bklp})$.

\bibliographystyle{IEEEtran}
\bibliography{main}
%==============================================================================
\vspace{-3em}
\begin{IEEEbiography}[{\includegraphics[width=1in,height=1.25in,clip,keepaspectratio]
{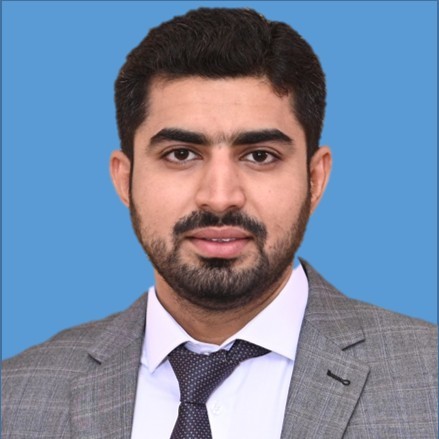}}]
{Muhammad Zeeshan Mumtaz} (Graduate Student Member, IEEE) received the B.E. degree in Avionics engineering from National University of Sciences \& Technology (NUST), Pakistan, in 2014, and the M.S. degree in Avionics Engineering from Air University, Pakistan, in 2021. He is currently pursuing the Ph.D. degree in Electrical \& Electronics Engineering at Queen's University Belfast, U.K. From 2021 to 2023, he was a Lecturer of Data Communications \& Networking at College of Aeronautical Engineering, NUST Pakistan. His research interests include cell-free massive MIMO systems, NOMA communication systems, MIMO radars, autonomous modulation classification and the application of deep learning techniques for contemporary communication challenges.
\end{IEEEbiography}

\begin{IEEEbiography}[{\includegraphics[width=1in,height=1.25in,clip,keepaspectratio]
{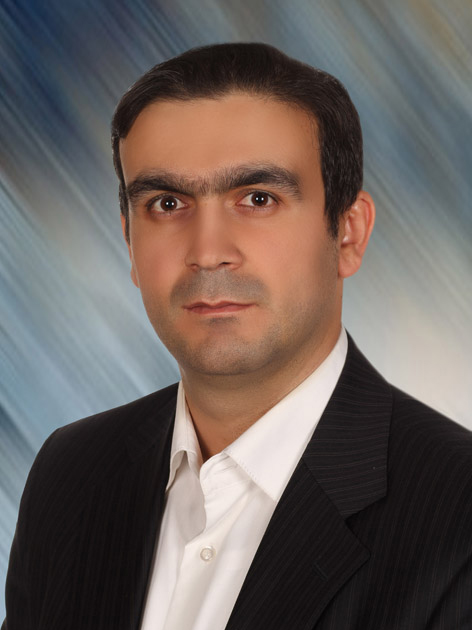}}]
{Mohammadali Mohammadi} (Senior Member, IEEE) is currently a Lecturer at the Centre for Wireless Innovation (CWI), Queen’s University Belfast, U.K. He previously held the position of Research Fellow at CWI from 2021 to 2024. His research interests include signal processing for wireless communications, cell-free massive MIMO, integrated sensing and communications, and reconfigurable intelligent surfaces. He has published more than 80 research papers in accredited international peer reviewed journals and conferences in the area of wireless communication and has co-authored two invited book chapters. He serves as an Associate Editor for IEEE Communications Letters and IEEE Open Journal of the Communications Society. He was a recipient of the Exemplary Reviewer Award for IEEE Transactions on Communications in 2020 and 2022, and IEEE Communications Letters in 2023. He has been a member of Technical Program Committees for many IEEE conferences, such as ICC, GLOBECOM, and VTC.
\end{IEEEbiography}

\begin{IEEEbiography}[{\includegraphics[width=1in,height=1.25in,clip,keepaspectratio]{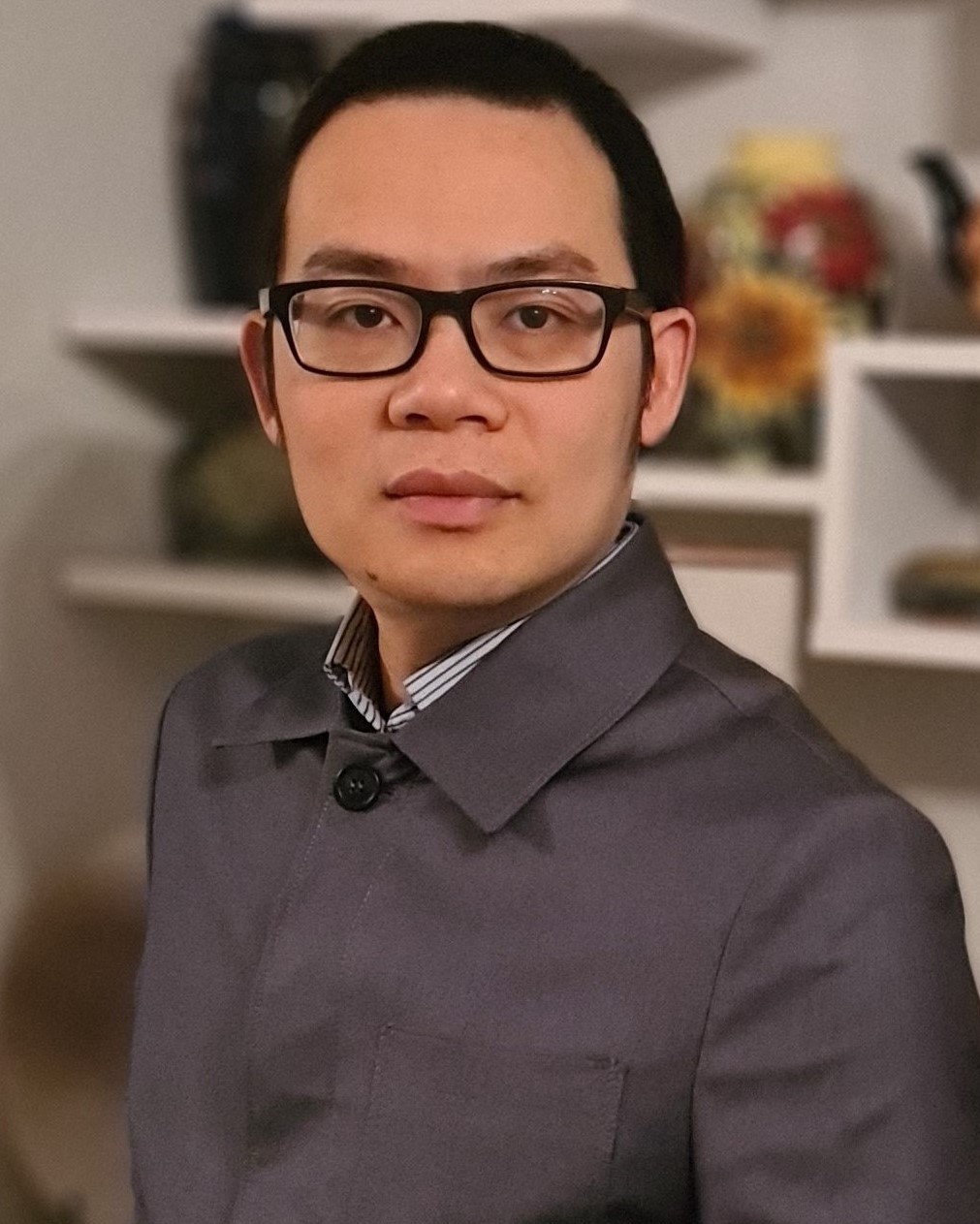}}]
{Hien Quoc Ngo} (Fellow, IEEE)  is currently a Reader with Queen's University Belfast, U.K. His main research interests include massive MIMO systems, cell-free massive MIMO, reconfigurable intelligent surfaces, physical layer security, and cooperative communications. He has co-authored many research papers in wireless communications and co-authored the Cambridge University Press textbook \emph{Fundamentals of Massive MIMO} (2016).

He received the IEEE ComSoc Stephen O. Rice Prize in 2015, the IEEE ComSoc Leonard G. Abraham Prize in 2017, the Best Ph.D. Award from EURASIP in 2018, and the IEEE CTTC Early Achievement Award in 2023. He also received the IEEE Sweden VT-COM-IT Joint Chapter Best Student Journal Paper Award in 2015. He was awarded the UKRI Future Leaders Fellowship in 2019. He serves as the Editor for the IEEE Transactions on Wireless Communications, IEEE Transactions on Communications, the Digital Signal Processing, and the Physical Communication (Elsevier). He was an Editor of the IEEE Wireless Communications Letters, a Guest Editor of IET Communications, and a Guest Editor of IEEE ACCESS in 2017.
\end{IEEEbiography}

\vspace{-2em}
\begin{IEEEbiography}[{\includegraphics[width=1in,height=1.35in,clip,keepaspectratio]{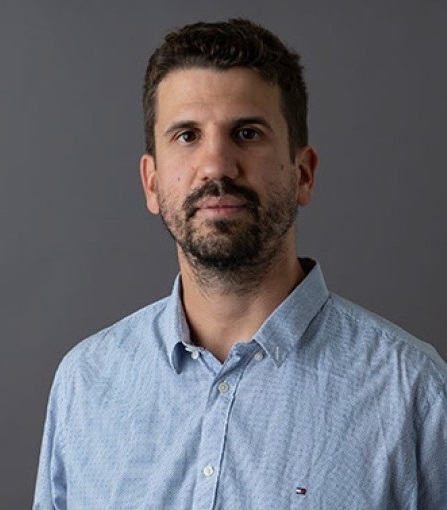}}]
{Michail Matthaiou}(Fellow, IEEE) obtained his Ph.D. degree from the University of Edinburgh, U.K. in 2008. 
He is currently a Professor of Communications Engineering and Signal Processing and Deputy Director of the Centre for Wireless Innovation (CWI) at Queen’s University Belfast, U.K. He is also an Eminent Scholar at the Kyung Hee University, Republic of Korea. He has held research/faculty positions at Munich University of Technology (TUM), Germany and Chalmers University of Technology, Sweden. His research interests span signal processing for wireless communications, beyond massive MIMO, reflecting intelligent surfaces, mm-wave/THz systems and AI-empowered communications.

Dr. Matthaiou and his coauthors received the IEEE Communications Society (ComSoc) Leonard G. Abraham Prize in 2017. He currently holds the ERC Consolidator Grant BEATRICE (2021-2026) focused on the interface between information and electromagnetic theories. To date, he has received the prestigious 2023 Argo Network Innovation Award, the 2019 EURASIP Early Career Award and the 2018/2019 Royal Academy of Engineering/The Leverhulme Trust Senior Research Fellowship. His team was also the Grand Winner of the 2019 Mobile World Congress Challenge. He was the recipient of the 2011 IEEE ComSoc Best Young Researcher Award for the Europe, Middle East and Africa Region and a co-recipient of the 2006 IEEE Communications Chapter Project Prize for the best M.Sc. dissertation in the area of communications. He has co-authored papers that received best paper awards at the 2018 IEEE WCSP and 2014 IEEE ICC. In 2014, he received the Research Fund for International Young Scientists from the National Natural Science Foundation of China. He is currently the Editor-in-Chief of Elsevier Physical Communication, a Senior Editor for \textsc{IEEE Wireless Communications Letters} and \textsc{IEEE Signal Processing Magazine}, an Area Editor for \textsc{IEEE Transactions on Communications} and Editor-in-Large for \textsc{IEEE Open Journal of the Communications Society}. 
\end{IEEEbiography}

\end{document}